\numberwithin{equation}{section}
\newtheorem{theorem}{Theorem}[section]
\newtheorem{lemma}[theorem]{Lemma}
\newtheorem{proposition}[theorem]{Proposition}
\theoremstyle{definition}
\newtheorem{definition}[theorem]{Definition}
\newtheorem{remark}[theorem]{Remark}
\newcommand\numberthis{\addtocounter{equation}{1}\tag{\theequation}}
\tikzset{snake it/.style={decorate, decoration=snake}}
\definecolor{lightgreen}{RGB}{163, 236, 163}
\definecolor{lightred}{RGB}{255, 153, 153}
\definecolor{darkred}{RGB}{178, 0, 0}
\definecolor{us}{RGB}{0, 102, 255}
\definecolor{lightus}{RGB}{128, 178, 255}
\tikzset{>=stealth', pil/.style={->,thick,shorten <=2pt,shorten >=2pt,}}
\newcommand{\cB}{\mathcal{B}}
\newcommand{\cC}{\mathcal{C}}
\newcommand{\cD}{\mathcal{D}}
\newcommand{\cE}{\mathcal{E}}
\newcommand{\cH}{\mathcal{H}}
\newcommand{\cP}{\mathcal{P}}
\newcommand{\cX}{\mathcal{X}}
\newcommand{\cZ}{\mathcal{Z}}
\newcommand{\tD}{\widetilde{D}}
\newcommand{\tS}{\widetilde{S}}
\newcommand{\tI}{\widetilde{I}}
\newcommand{\hX}{\hat{X}}
\newcommand{\bX}{\bar{X}}
\DeclareMathOperator{\tr}{Tr}
\DeclareMathOperator{\id}{id}
\DeclareMathOperator{\supp}{supp}
\DeclareMathOperator{\rk}{rk}
\newcommand{\eps}{\varepsilon}
\newcommand{\sumi}{\sum\nolimits}
\newcommand{\onehalf}{\frac{1}{2}}
\newcommand{\one}{I}
\newcommand{\ox}{\otimes}
\newcommand{\Qncsm}{q_{\text{\normalfont csm}}}
\newcommand{\Encsm}{e_{\text{\normalfont csm}}}
\newcommand{\Qnqss}{q_{\text{\normalfont qss}}}
\newcommand{\Enqss}{e_{\text{\normalfont qss}}}
\newcommand{\qfw}{q_{\rightarrow}}
\newcommand{\qtot}{q_{\leftrightarrow}}
\newcommand{\vphi}{\varphi}
\newcommand{\psuc}{p_{\text{\normalfont succ}}}
\newcommand{\bomega}{\bar{\omega}}
\newcommand{\bsigma}{\bar{\sigma}}
\newcommand{\qqquad}{\hspace{2cm}}
\newcommand{\etal}{\textit{et al}.~}
\begin{document}
\title{\textbf{Strong converse theorems using Rényi entropies}}

\author[a]{Felix Leditzky}
\author[b]{Mark M.~Wilde}
\author[a]{Nilanjana Datta}
\affil[a]{\small Statistical Laboratory, Centre for Mathematical Sciences, University of Cambridge, Cambridge CB3 0WB, UK}
\affil[b]{\small Hearne Institute for Theoretical Physics, Department of Physics and Astronomy, Center for Computation and Technology, Louisiana State University, Baton Rouge, Louisiana 70803, USA}
\maketitle

\begin{abstract}
We use a Rényi entropy method to prove strong converse theorems for certain infor\-mation-theoretic tasks which involve local operations and quantum or classical communication between two parties. 
These include state redistribution, coherent state merging, quantum state splitting, measurement compression with quantum side information, randomness extraction against quantum side information, and data compression with quantum side information. 
The method we employ in proving these results extends ideas developed by Sharma \cite{Sha14}, which he used to give a new proof of the strong converse theorem for state merging.
For state redistribution, we prove the strong converse property for the boundary of the entire achievable rate region in the $(e,q)$-plane, where $e$ and $q$ denote the entanglement cost and quantum communication cost, respectively.
In the case of measurement compression with quantum side information, we prove a strong converse theorem for the classical communication cost, which is a new result extending the previously known weak converse.
For the remaining tasks, we provide new proofs for strong converse theorems previously established using smooth entropies. 
For each task, we obtain the strong converse theorem from explicit bounds on the figure of merit of the task in terms of a Rényi generalization of the optimal rate. Hence, we identify candidates for the strong converse exponents for each task discussed in this paper.
To prove our results, we establish various new entropic inequalities, which might be of independent interest. 
These involve conditional entropies and mutual information derived from the sandwiched R\'enyi divergence. 
In particular, we obtain novel bounds relating these quantities, as well as the R\'enyi conditional mutual information, to the fidelity of two quantum states. 
\end{abstract}

\section{Introduction}
\subsection{Strong converse theorems and the Rényi entropy method}\label{sec:strong-converse-property}
One of the primary goals of quantum information theory is to find optimal rates of information-theoretic tasks, such as data compression, information transmission through a noisy quantum channel, and entanglement manipulation. 
Depending on the specific task in question, the optimal rate is either an optimal {\em{cost}}, quantifying the minimum rate at which an available resource is consumed in accomplishing the task, or an optimal {\em{gain}}, quantifying the maximum rate at which a desired target resource is produced in the process. 
For any rate above (below) the optimal cost (gain) there is a corresponding protocol for accomplishing the task successfully. 
That is, the error, $\eps_n$, incurred in the protocol for $n$ uses of the underlying resource vanishes in the asymptotic limit ($n \to \infty$).\footnote{For example, in the case of information transmission through a memoryless noisy channel, $n$ denotes the number of independent uses of the channel.}
Such rates are said to be {\em{achievable}}. 
In contrast, for protocols with non-achievable rates, the error does not vanish asymptotically.
The optimal rate of an information-theoretic task is said to satisfy the {\em{strong converse property}} if for any sequence of protocols with a non-achievable rate the error $\eps_n$ is not only bounded away from zero but necessarily converges to one in the asymptotic limit. 
Moreover, this convergence is exponential\footnote{The requirement of exponential convergence is sometimes relaxed when defining the strong converse property.} in $n$, that is,
\begin{align}\label{eq:strong-converse}
\eps_n \geq 1 - \exp(-Kn),
\end{align} 
for some positive constant $K$. 
A {\em{strong converse theorem}} establishes the strong converse property of the optimal rate of an information-theoretic task, and hence serves to identify the latter as a sharp rate threshold for the task.

For information transmission through classical noisy channels, the strong converse theorem was first proved by Wolfowitz \cite{Wol61}. 
An alternate proof of this theorem was later given by Arimoto \cite{Ari73} by employing the properties of a quantity which is sometimes referred to as the Gallager function \cite{PV10}.
Ogawa and Nagaoka \cite{ON99} extended this method to the quantum setting to prove the strong converse property of the capacity of a classical-quantum channel, which was also proved concurrently  by Winter \cite{Win99} using the method of types. 
Nagaoka \cite{Nag01} further developed Arimoto's idea to give a new proof of this result.
To this end, he employed a Rényi divergence and its monotonicity under completely positive, trace-preserving maps (also called the {\em{data processing inequality}}), establishing what we refer to as the `Rényi entropy method' in this paper. 
Later, Polyanskiy and Verdú \cite{PV10} realized that it is possible to establish converse bounds by employing any divergence satisfying the data processing inequality.
In \cite{FT14} the Rényi entropy method was used to obtain strong converse theorems in network information theory.

Different quantum generalizations of the $\alpha$-Rényi divergence have been introduced \cite{Pet86,MDSFT13,WWY14} and their monotonicity under quantum operations for certain ranges of the Rényi parameter $\alpha$ has been established \cite{Pet86,FL13,Bei13}. 
The Rényi entropy method has since been successfully employed to prove strong converse theorems for classical channel coding with entangled inputs for a large class of quantum channels with additive Holevo capacity \cite{KW09}. 
More recently, strong converse theorems were proved for classical information transmission through entanglement-breaking and Hadamard channels \cite{WWY14} and quantum information transmission through generalized dephasing channels \cite{TWW14}. 
For the task of quantum hypothesis testing, Mosonyi and Ogawa \cite{MO13} established the `sandwiched Rényi divergence' of order $\alpha$ \cite{MDSFT13,WWY14} as a strong converse exponent. 
This was generalized by Cooney \etal \cite{CMW14}, who established the sandwiched Rényi divergence as a strong converse exponent in adaptive channel discrimination for certain channels.
Hayashi and Tomamichel \cite{HT14} showed that, in binary quantum hypothesis testing, for a composite alternative hypothesis the strong converse exponents are given by a Rényi mutual information and Rényi conditional entropy defined in terms of the sandwiched Rényi divergence. 

Application of the Rényi entropy method to prove the strong converse property for an information-theoretic task involving local operations and classical communication (LOCC) between two parties (say, Alice and Bob) was considered by Hayashi \etal \cite{HKMMW03} in the context of entanglement concentration (see also \cite{Hay06a}). 
More recently, Sharma \cite{Sha14} used the Rényi entropy method to establish the strong converse theorem for the task of {\em{state merging}}: 
Alice and Bob initially share a bipartite state and the aim is for Alice to transfer her part of the state to Bob by sending information to him through a noiseless classical channel. 
Both Alice and Bob are also allowed to  make use of prior shared entanglement between them, to assist them in achieving this task. 
In this case monotonicity of a relevant Rényi divergence under LOCC plays a pivotal role in establishing the strong converse for the optimal entanglement cost.\footnote{This result also follows from prior work by various authors \cite{Win14,Ber09,Tom12}.}

In this paper, we further extend the Rényi entropy method to establish strong converse theorems for a range of information-theoretic tasks which involve local operations and quantum or classical communication (LOQC or LOCC) between two parties. 
These tasks (which are elucidated at the beginning of each section respectively) include state redistribution (with and without feedback), coherent state merging, quantum state splitting, measurement compression with quantum side information, randomness extraction against quantum side information, and data compression with quantum side information. 
Strong converse theorems for some of these tasks were previously obtained in the so-called smooth entropy framework introduced by Renner \cite{Ren05}. 
This was done for the quantum communication cost of state redistribution by Berta \textit{et al.}~\cite{BCT14v2}, and for randomness extraction against quantum side information and data compression with quantum side information by Tomamichel \cite{Tom12}.

Two inequalities which we use in proving the strong converse theorems are due to van Dam and Hayden \cite{vDH02}: the first bounds the fidelity between two states in terms of the difference of their Rényi entropies, and the second is a certain subadditivity property of the $\alpha$-Rényi entropy of a bipartite state. In addition, we establish various new inequalities involving conditional entropies and mutual information derived from the sandwiched R\'enyi divergence of order $\alpha$ \cite{MDSFT13,WWY14}. These inequalities play a crucial role in the proofs of our strong converse theorems and might also be of independent interest.   

Let us use the example of quantum data compression to explain the key step of the Rényi entropy method which establishes the strong converse property (cf.~\cite{Hay02}).
Let $\rho$ denote the source state of a discrete, memoryless source.
In this case, the optimal rate $r^*$ is the minimum rate of data compression and is given \cite{Sch95} by the von Neumann entropy of the source: $r^*= S(\rho)\coloneqq -\tr(\rho\log\rho)$.\footnote{In this paper all logarithms and exponentials are taken to base $2$.}
Consider a data compression protocol of rate $r$. The key step of the Rényi entropy method applied to quantum data compression is to prove that for values of a real parameter $\alpha>1$, there exists a positive constant $\kappa(\alpha)$ such that the error, $\eps_n$, incurred for $n$ independent uses of the source, satisfies the following bound \cite{Sha14}:
\begin{align}\label{eq:renyi-approach-bound}
\eps_n \geq 1 - \exp\left\lbrace -n \kappa(\alpha) \left[S_{\alpha}(\rho)-r\right] \right\rbrace.
\end{align}
Here, $\kappa(\alpha)=(\alpha-1)/(2\alpha)$, and $S_{\alpha}(\rho)$ is the Rényi entropy of the source state $\rho$:
\begin{align*}
S_{\alpha}(\rho) \coloneqq \frac{1}{1-\alpha} \log \tr \rho^\alpha,
\end{align*}
satisfying (cf.~\Cref{prop:renyi-properties})
\begin{align}\label{eq:renyi-convergence}
\begin{aligned}
S_{\alpha}(\rho) \leq S_{\alpha'}(\rho) \leq S(\rho)\quad\text{for }\alpha\geq\alpha'> 1 \qquad\text{and}\qquad \lim\nolimits_{\alpha\rightarrow 1} S_\alpha(\rho) = S(\rho)=r^*.
\end{aligned}
\end{align}
The inequality \eqref{eq:renyi-approach-bound}, along with the statements in \eqref{eq:renyi-convergence}, readily imply the statement \eqref{eq:strong-converse} of the strong converse theorem. 
To see this, suppose $r < r^*=S(\rho)$. 
Then there is a $\delta>0$ such that $r+\delta < r^*$. 
Moreover, \eqref{eq:renyi-convergence} implies that for every $\delta>0$ there is an $\alpha_0>1$ such that $S_{\alpha_0}(\rho) > r^* - \delta$. 
Combining the two inequalities, we get $r<r^*-\delta < S_{\alpha_0}(\rho)$, and inserting this in \eqref{eq:renyi-approach-bound} yields the strong converse condition as stated in \eqref{eq:strong-converse} with the choice $K \coloneqq \kappa(\alpha_0)[S_{\alpha_0}(\rho)-r] > 0$.

\subsection{Main results and structure of the paper}
In the present paper, we prove strong converse theorems for the following tasks using the Rényi entropy method:
\begin{enumerate}[(a)]
\item\label{item:state-re} state redistribution (with and without quantum feedback), coherent state merging and quantum state splitting;
\item measurement compression with quantum side information (QSI);
\item\label{item:randex} randomness extraction against QSI;
\item\label{item:dc} data compression with QSI.
\end{enumerate}
Previously, Tomamichel \cite{Tom12} proved strong converse theorems for randomness extraction against QSI and data compression with QSI using the {\em{smooth entropy framework}}. 
Recently, Berta \etal \cite{BCT14v2} proved a strong converse theorem for the quantum communication cost in state redistribution (which holds even in the presence of quantum feedback). 
However, their strong converse theorem did not establish the strong converse property for the boundary of the entire achievable rate region in the $(e,q)$-plane, where $e$ and $q$ denote the entanglement cost and quantum communication cost, respectively (see \Cref{fig:achievable-rate-region}). 
In this paper, we fill this gap with \Cref{thm:state-re-strong-converse} (for the case of no feedback) and \Cref{thm:state-re-fb-strong-converse} (for the case of feedback). 
The study of the strong converse for state redistribution with feedback was inspired by \cite{BCT14v2}, where this protocol was first defined. 
Note that, following discussions with the authors of the present paper, Berta \etal have now also obtained a strong converse theorem for the entire achievable region \cite{BCT14} for state redistribution with and without feedback.
In the case of measurement compression with quantum side information, our strong converse theorem for the classical communication cost is a new result, which strengthens the previously established weak converse of \cite{WHBH12}.
We also provide alternative proofs for the strong converse theorems of the protocols in \cref{item:randex,item:dc} in the above list using the Rényi entropy approach. 

\begin{figure}[t]
\centering
\begin{tikzpicture}[scale=1]
    \tikzstyle{every node}=[font=\large];
	\begin{scope}
	\clip (-1.5,5.875) -- (6,5.875)-- (6,1.5) -- (2,1.5) -- (-1.5,5.875);
	\fill[color=lightgreen] (-1.5,6) rectangle (6,1.5);
	\end{scope}
	
	\begin{scope}
	\clip (-1.5,0) -- (-1.5,5.875) -- (2,1.5) -- (6,1.5) -- (6,0) -- (-1.5,0);
	\fill[pattern = north east lines wide, pattern color=darkred] (-1.5,0) rectangle (6,5.875);
	\end{scope}
		
	\draw[ultra thick,domain=-1.5:2] plot (\x,4-1.25*\x);
	\draw[ultra thick,domain=-2:-1.5,dashed] plot (\x,4-1.25*\x);
	\draw[ultra thick,domain=6:7,dashed] plot (\x,1.5);
	\draw[ultra thick,domain=2:6] plot (\x,1.5);
	\draw[thick, dashed,domain=-2:2] plot(\x,1.5) ;
	
	\node at (-3.5,6) {$q+e\geq S(A|B)_\rho$};
	\node[align=center] at (3.5,3.5) {achievable\\ region};
	\node at (-3.5,1.5) {$\onehalf I(A;R|B)_\rho$};
	\node[align = center] at (8.0,0.75) {strong converse\\ region};
	\node[align = center] at (-3.5,3.5) {strong converse\\ region};

	\draw[thick,->] (-2,0)--(7,0) node[below=1ex,pos=0.99]{$e$};
	\draw[thick,->] (0,-0.2)--(0,6.5) node[left=1ex,pos=0.99]{$q$};
	\draw[thick,dashed] (2,1.5)--(2,-0.5) node[below] {$\onehalf I(A;C)_\rho-\onehalf I(A;B)_\rho$};
\end{tikzpicture}
\caption{Plot of the plane of rate pairs $(e,q)$ for state redistribution, where $e$ is the entanglement cost \eqref{eq:state-re-ent-cost} and $q$ is the quantum communication cost \eqref{eq:state-re-qu-comm-cost}. 
The shaded area is the region of achievable rate pairs defined by $\lbrace (e,q)\colon q + e \geq S(A|B)_\rho\text{ and } q \geq \onehalf I(A;R|B)_\rho\rbrace$. 
The hatched area is the strong converse region, as proved in the present paper (\Cref{thm:state-re-strong-converse}) and by Berta \etal \cite{BCT14}. 
For a definition of the state redistribution protocol, see \Cref{sec:state-re-protocol}.}
\label{fig:achievable-rate-region}
\end{figure}

Our strong converse theorems are given in terms of various R\'enyi generalizations (see \Cref{subsec:renyi-entropies}) of the optimal rates of the protocols, whose properties we derive in \Cref{sec:renyi-properties}. 
In particular, we establish various new inequalities involving conditional entropies and mutual information derived from the sandwiched R\'enyi divergence. 
These include novel bounds relating these entropic quantities, as well as the R\'enyi conditional mutual information (defined in \ref{eq:renyi-cmi}), to the fidelity of two quantum states (see \Cref{prop:renyi-fidelity}). These bounds play a key role in our proofs.

This paper is structured as follows. 
In \Cref{sec:preliminaries} we set the notation, introduce definitions of the required R\'enyi entropic quantities, and discuss their properties. 
In the following sections we subsequently prove strong converse theorems for the protocols of state redistribution (\Cref{sec:state-re}), measurement compression with quantum side information (\Cref{sec:measurement-comp}), randomness extraction against quantum side information (\Cref{sec:randex}), and data compression with quantum side information (\Cref{sec:dc}). 
In the case of state redistribution, we elucidate the fact that it serves as a generalization of coherent state merging (\Cref{sec:fqsw-protocol}) and quantum state splitting (\Cref{sec:state-splitting}), proving strong converse theorems for the latter protocols as well. 
We summarize our results and discuss open questions in \Cref{sec:discussion}.
\Cref{sec:renyi-properties-proofs,sec:feedback-proofs} contain some of the proofs of \Cref{sec:renyi-properties,sec:feedback}.

\section{Preliminaries}\label{sec:preliminaries}
\subsection{Notation \& definitions}
For a finite-dimensional Hilbert space $\cH$ we denote the set of linear operators acting on $\cH$ by $\cB(\cH)$. 
We define the set of positive semi-definite operators $\cP(\cH)\coloneqq \lbrace X\in\cB(\cH)\colon X\geq 0\rbrace$ and refer to $P\in\cP(\cH)$ simply as a positive operator. 
The set $\cD(\cH)$ of density operators (or quantum states) on $\cH$ is the set of positive operators with unit trace: $\cD(\cH)\coloneqq \lbrace \rho\in\cP(\cH)\colon \tr\rho=1\rbrace$. 
The support $\supp Q$ of an operator $Q$ is defined as the orthogonal complement of its kernel. 
We write $A\not\perp B$ if $\supp A\cap \supp B$ contains at least one non-zero vector.
For a quantum system $A$ with associated Hilbert space $\cH_A$ we write $|A|\coloneqq \dim\cH_A$. 
If $B$ is another quantum system with associated Hilbert space $\cH_B$, then we set $\cH_{AB}\coloneqq \cH_A\ox\cH_B$. 
We write $A\cong B$ for quantum systems $A$ and $B$ whose underlying Hilbert spaces are isomorphic.

For a pure state $|\psi_A\rangle\in\cH_A$ we make the abbreviation $\psi_A\equiv |\psi\rangle\langle\psi|_A\in\cD(\cH_A)$. 
We denote by $\one_A$ the identity operator acting on $\cH_A$, and by $\id_A$ the identity superoperator acting on $\cD(\cH_A)$. 
The completely mixed state on $\cH_A$ is denoted by $\pi_A\coloneqq |A|^{-1}\one_A$. 
Let $\cX$ be some finite set and $\cH_X$ the associated Hilbert space with orthonormal basis $\lbrace |x\rangle\rbrace_{x\in\cX}$.
Then the quantum embedding of a classical register $X$ corresponding to $\cX$ is defined as the subspace of diagonal matrices in $\cD(\cH_X)$. 
A classical-quantum (c-q) state $\rho_{XB}\in\cD(\cH_{XB})$ is defined as $\rho_{XB} \coloneqq \sum_{x\in\cX} p_x |x\rangle\langle x|_X \ox \rho_B^x$ where $\rho_B^x\in\cD(\cH_B)$ for all $x\in\cX$. 
For quantum systems $A\cong B$, a maximally entangled state (MES) $|\Phi_{AB}\rangle$ is defined as 
\begin{align*}
|\Phi_{AB}\rangle\coloneqq \frac{1}{\sqrt{|A|}}\sum_{i=1}^{|A|} |i_A\rangle\ox|i_B\rangle,
\end{align*} 
where $\lbrace |i_A\rangle\rbrace_{i=1}^{|A|}$ and $\lbrace |i_B\rangle\rbrace_{i=1}^{|B|}$ are orthonormal bases for $\cH_A$ and $\cH_B$, respectively. 
We also use the notation $\Phi_{AB}^k$ to explicitly indicate the Schmidt rank $k=|A|=|B|$ of the MES.

A quantum operation $\Lambda$ is a linear, completely positive, trace-preserving (CPTP) map. 
For a quantum operation $\Lambda\colon \cD(\cH_A)\rightarrow \cD(\cH_B)$ we write $\Lambda\colon A\rightarrow B$. 
For quantum states $\rho,\sigma\in\cD(\cH)$, we define the \emph{fidelity} $F(\rho,\sigma)$ between $\rho$ and $\sigma$ as
\begin{align*}
F(\rho,\sigma) \coloneqq \left\|\sqrt{\rho}\sqrt{\sigma} \right\|_1.
\end{align*}
The \emph{von Neumann entropy} of a state $\rho_A\in\cD(\cH_A)$ is given by $S(\rho_A) \coloneqq -\tr(\rho_A\log\rho_A)$, and we use the notation $S(A)_\rho\equiv S(\rho_A)$. 
The \emph{quantum relative entropy} of a state $\rho\in\cD(\cH)$ and a positive operator $\sigma\in\cP(\cH)$ is defined by 
\begin{align}\label{eq:quantum-relative-entropy}
D(\rho\|\sigma)\coloneqq \tr\left[\rho(\log\rho-\log\sigma)\right]
\end{align}
if $\supp\rho\subseteq\supp\sigma$, and set to be equal to $+\infty$ otherwise.
Furthermore, we consider the following quantities for a tripartite state $\rho_{ABC}\in\cD(\cH_{ABC})$ and its marginals:
\begin{itemize}
\item the \emph{quantum conditional entropy} $S(A|B)_\rho \coloneqq S(AB)_\rho - S(B)_\rho$

\item the \emph{quantum mutual information} $I(A;B)_\rho \coloneqq S(A)_\rho + S(B)_\rho - S(AB)_\rho$

\item the \emph{conditional quantum mutual information} $I(A;B|C)_\rho \coloneqq S(A|C)_\rho + S(B|C)_\rho - S(AB|C)_\rho$
\end{itemize}

\subsection{Schatten norms}
\begin{definition}[Schatten $p$-(quasi)norm]
Let $M\in\cB(\cH)$ and $p>0$. 
Then we define
\begin{align*}
\|M\|_p \coloneqq \left(\tr |M|^p\right)^{1/p},
\end{align*}
where $|M|\coloneqq \sqrt{M^\dagger M}$. 
For $1\leq p\leq\infty$, the functional $\|\cdot\|_p$ defines a norm, the \emph{Schatten $p$-norm}.
\end{definition}

\begin{theorem}\label{thm:schatten}
Let $M,N\in\cB(\cH)$.
\begin{enumerate}[{\normalfont (i)}]
\item\label{item:schatten-hoelders-ineq} H\"older's inequality: 
Let $1\leq p \leq \infty$ and $q$ be the H\"older conjugate of $p$ defined by $\frac{1}{p} + \frac{1}{q} = 1$.
Then
\begin{align}\label{eq:hoelder-inequality}
\|MN\|_1\leq \|M\|_p\|N\|_q.
\end{align}

\item\label{item:schatten-mccarthy-ineq} McCarthy's inequalities {\normalfont \cite{McC67}}: 
For $p\in(0,1)$ we have
\begin{align}\label{eq:mccarthy-quasinorm}
\|M+N\|_p^p \leq \|M\|_p^p + \|N\|_p^p,
\end{align}
whereas for $1\leq p\leq \infty$ we have
\begin{align}\label{eq:mccarthy-norm}
\|M\|_p^p + \|N\|_p^p \leq \|M+N\|_p^p.
\end{align}
\end{enumerate}
\end{theorem}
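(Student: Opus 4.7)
The plan is to reduce both statements to classical scalar or sequence-level inequalities by exploiting the spectral structure of the operators.

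For H\"older's inequality \eqref{eq:hoelder-inequality}, I would begin with the polar decomposition $MN = U|MN|$ for a suitable partial isometry $U$, so that $\|MN\|_1 = \tr|MN| = \tr(U^* MN)$. Von Neumann's trace inequality $|\tr(XY)|\leq \sum_i \sigma_i(X)\sigma_i(Y)$, applied with $X = U^* M$ and $Y = N$ and combined with the fact that a partial isometry does not increase singular values ($\sigma_i(U^* M)\leq \sigma_i(M)$), gives $\|MN\|_1 \leq \sum_i \sigma_i(M)\sigma_i(N)$. A final invocation of the classical H\"older inequality applied to the singular-value sequences, together with the defining identity $\|M\|_p^p = \sum_i \sigma_i(M)^p$, then yields $\sum_i \sigma_i(M)\sigma_i(N) \leq \|M\|_p\|N\|_q$, which is the desired bound.

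For McCarthy's quasinorm inequality \eqref{eq:mccarthy-quasinorm}, which holds for arbitrary $M,N$ and $p \in (0,1)$, I would appeal to Rotfel'd's trace inequality: for any concave $f:[0,\infty)\to[0,\infty)$ with $f(0) = 0$ and any operators $A, B$, $\tr f(|A+B|) \leq \tr f(|A|) + \tr f(|B|)$. Choosing $f(t) = t^p$, which is concave with $f(0)=0$ on $[0,\infty)$, produces \eqref{eq:mccarthy-quasinorm} directly. For the norm inequality \eqref{eq:mccarthy-norm} I would assume $M, N \geq 0$ (the statement failing in full generality, as $N = -M$ shows), so that $\|M\|_p^p = \tr M^p$, and then use the identity $\tr(M+N)^p = \tr[(M+N)^{p-1}M] + \tr[(M+N)^{p-1}N]$ together with the operator monotonicity of $t\mapsto t^{p-1}$ on $[0,\infty)$ for $p\in[1,2]$ to deduce $\tr[(M+N)^{p-1}M]\geq \tr[M^{p-1}M] = \tr M^p$, and similarly for $N$. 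For the remaining range $p > 2$ one can either iterate the argument or reduce by an interpolation/duality step to the case already handled.

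The main obstacle is the operator-level lift of the scalar inequalities in part \ref{item:schatten-mccarthy-ineq}, since na\"ive concavity or convexity of $t\mapsto t^p$ does not by itself imply the corresponding trace inequality. Rotfel'd's theorem for \eqref{eq:mccarthy-quasinorm} admits an elegant proof via the integral representation $t^p = \tfrac{\sin(\pi p)}{\pi}\int_0^\infty s^{p-1}\tfrac{t}{s+t}\,ds$ for $p\in(0,1)$, since the integrand $t\mapsto t/(s+t)$ is operator concave and its trace is subadditive on positive arguments, and the subadditivity of $\tr(\cdot)^p$ follows upon integrating in $s$. Since both results are classical, a concise writeup can simply invoke \cite{McC67} or a standard reference such as Bhatia's \emph{Matrix Analysis}.
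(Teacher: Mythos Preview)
The paper does not actually prove \Cref{thm:schatten}: it is stated in the preliminaries as a known result, with part \eqref{item:schatten-mccarthy-ineq} attributed to \cite{McC67}, and is then invoked as a black box (for H\"older in the proof of \Cref{prop:renyi-fidelity}, and for McCarthy in the proof of \Cref{prop:renyi-cq-states}). So there is no in-paper argument to compare against; your proposal is a proof where the paper offers only a citation.

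Your treatment of H\"older via polar decomposition, von Neumann's trace inequality, and the scalar H\"older inequality is standard and correct. Your observation that \eqref{eq:mccarthy-norm} cannot hold for arbitrary $M,N\in\cB(\cH)$ (take $N=-M$) is well taken; the paper's statement is indeed imprecise on this point, and in its only application (the proof of \eqref{eq:discarding-classical-info}) the summands $\tau_B^{(1-\alpha)/2\alpha} p_x\rho_{AB}^x\tau_B^{(1-\alpha)/2\alpha}$ are positive, so your positivity hypothesis is the right fix. The Rotfel'd argument for \eqref{eq:mccarthy-quasinorm} is fine.

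The one soft spot is the range $p>2$ in \eqref{eq:mccarthy-norm}. Your operator-monotonicity argument for $t\mapsto t^{p-1}$ only covers $p\in[1,2]$, and the suggested ``iteration'' does not obviously go through: passing from $(M+N)^p$ to $[(M+N)^2]^{p/2}$ gains nothing because $(M+N)^2$ is not comparable to $M^2+N^2$ in the operator order. A clean way to handle all $p\geq 1$ at once (for $M,N\geq 0$) is to differentiate $f(t)=\tr(M+tN)^p$, obtaining $f'(t)=p\,\tr[(M+tN)^{p-1}N]$, and then use that for $r\geq 1$ and $A\geq B\geq 0$ with $[B,N]=0$ one has $\tr(A^r N)\geq \tr(B^r N)$ (diagonalize $B$ and $N$ simultaneously and apply Jensen's inequality $\langle e_i|A^r|e_i\rangle\geq \langle e_i|A|e_i\rangle^r\geq b_i^r$). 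Taking $A=M+tN$, $B=tN$ gives $f'(t)\geq p\,t^{p-1}\tr N^p$, and integrating from $0$ to $1$ yields $\tr(M+N)^p-\tr M^p\geq \tr N^p$. Alternatively, simply cite \cite{McC67} or Bhatia for the full range, as you suggest at the end.
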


\begin{remark}
For $1\leq p\leq\infty$, the functional $\|\cdot\|_p$ satisfies the triangle inequality, which for the Schatten $p$-norms is also known as the \emph{Minkowski inequality}:
\begin{align}\label{eq:minkowski}
\|M+N\|_p \leq \|M\|_p + \|N\|_p.
\end{align}
Hence, $\|\cdot\|_p$ defines a norm for this range of $p$. 
However, for $p\in(0,1)$ the Minkowski inequality \eqref{eq:minkowski} fails to hold, and we have the weaker inequality \eqref{eq:mccarthy-quasinorm} instead. 
Therefore, $\|\cdot\|_p$ only defines a quasinorm for the range $p\in(0,1)$.
\end{remark}

\subsection{Rényi entropies}\label{subsec:renyi-entropies}
\begin{definition}[\cite{MDSFT13,WWY14}]\label{def:renyi-entropies}~
\begin{enumerate}[{\normalfont (i)}]
\item Let $\alpha\in (0,\infty)\setminus \lbrace 1\rbrace$, and $\rho,\sigma\in\cP(\cH)$. If $\supp\rho\subseteq\supp\sigma$ for $\alpha>1$ or $\rho\not\perp\sigma$ for $\alpha\in(0,1)$, the \emph{sandwiched Rényi divergence of order $\alpha$} is defined as
\begin{align*}
\tD_\alpha(\rho\|\sigma) &\coloneqq \frac{1}{\alpha-1}\log \left[(\tr\rho)^{-1} \tr\left\lbrace\left(\sigma^{(1-\alpha)/2\alpha} \rho \sigma^{(1-\alpha)/2\alpha} \right)^\alpha \right\rbrace \right]\\
&= \frac{2\alpha}{\alpha-1} \log \left\| \rho^{1/2}\sigma^{(1-\alpha)/2\alpha}\right\|_{2\alpha} - \frac{1}{\alpha-1}\log\tr\rho.
\end{align*}
Otherwise, we set $\tD_\alpha(\rho\|\sigma)=\infty$.
Note that for $[\rho,\sigma]=0$ the sandwiched Rényi divergence reduces to the usual \emph{$\alpha$-relative Rényi entropy} $D_\alpha(\rho\|\sigma) \coloneqq \frac{1}{\alpha-1}\log \left[(\tr\rho)^{-1}\tr \left(\rho^\alpha \sigma^{1-\alpha}\right)\right]$ (see e.g.~\cite{Pet86}).

\item For $\rho\in\cD(\cH)$ and $\alpha\in(0,\infty)\setminus\lbrace 1\rbrace$, the \emph{Rényi entropy of order $\alpha$} is defined as 
\begin{align*}
S_\alpha(\rho) &\coloneqq -D_\alpha(\rho\|\one) = -\tD_\alpha(\rho\|\one). 
\end{align*}
Note that $S_0(\rho) =\lim_{\alpha\to 0}S_\alpha(\rho)= \log\rk\rho$, where $\rk\rho$ denotes the rank of $\rho$.
We use the notation $S_\alpha(A)_\rho \equiv S_\alpha(\rho_A)$ for $\rho_A\in\cD(\cH_A)$.

\item For $\rho_{AB}\in\cD(\cH_{AB})$ and $\alpha\in(0,\infty)\setminus\lbrace 1\rbrace$, the \emph{Rényi conditional entropy of order $\alpha$} is defined as
\begin{align*}
\tS_\alpha(A|B)_\rho \coloneqq -\min_{\sigma_B} \tD_\alpha(\rho_{AB}\|\one_A\ox\sigma_B), 
\end{align*}
and the \emph{Rényi mutual information of order $\alpha$} is defined as \cite{GW15}
\begin{align*}
\tI_\alpha(A;B)_\rho \coloneqq \min_{\sigma_B} \tD_\alpha(\rho_{AB}\|\rho_A\ox\sigma_B).
\end{align*}
\end{enumerate}
\end{definition}

We have \cite{MDSFT13,WWY14}
\begin{align}\label{eq:D-alpha-limit}
\lim_{\alpha\rightarrow 1}\tD_\alpha(\rho\|\sigma) = D(\rho\|\sigma).
\end{align}
Therefore, in the subsequent discussion of the sandwiched Rényi divergence $\tD_\alpha(\rho\|\sigma)$ and its derived quantities, we will use the full range $\alpha \geq 0$, setting $\tD_1(\rho\|\sigma)=D(\rho\|\sigma)$ and $\tD_0(\rho\|\sigma)=\lim_{\alpha\to 0}\tD_\alpha(\rho\|\sigma)$ \cite{DL14a}. 
In the next proposition we collect a few properties of the Rényi quantities defined above:

\begin{proposition}[\cite{MDSFT13,WWY14}]\label{prop:renyi-properties}
Let $\rho\in\cD(\cH)$ and $\sigma\in\cP(\cH)$. The sandwiched Rényi divergence and entropies derived from it satisfy the following properties:
\begin{enumerate}[{\normalfont (i)}]
\item\label{item:RE-mon-alpha} Monotonicity: If $0 < \alpha \leq\beta$, then $\tD_\alpha(\rho\|\sigma) \leq \tD_\beta(\rho\|\sigma)$.

\item\label{item:RE-dim-bound} Positivity and dimension bound: Let $d=\dim\cH$, then we have $0\leq S_\alpha(\rho)\leq \log d$ for all $\alpha\geq 0$. The extremal values are achieved for pure states and completely mixed states, respectively.

\item\label{item:RE-add} Additivity: For all $\alpha > 0$ and $\rho_i\in\cD(\cH_i)$, $\sigma_i\in\cP(\cH_i)$ for $i=1,2$ such that the conditions on their supports given by \Cref{def:renyi-entropies} hold, we have 
\begin{align*}
\tD_\alpha(\rho_1\ox\rho_2\|\sigma_1\ox\sigma_2) &= \tD_\alpha(\rho_1\|\sigma_1) + \tD_\alpha(\rho_2\|\sigma_2),\\
S_\alpha(\rho_1\ox\rho_2) &= S_\alpha(\rho_1)+S_\alpha(\rho_2).
\end{align*}
Furthermore, additivity also holds for the Rényi conditional entropy and mutual information {\normalfont \cite{Bei13,HT14}}: If $\rho_{AB}$ and $\sigma_{A'B'}$ are bipartite states and $\alpha\geq 1/2$, then  
\begin{align*}
\tS_\alpha(AA'|BB')_{\rho\ox\sigma} &= \tS_\alpha(A|B)_\rho + \tS_\alpha(A'|B')_\sigma,\\
\tI_\alpha(AA';BB')_{\rho\ox\sigma} &= \tI_\alpha(A; B)_\rho + \tI_\alpha(A';B')_\sigma.
\end{align*}

\item\label{item:RE-isom} Invariance under isometries: Let $U\colon \cH\rightarrow\cH'$ be an isometry. 
Then $\tD_\alpha(U\rho U^\dagger\|U\sigma U^\dagger) = \tD_\alpha(\rho\|\sigma)$ and $S_\alpha(U\rho U^\dagger) = S_\alpha(\rho)$ for all $\alpha\geq 0$.

\item\label{item:RE-duality} Duality for the Rényi entropy: Let $|\psi_{AB}\rangle$ be a pure state and set $\rho_A=\tr_B\psi_{AB}$ and $\rho_B=\tr_A\psi_{AB}$. Then $S_\alpha(A)_\rho = S_\alpha(B)_\rho$ for all $\alpha\geq 0$.

\item\label{item:RE-cond-duality} Duality for the Rényi conditional entropy: Let $|\psi_{ABC}\rangle$ be a pure state and for $\alpha\geq 1/2$ define $\beta$ through $1/\alpha + 1/\beta = 2$. Then 
\begin{align*}
\tS_\alpha(A|B)_\psi = -\tS_\beta(A|C)_\psi.
\end{align*}

\item\label{item:dpi} {\normalfont \cite{FL13,Bei13}} Data processing inequality: If $\alpha \in[1/2,\infty)$ and $\Lambda$ is a CPTP map, then
\begin{align*}
\tD_\alpha(\rho\|\sigma) \geq \tD_\alpha(\Lambda(\rho)\|\Lambda(\sigma)).
\end{align*}
Furthermore, let $\Phi\colon B\to C$ be a CPTP map, and for a bipartite state $\rho_{AB}$ set $\sigma_{AC}\coloneqq (\id_A\ox\Phi)(\rho_{AB})$.
We then have for $\alpha\geq 1/2$ that
\begin{align*}
\tS_\alpha(A|B)_\rho &\leq \tS_\alpha(A|C)_\sigma,\\
\tI_\alpha(A;B)_\rho &\geq \tI_\alpha(A;C)_\sigma.
\end{align*}
\end{enumerate}
\end{proposition}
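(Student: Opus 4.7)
The plan is to handle the seven items in order of increasing depth, everywhere exploiting the Schatten-norm rewriting $\tD_\alpha(\rho\|\sigma)=\tfrac{2\alpha}{\alpha-1}\log\|\rho^{1/2}\sigma^{(1-\alpha)/(2\alpha)}\|_{2\alpha}$ from \Cref{def:renyi-entropies}. Items (iv), (v), and the divergence part of (iii) are nearly automatic: for an isometry $U$ one has $(UXU^\dagger)^p=UX^pU^\dagger$ on the image when $X\ge 0$, so applying $U$ and $U^\dagger$ and cycling through the trace reproduces the original expression; for a pure bipartite state, the Schmidt decomposition forces $\rho_A$ and $\rho_B$ to share the same non-zero eigenvalues, and $S_\alpha$ depends only on the spectrum; and additivity on products of positive operators follows from $(X_1\ox X_2)^p=X_1^p\ox X_2^p$ combined with multiplicativity of the trace, turning the outer logarithm into a sum.

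For (i), monotonicity of $\alpha\mapsto\tD_\alpha(\rho\|\sigma)$ can be obtained either by computing the derivative and recognizing it as a classical relative entropy between two $\alpha$-tilted distributions, or by applying H\"older's inequality (\Cref{thm:schatten}\ref{item:schatten-hoelders-ineq}) to interpolate between the Schatten norms arising at two different values of $\alpha$. Item (ii) reduces to the spectral identity $S_\alpha(\rho)=\tfrac{1}{1-\alpha}\log\sum_i\lambda_i^\alpha$: the bound $\sum_i\lambda_i^\alpha\le 1$ for $\alpha\ge 1$ (respectively $\ge 1$ for $\alpha\le 1$) gives $S_\alpha\ge 0$, while Jensen's inequality on $t\mapsto t^\alpha$ against the uniform distribution yields $S_\alpha\le\log d$, with equality at pure and maximally mixed states respectively.

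The sandwich structure really starts to bite at (iii) additivity of $\tS_\alpha,\tI_\alpha$ and (vi) duality of the conditional entropy. For additivity I would derive the Lagrange fixed-point equation characterizing the minimizing $\sigma_B^\star$ in $\tS_\alpha(A|B)_\rho=-\min_{\sigma_B}\tD_\alpha(\rho_{AB}\|\one_A\ox\sigma_B)$ and verify that the tensor product of individual minimizers is again a stationary point for the product problem; quasi-convexity of $\tD_\alpha$ in $\sigma$ then promotes this stationary point to the global minimum. For duality in (vi), one expresses the Schatten $2\alpha$-norm occurring in $\tS_\alpha(A|B)_\psi$ directly in terms of the Schmidt coefficients of $|\psi_{ABC}\rangle$ across the $A|BC$ cut, and checks that the substitution $\tau_C\leftrightarrow f(\sigma_B)$ that exchanges the roles of $B$ and $C$ converts a $2\alpha$-norm into a $2\beta$-norm precisely when $1/\alpha+1/\beta=2$, which then matches the two sides of the claimed identity after both infima are resolved.

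The \emph{main obstacle} is the data-processing inequality (vii) for $\alpha>1$. For $\alpha\in[1/2,1)$ one can follow Frank--Lieb and invoke joint operator concavity of the relevant trace functional, but for $\alpha>1$ the corresponding map is no longer operator-concave and a direct convexity argument fails. The strategy I would adopt is Beigi's complex interpolation: extend the functional $z\mapsto\|\Lambda(\rho)^{1/2}\Lambda(\sigma)^{(1-z)/(2z)}\|_{2z}$ to a bounded holomorphic family on a vertical strip, apply the Stein--Hirschman three-line theorem, and identify the $\Re z=1/2$ boundary case with a Hilbert--Schmidt-type quantity that is automatically contractive under $\Lambda$, while the other boundary gives a trivial dimension-independent bound. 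Once $\tD_\alpha$ satisfies DPI for all $\alpha\ge 1/2$, the monotonicity statements for $\tS_\alpha(A|B)$ and $\tI_\alpha(A;B)$ under $\id_A\ox\Phi$ follow by applying DPI and commuting with the infimum over $\sigma$: for $\tI_\alpha$ one uses $\Phi(\sigma_B)$ as a feasible candidate in the $\sigma_C$-infimum on the output side, and for $\tS_\alpha$ one starts from any $\sigma_C$, pulls it back through a completion of $\Phi$ to produce a $\sigma_B$ that beats it, and compares the two divergences via DPI with target $\one_A\ox\sigma_B$.
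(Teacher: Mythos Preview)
The paper does not actually prove \Cref{prop:renyi-properties}: it is stated as a collection of known facts with references to \cite{MDSFT13,WWY14,FL13,Bei13,HT14}, and no argument is given in the text or appendices. So there is nothing in the paper to compare against; your sketch is essentially a recapitulation of the proofs in those original references (Frank--Lieb concavity for $\alpha\in[1/2,1)$, Beigi's complex interpolation for $\alpha\ge 1$, the explicit-optimizer computations of \cite{MDSFT13,Bei13} for additivity and duality of $\tS_\alpha$, and the elementary spectral arguments for (i), (ii), (iv), (v)). In that sense your plan is sound and tracks the literature the paper cites.

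One small gap worth flagging: in your final sentence you derive the DPI for $\tS_\alpha(A|B)$ by ``starting from any $\sigma_C$, pulling it back through a completion of $\Phi$ to produce a $\sigma_B$''. This direction does not work in general, since a CPTP map $\Phi\colon B\to C$ need not be surjective on states, so a given $\sigma_C$ may have no preimage. The correct (and simpler) argument runs the other way, exactly as you do for $\tI_\alpha$: take the optimizer $\sigma_B^\star$ achieving $\tS_\alpha(A|B)_\rho$, apply DPI for $\tD_\alpha$ to the map $\id_A\ox\Phi$ with second argument $\one_A\ox\sigma_B^\star$, and observe that $\Phi(\sigma_B^\star)$ is a feasible point in the infimum defining $\tS_\alpha(A|C)_\sigma$. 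This gives $-\tS_\alpha(A|B)_\rho\ge -\tS_\alpha(A|C)_\sigma$ directly.
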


\subsection{Further properties of Rényi entropic quantities}\label{sec:renyi-properties}
In this section we derive various properties of the Rényi entropy, the Rényi conditional entropy, and the Rényi mutual information (see \Cref{def:renyi-entropies}), as well as the Rényi conditional mutual information, defined in \eqref{eq:renyi-cmi} below. 
We defer the proofs of the statements concerning Rényi mutual information and Rényi conditional mutual information to \Cref{sec:renyi-properties-proofs}.

We start with the following subadditivity property for the Rényi entropies \cite{vDH02}, for which we give a simplified proof based on the data processing inequality.

\begin{lemma}[Subadditivity of Rényi entropies {\normalfont \cite{vDH02}}]\label{lem:renyi-subadditivity}
If $\alpha\geq 0$ and $\rho_{AB}\in\cD(\cH_{AB})$, then\footnote{Note that both inequalities in \Cref{lem:renyi-subadditivity} can be tightened by replacing the log terms with the $0$-Rényi entropy $S_0(B)_\rho=\log\rk\rho_B$. 
However, throughout this paper we assume Hilbert spaces to be restricted to the support of the corresponding quantum states, so that $S_0(B)_\rho = \log|B|$.}
\begin{align*}
S_\alpha(A)_\rho - \log|B| \leq S_\alpha(AB)_\rho\leq S_\alpha(A)_\rho + \log|B|.
\end{align*}
\end{lemma}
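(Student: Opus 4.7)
The plan is to prove the upper bound first by applying the data processing inequality (item~\ref{item:dpi} of \Cref{prop:renyi-properties}) to the partial trace map, and then to deduce the lower bound from it via duality for the Rényi entropy (item~\ref{item:RE-duality}).

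For the upper bound $S_\alpha(AB)_\rho \leq S_\alpha(A)_\rho + \log|B|$, I would apply DPI of the sandwiched Rényi divergence to the CPTP map $\tr_B\colon AB\to A$, with $\rho_{AB}$ and the positive operator $\one_{AB}$ as arguments. Since $\tr_B\one_{AB} = |B|\,\one_A$, this gives
\[
\tD_\alpha(\rho_{AB}\|\one_{AB}) \geq \tD_\alpha(\rho_A\||B|\,\one_A).
\]
The scaling identity $\tD_\alpha(\rho\|c\sigma) = \tD_\alpha(\rho\|\sigma) - \log c$ for $c>0$, which is immediate from \Cref{def:renyi-entropies}, rewrites the right-hand side as $\tD_\alpha(\rho_A\|\one_A)-\log|B|$. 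Using $S_\alpha(\rho) = -\tD_\alpha(\rho\|\one)$ yields the claimed bound. DPI of the sandwiched divergence is guaranteed for $\alpha\geq 1/2$; for $\alpha\in[0,1/2)$, the fact that $\rho_{AB}$ and $\one_{AB}$ commute means the sandwiched divergence coincides with the ordinary Petz divergence $D_\alpha$, for which DPI is valid in the range $\alpha\in[0,2]$, so the same chain of inequalities goes through. Together these two regimes cover the full range $\alpha\geq 0$.

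For the lower bound $S_\alpha(A)_\rho - \log|B| \leq S_\alpha(AB)_\rho$, I would introduce a purification $|\psi_{ABC}\rangle$ of $\rho_{AB}$ on some auxiliary system $C$. Duality of the Rényi entropy then gives $S_\alpha(A)_\rho = S_\alpha(BC)_\psi$ and $S_\alpha(AB)_\rho = S_\alpha(C)_\psi$, so the lower bound is equivalent to $S_\alpha(BC)_\psi \leq S_\alpha(C)_\psi + \log|B|$, which is exactly the upper bound already proved, applied to systems $B$ and $C$ of the purification.

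I do not expect any substantive obstacle. The only point that requires a little care is covering the full range $\alpha\geq 0$ in the DPI step, which works because the second argument of the divergence is a multiple of the identity, so the two available DPI ranges $[1/2,\infty)$ (sandwiched) and $[0,2]$ (Petz) both apply and together exhaust $[0,\infty)$.
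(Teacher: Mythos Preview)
Your proposal is correct and essentially identical to the paper's own proof: the paper also derives the upper bound by applying the data processing inequality to the partial trace (writing the second argument as $\one_A\otimes\pi_B$ rather than $\one_{AB}$, which amounts to the same scaling identity you invoke), splits into the ranges $\alpha\geq 1/2$ (sandwiched DPI) and $\alpha\in[0,1/2)$ (Petz DPI), and then obtains the lower bound from the upper bound via duality on a purification exactly as you describe.
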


\begin{proof}
To prove the upper bound on $S_\alpha(AB)_\rho$, observe first that 
\begin{align}
S_{\alpha}(AB)_\rho &= - \tD_\alpha(\rho_{AB}\|\one_A\ox\pi_B) + \log |B|\label{eq:renyi-entropy-as-sand-renyi-div}\\
&= - D_\alpha(\rho_{AB}\|\one_A\ox\pi_B)  + \log |B|.\label{eq:renyi-entropy-as-renyi-div}
\end{align} 
Assuming that $\alpha\geq 1/2$ and using \eqref{eq:renyi-entropy-as-sand-renyi-div}, we then have
\begin{align*}
S_\alpha(AB)_{\rho} &= - \tD_\alpha(\rho_{AB}\|\one_A\ox\pi_B) + \log |B|\\
&\leq - \tD_\alpha(\rho_{A}\|\one_A)  + \log |B|\\
&= S_\alpha(A)_{\rho} + \log |B|,
\end{align*}
where the inequality follows from the data processing inequality (\Cref{prop:renyi-properties}(\ref{item:dpi})). 
If $\alpha \in[0,1/2)$, we use relation \eqref{eq:renyi-entropy-as-renyi-div} together with the data processing inequality for the $\alpha$-Rényi relative entropy \cite{Pet86} instead. 
The lower bound on $S_\alpha(AB)_\rho$ follows from the upper bound by a simple duality argument, using \Cref{prop:renyi-properties}(\ref{item:RE-duality}), as discussed in~\cite{vDH02}.
\end{proof}

We proceed with the following lemma concerning dimension bounds on the Rényi conditional entropy and mutual information, as well as invariance properties with respect to tensor product states.

\begin{lemma}\label{lem:renyi-quantities}
Let $\alpha\in[1/2,\infty)$.
\begin{enumerate}[{\normalfont (i)}]
\item\label{item:RC-dim-bound} For an arbitrary tripartite state $\rho_{ABC}$ we have
\begin{align}
\tS_\alpha(A|BC)_\rho + 2\log|C| &\geq \tS_\alpha(A|B)_\rho,\label{eq:dim-bound-conditional}\\
\tI_\alpha(A;B)_\rho + 2\log|C| &\geq \tI_\alpha(A;BC)_\rho.\label{eq:dim-bound-mutual}
\end{align}

\item\label{item:RC-decoupled} For states $\rho_{AB}$ and $\sigma_C$, we have
\begin{align}
\tS_\alpha(A|BC)_{\rho\ox\sigma} &= \tS_\alpha(A|B)_\rho, \label{eq:product-conditional}\\
\tI_\alpha(A;BC)_{\rho\ox\sigma} &= \tI_\alpha(A;B)_\rho.\label{eq:product-mutual}
\end{align}
\end{enumerate} 
\end{lemma}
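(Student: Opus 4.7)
The plan is to prove the conditional-entropy statements \eqref{eq:dim-bound-conditional} and \eqref{eq:product-conditional} directly, since the mutual-information counterparts \eqref{eq:dim-bound-mutual} and \eqref{eq:product-mutual} follow by completely analogous arguments and are (as announced) deferred to \Cref{sec:renyi-properties-proofs}. For \eqref{eq:dim-bound-conditional}, the key idea is to take the minimiser $\sigma_B^*$ appearing in the definition of $\tS_\alpha(A|B)_\rho$ and extend it to the trial operator $\sigma_B^* \ox \pi_C$ in the minimisation defining $\tS_\alpha(A|BC)_\rho$, incurring two separate $\log|C|$ losses: one from rescaling the reference operator by $|C|^{-1}$, and one from a Rényi subadditivity step.

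More concretely, testing $\sigma_B^* \ox \pi_C$ in the minimisation defining $\tS_\alpha(A|BC)_\rho$ yields
\begin{align*}
\tS_\alpha(A|BC)_\rho \geq -\tD_\alpha\left(\rho_{ABC}\,\big\|\,\one_A \ox \sigma_B^* \ox \pi_C\right),
\end{align*}
and the elementary scaling identity $\tD_\alpha(\rho\|t\sigma) = \tD_\alpha(\rho\|\sigma) - \log t$ (immediate from the definition), applied with $t = |C|^{-1}$, rewrites the right-hand side as $-\log|C| - \tD_\alpha(\rho_{ABC}\|\one_A \ox \sigma_B^* \ox \one_C)$. The bound \eqref{eq:dim-bound-conditional} then reduces to the key inequality
\begin{align*}
\tD_\alpha\left(\rho_{ABC}\,\big\|\,\one_A \ox \sigma_B^* \ox \one_C\right) \leq \log|C| + \tD_\alpha\left(\rho_{AB}\,\big\|\,\one_A \ox \sigma_B^*\right),
\end{align*}
whose right-hand side equals $\log|C| - \tS_\alpha(A|B)_\rho$ by the choice of $\sigma_B^*$.

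Establishing this last inequality is the main substantive step. Setting $M_B \coloneqq (\sigma_B^*)^{(1-\alpha)/(2\alpha)}$ and
\begin{align*}
\tilde\rho_{ABC} \coloneqq (\one_A \ox M_B \ox \one_C)\,\rho_{ABC}\,(\one_A \ox M_B \ox \one_C),
\end{align*}
one checks that $\tr_C \tilde\rho_{ABC}$ equals the analogously defined sandwich $\tilde\rho_{AB}$, and that the two divergences in question become $\frac{1}{\alpha-1}\log\tr\tilde\rho_{ABC}^\alpha$ and $\frac{1}{\alpha-1}\log\tr\tilde\rho_{AB}^\alpha$ respectively. Clearing the logarithm and carefully tracking the sign of $\alpha-1$, the inequality is seen to be equivalent to $\tr\tilde\rho_{ABC}^\alpha \leq |C|^{\alpha-1}\tr\tilde\rho_{AB}^\alpha$ for $\alpha>1$ and to the reversed inequality for $\alpha\in[\tfrac{1}{2},1)$; in both regimes this is precisely the (scale-invariant form of the) subadditivity of Rényi entropies, \Cref{lem:renyi-subadditivity}, applied to the positive operator $\tilde\rho_{ABC}$ viewed on $(AB)\ox C$. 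This sign-tracking between the two $\alpha$-regimes, together with the routine extension of \Cref{lem:renyi-subadditivity} from density operators to arbitrary positive operators via normalisation, is the only (minor) obstacle.

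For the tensor-product identity \eqref{eq:product-conditional}, the inequality ``$\leq$'' follows immediately from the data processing inequality \Cref{prop:renyi-properties}(\ref{item:dpi}) applied to the partial trace over $C$, while the reverse inequality is obtained by restricting the minimisation in $\tS_\alpha(A|BC)_{\rho\ox\sigma}$ to trial states of the form $\tau_B \ox \sigma_C$ and exploiting $\tD_\alpha(\rho_{AB}\ox\sigma_C\|\one_A\ox\tau_B\ox\sigma_C) = \tD_\alpha(\rho_{AB}\|\one_A\ox\tau_B)$ (additivity, \Cref{prop:renyi-properties}(\ref{item:RE-add}), together with $\tD_\alpha(\sigma_C\|\sigma_C)=0$).
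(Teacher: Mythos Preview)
Your proof of \eqref{eq:product-conditional} is essentially identical to the paper's: data processing for one inequality, restriction of the minimisation to product trial states for the other.

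For \eqref{eq:dim-bound-conditional} you take a genuinely different route. The paper invokes \cite[Prop.~8]{MDSFT13}, which gives $\tS_\alpha(A|BC)_\rho \geq \tS_\alpha(AC|B)_\rho - \log|C|$, and then obtains the second $\log|C|$ by passing to a purifying system via the duality relation of \Cref{prop:renyi-properties}(\ref{item:RE-cond-duality}), applying the same external proposition there, and dualising back. Your argument instead fixes the minimiser $\sigma_B^*$ for $\tS_\alpha(A|B)_\rho$, tests $\sigma_B^*\otimes\pi_C$, and after one scaling step reduces everything to the subadditivity bound of \Cref{lem:renyi-subadditivity} applied to the (renormalised) sandwich $\tilde\rho_{ABC}$. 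This is correct: the normalisation step is indeed harmless because both sides of the trace inequality scale by the same factor $Z^\alpha$, and the minimiser exists and gives a finite divergence so that $Z>0$. The advantage of your approach is that it is entirely self-contained within the paper (no external proposition, no appeal to duality or a purifying system); the paper's approach, on the other hand, avoids the case distinction on the sign of $\alpha-1$ and the excursion into non-normalised operators by staying at the level of conditional entropies throughout.
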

\begin{proof}
We first prove \eqref{eq:dim-bound-conditional}. 
By \cite[Prop.~8]{MDSFT13} we have the following bound:
\begin{align}\label{eq:intermediate}
\tS_{\alpha}( A|BC)  _{\rho}\geq \tS_{\alpha
}(  AC|B)  _{\rho}-\log\vert C\vert
\end{align}
Now consider from duality (\Cref{prop:renyi-properties}(\ref{item:RE-cond-duality})) that%
\begin{align*}
\tS_{\alpha}(AC|B )  _{\rho}=-\tS_{\beta
}(  AC|D)  _{\rho},
\end{align*}
where $D$ is a purifying system and $\beta$ is such that $1/\alpha+1/\beta=2$.
By the same reasoning, we find that%
\begin{align*}
\tS_{\beta} ( A|CD)  _{\rho}\geq \tS_{\beta} (  AC|D )_{\rho}-\log\vert C\vert.
\end{align*}
But from duality this is the same as%
\begin{align*}
\tS_{\alpha} (A|B)_{\rho}-\log\vert C\vert \leq\tS_{\alpha}( AC|B)  _{\rho}.
\end{align*}
Substituting this in \eqref{eq:intermediate} then yields the claim.

We continue with the proof of \eqref{eq:product-conditional}. 
From the data processing inequality (\Cref{prop:renyi-properties}(\ref{item:dpi})), we know that%
\begin{align*}
\tS_{\alpha}( A|BC)  _{\rho\ox\sigma}\leq\tS_{\alpha
}(A|B)  _{\rho}.
\end{align*}
On the other hand, consider that%
\begin{align*}
-\tS_{\alpha}( A|BC)  _{\rho\ox\sigma}  &  =\min_{\tau_{BC}%
}\tD_{\alpha}(  \rho_{AB}\otimes\sigma_{C}\Vert \one_{A}%
\otimes\tau_{BC}) \\
&  \leq\tD_{\alpha}(  \rho_{AB}\otimes\sigma_{C}\Vert
\one_{A}\otimes\theta_B\otimes\sigma_{C}) \\
&  =\tD_{\alpha}( \rho_{AB}\Vert \one_{A}\theta
_{B})  .
\end{align*}
Since the inequality holds for all $\theta_{B}$, we get that%
\begin{align*}
-\tS_{\alpha}( A|BC)  _{\rho\ox\sigma}\leq- \tS_{\alpha}\left(  A|B\right)  _{\rho},
\end{align*}
which yields \eqref{eq:product-conditional}. 

The corresponding relations \eqref{eq:dim-bound-mutual} and \eqref{eq:product-mutual} for the Rényi mutual information are proved in \Cref{sec:renyi-properties-proofs}.
\end{proof}

The following proposition is crucial for our proofs. It bounds the difference of Rényi entropic quantities of two quantum states in terms of their fidelity. 
Note that the inequality \eqref{eq:renyi-fidelity-entropy} for the Rényi entropies originally appeared in \cite{vDH02}. 
The last assertion concerns the Rényi conditional mutual information $\tI_\alpha(A;B|C)_\rho$, defined in \cite{BSW15} for a tripartite state $\rho_{ABC}$ and $\alpha>0$ as
\begin{align}\label{eq:renyi-cmi}
\tI_\alpha(A;B|C)_\rho = \frac{2\alpha}{\alpha-1} \log \left\| \rho_{ABC}^{1/2}\,\rho_{AC}^{(1-\alpha)/2\alpha}\rho_C^{(\alpha-1)/2\alpha} \rho_{BC}^{(1-\alpha)/2\alpha} \right\|_{2\alpha}.
\end{align}
Note that this quantity does not feature in our proofs.
However, the corresponding fidelity bound in \eqref{eq:renyi-fidelity-cmi} might be of independent interest, and we include it for the sake of completeness.
\begin{proposition}\label{prop:renyi-fidelity}
Let $\rho_{AB},\sigma_{AB}\in\cD(\cH_{AB})$, and for $\alpha\in(1/2,1)$ define $\beta\equiv \beta(\alpha) \coloneqq \alpha/(2\alpha-1)$. Then the following inequalities hold:
\begin{align}
S_\alpha(A)_\rho - S_{\beta}(A)_\sigma &\geq \frac{2\alpha}{1-\alpha} \log F(\rho_A,\sigma_A),\label{eq:renyi-fidelity-entropy}\\
\tS_\alpha(A|B)_\rho - \tS_\beta(A|B)_\sigma &\geq \frac{2\alpha}{1-\alpha} \log F(\rho_{AB},\sigma_{AB}).\label{eq:renyi-fidelity-conditional}\\
\intertext{Assuming that $\rho_A=\sigma_A$ holds, we also have}
\tI_\beta(A;B)_\rho - \tI_\alpha (A;B)_\sigma &\geq \frac{2\alpha}{1-\alpha} \log F(\rho_{AB},\sigma_{AB}).\label{eq:renyi-fidelity-mutual}
\intertext{Let $\rho_{ABC}$ and $\sigma_{ABC}$ be tripartite states satisfying $\rho_{AC}=\sigma_{AC}$, $\rho_{BC}=\sigma_{BC}$, and $\rho_C=\sigma_C$. Furthermore, assume that $\rho_{BC}$ has full support. Then}
\tI_\beta(A;B|C)_\rho - \tI_\alpha(A;B|C)_\sigma &\geq \frac{2\alpha}{1-\alpha} \log F(\rho_{ABC},\sigma_{ABC}).\label{eq:renyi-fidelity-cmi}
\end{align}
\end{proposition}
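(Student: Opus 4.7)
I would prove all four inequalities uniformly by applying H\"older's inequality for Schatten norms to $F(\rho,\sigma)=\|\rho^{1/2}\sigma^{1/2}\|_{1}$. The key algebraic observation is that for $\alpha\in(1/2,1)$ and $\beta=\alpha/(2\alpha-1)\in(1,\infty)$, the exponents $2\alpha$ and $2\beta$ are H\"older conjugates and the identity $\tfrac{\beta-1}{2\beta}=\tfrac{1-\alpha}{2\alpha}$ holds. This lets a single auxiliary operator raised to the power $(1-\beta)/2\beta$ on one side and its inverse, equal to $\eta^{(\alpha-1)/2\alpha}$, on the other side reproduce the Schatten expressions for $\tD_\beta(\cdot\|\eta)$ and $\tD_\alpha(\cdot\|\eta)$ simultaneously, with signs that combine nicely with the positive factor $\tfrac{2\alpha}{1-\alpha}$.

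\textbf{Entropy and conditional entropy.} For \eqref{eq:renyi-fidelity-entropy} no insertion is needed: H\"older gives $\|\rho_A^{1/2}\sigma_A^{1/2}\|_1\leq \|\rho_A^{1/2}\|_{2\alpha}\|\sigma_A^{1/2}\|_{2\beta}$, and each factor rewrites in terms of $S_\alpha(\rho_A)$ and $S_\beta(\sigma_A)$. For the conditional entropy \eqref{eq:renyi-fidelity-conditional} I would fix an arbitrary state $\tau_B$, split
\[
\rho_{AB}^{1/2}\sigma_{AB}^{1/2}=\bigl[\rho_{AB}^{1/2}(\one_A\ox\tau_B)^{(1-\alpha)/2\alpha}\bigr]\cdot\bigl[(\one_A\ox\tau_B)^{(\alpha-1)/2\alpha}\sigma_{AB}^{1/2}\bigr],
\]
and apply H\"older with exponents $(2\alpha,2\beta)$. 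Unpacking the two Schatten norms using the identity above yields $\tfrac{2\alpha}{1-\alpha}\log F(\rho_{AB},\sigma_{AB})\leq -\tD_\alpha(\rho_{AB}\|\one_A\ox\tau_B)+\tD_\beta(\sigma_{AB}\|\one_A\ox\tau_B)$ for every $\tau_B$. Choosing $\tau_B$ to be the minimizer attaining $\tS_\beta(A|B)_\sigma$ and using $-\tD_\alpha(\rho\|\one_A\ox\tau_B)\leq\tS_\alpha(A|B)_\rho$ by the variational definition of $\tS_\alpha$ then completes the proof.

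\textbf{Mutual information and conditional mutual information.} The swap of $\alpha\leftrightarrow\beta$ between the conditional case and \eqref{eq:renyi-fidelity-mutual}, \eqref{eq:renyi-fidelity-cmi} I would accommodate by reversing the splitting: write $\rho^{1/2}\sigma^{1/2}=[\rho^{1/2} X][X^{-1}\sigma^{1/2}]$ and apply H\"older with $(2\beta,2\alpha)$. For \eqref{eq:renyi-fidelity-mutual} I would take $X=(\rho_A\ox\tau_B)^{(1-\beta)/2\beta}$ with $\tau_B$ the minimizer for $\tI_\beta(A;B)_\rho$, so that $\|\rho^{1/2}X\|_{2\beta}$ is exactly the Schatten form of $\tI_\beta(A;B)_\rho$; the assumption $\rho_A=\sigma_A$ together with $\tD_\alpha(\sigma\|\rho_A\ox\tau)\geq\tI_\alpha(A;B)_\sigma$ bounds the other factor (note $(\alpha-1)/2\alpha<0$ flips the direction correctly). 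For \eqref{eq:renyi-fidelity-cmi} I would take $X=\rho_{AC}^{(1-\beta)/2\beta}\rho_C^{(\beta-1)/2\beta}\rho_{BC}^{(1-\beta)/2\beta}$, so that $\|\rho_{ABC}^{1/2}X\|_{2\beta}$ reproduces the definition \eqref{eq:renyi-cmi} of $\tI_\beta(A;B|C)_\rho$ directly, and verify that $X^{-1,\dagger}=\rho_{AC}^{(1-\alpha)/2\alpha}\rho_C^{(\alpha-1)/2\alpha}\rho_{BC}^{(1-\alpha)/2\alpha}$; the matching-marginal hypothesis makes this precisely the operator appearing in $\tI_\alpha(A;B|C)_\sigma$, so no residual optimization remains.

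\textbf{Main obstacle.} The delicate step I foresee is the CMI case, where one must verify both the exponent identity $(\beta-1)/2\beta=(1-\alpha)/2\alpha$ and the dagger-reversal of the three-factor product $\rho_{AC}^{\cdot}\rho_C^{\cdot}\rho_{BC}^{\cdot}$ in order to make the inserted $X$ and $X^{-1,\dagger}$ reproduce the two Schatten expressions defining the CMI on either side. The full-support assumption on $\rho_{BC}$ enters exactly here to guarantee that the fractional inverses in $X^{-1}$ are well-defined, and the matching-marginal hypothesis is what makes the inserted operator common to both $\rho$ and $\sigma$ so that no spurious error term survives after H\"older's inequality.
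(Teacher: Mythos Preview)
Your strategy is essentially the paper's: insert a suitable operator $X$ and its inverse between $\rho^{1/2}$ and $\sigma^{1/2}$, then apply H\"older with the conjugate pair $(2\alpha,2\beta)$ and use $(1-\beta)/2\beta=(\alpha-1)/2\alpha$ to recognise the two factors as the Schatten expressions for $\tD_\alpha$ and $\tD_\beta$.

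The one technical point you gloss over is that your factorisation
\[
\rho_{AB}^{1/2}\sigma_{AB}^{1/2}=\bigl[\rho_{AB}^{1/2}(\one_A\ox\tau_B)^{(1-\alpha)/2\alpha}\bigr]\bigl[(\one_A\ox\tau_B)^{(\alpha-1)/2\alpha}\sigma_{AB}^{1/2}\bigr]
\]
requires $\tau_B^{(1-\alpha)/2\alpha}\tau_B^{(\alpha-1)/2\alpha}=\one_B$, which fails if the minimiser $\tau_B$ for $\tS_\beta(A|B)_\sigma$ (or for $\tI_\beta(A;B)_\rho$ in the mutual-information case) is not full rank; the negative power $(\alpha-1)/2\alpha<0$ then only gives a pseudoinverse, and the product is merely the support projection of $\tau_B$. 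The paper deals with this by replacing $\tau_B$ with the full-rank perturbation $\tau(\eps)_B\coloneqq(1-\eps)\tau_B+\eps\pi_B$, applying your H\"older argument to $\tau(\eps)_B$, controlling the resulting $-\log(1-\eps)$ discrepancy via the domination relation $\tD_\alpha(\omega\|\theta)\leq\tD_\alpha(\omega\|\theta')$ for $\theta\geq\theta'$, and taking $\eps\searrow 0$ at the end. In the CMI case no perturbation is needed precisely because the full-support hypothesis on $\rho_{BC}$ (hence on $\rho_C$) is built into the statement, as you correctly identify.
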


\begin{proof}
We first observe that \eqref{eq:renyi-fidelity-entropy} follows from \eqref{eq:renyi-fidelity-conditional} by setting $B=\mathbb{C}$. 
To prove \eqref{eq:renyi-fidelity-conditional}, let $\tau_{B}$ be an arbitrary density operator, and let $\varepsilon\in\left(  0,1\right) $. 
Furthermore, let
\begin{align*}
\tau(  \eps) _{B}\coloneqq (1-\eps)  \tau
_{B}+\eps\pi_{B},
\end{align*}
and recall that
\begin{align}
\tD_{\alpha}(  \omega\Vert\theta)  \leq\tD_{\alpha}(  \omega\Vert\theta^{\prime})\label{eq:domination-relation}%
\end{align}
holds for all $\alpha\in\left[  1/2,\infty\right] $ and for $\theta\geq \theta^{\prime}\geq0$ \cite[Prop.~4]{MDSFT13}. 
Observe also that for $c>0$ we have 
\begin{align}
\tD_{\alpha}(  \omega\Vert c\theta)  =\tD_{\alpha}(  \omega\Vert\theta)  -\log c.\label{eq:c-relation}%
\end{align}
Consider then the following chain of inequalities:
\begin{align*}
& -\tD_{\alpha}(  \rho_{AB}\Vert I_{A}\otimes
\tau(\eps)  _{B}) +\tD_{\beta
}(  \sigma_{AB}\Vert I_{A}\otimes\tau_{B})  -\log\left(
1-\eps\right)  \nonumber\\
&\qqquad =-\tD_{\alpha}(  \rho_{AB}\Vert I_{A}\otimes\tau(
\eps)  _{B})  +\tD_{\beta}(  \sigma
_{AB}\Vert I_{A}\otimes\left(  1-\eps\right)  \tau_{B})  \\
&\qqquad \geq-\tD_{\alpha}(  \rho_{AB}\Vert I_{A}\otimes\tau(
\eps)  _{B})  +\tD_{\beta}(  \sigma
_{AB}\Vert I_{A}\otimes\tau(  \eps)  _{B})  \\
&\qqquad =\frac{2\alpha}{1-\alpha}\log\left\Vert \rho_{AB}^{1/2}\tau(\eps)  _{B}^{\left(  1-\alpha\right)  /2\alpha}\right\Vert
_{2\alpha}+\frac{2\beta}{\beta-1}\log\left\Vert \tau(\eps)  _{B}^{\left(  1-\beta\right)  /2\beta}\sigma_{AB}^{1/2}\right\Vert
_{2\beta}\\
&\qqquad =\frac{2\alpha}{1-\alpha}\log\left[  \left\Vert \rho_{AB}^{1/2}\tau(
\eps)  _{B}^{\left(  1-\alpha\right)  /2\alpha}\right\Vert
_{2\alpha}\left\Vert \tau(  \eps)  _{B}^{\left(
1-\beta\right)  /2\beta}\sigma_{AB}^{1/2}\right\Vert _{2\beta}\right]  \\
&\qqquad \geq\frac{2\alpha}{1-\alpha}\log  \left\Vert \rho_{AB}^{1/2}%
\tau(\eps)  _{B}^{\left(  1-\alpha\right)  /2\alpha}%
\tau(  \eps)  _{B}^{\left(  1-\beta\right)  /2\beta}%
\sigma_{AB}^{1/2}\right\Vert _{1}  \\
&\qqquad =\frac{2\alpha}{1-\alpha}\log  \left\Vert \rho_{AB}^{1/2}\sigma
_{AB}^{1/2}\right\Vert _{1}  \\
&\qqquad =\frac{2\alpha}{1-\alpha}\log F(  \rho_{AB},\sigma_{AB})  .\numberthis\label{eq:intermediate-result}
\end{align*}
The first equality is an application of (\ref{eq:c-relation}). 
The first inequality is a consequence of (\ref{eq:domination-relation}) and the fact that%
\begin{align*}
\left(  1-\eps\right)  \tau_{B}\leq\left(  1-\eps\right)
\tau_{B}+\eps\pi_{B}=\tau(  \eps)  _{B}.
\end{align*}
The second and third equality follow from the definition of the sandwiched Rényi divergence (see \Cref{def:renyi-entropies}) and the relation 
\begin{align}\label{eq:beta-alpha-relation}
\frac{\beta}{\beta-1}  =\frac{\alpha}{1-\alpha}.
\end{align} 
The second inequality is an application of H\"older's inequality \eqref{eq:hoelder-inequality} (note that $1/2\alpha +1/2\beta=1$). 
The second-to-last equality follows because $\tau( \eps)  _{B}$ is a full rank operator for $\eps\in\left(0,1\right)  $, so that $\tau(\eps)  _{B}^{\left(1-\alpha\right) /2\alpha}\tau(  \eps)  _{B}^{\left(1-\beta\right)  /2\beta}=\one_{B}$. 

Since \eqref{eq:intermediate-result} holds for an arbitrary density operator $\tau_B$, we may choose $\tau_B$ to be the optimizing state for $\tS_\beta(A|B)_\sigma$.
We can then continue from \eqref{eq:intermediate-result} as
\begin{align*}
\frac{2\alpha}{1-\alpha}\log F(  \rho_{AB},\sigma_{AB})    &
\leq-\tD_{\alpha}(  \rho_{AB}\Vert I_{A}\otimes\tau(\eps)  _{B})  +\tD_{\beta}(  \sigma
_{AB}\Vert I_{A}\otimes\tau_{B})  -\log\left(  1-\eps\right)  \\
& \leq  \max_{\omega_{B}\in\mathcal{D}\left(  \mathcal{H}_{B}\right)
}\left\{ -\tD_{\alpha}(  \rho_{AB}\Vert I_{A}\otimes\omega_{B}) \right\}
  +\tD_{\beta}(  \sigma_{AB}\Vert I_{A}\otimes\tau
_{B})  -\log\left(  1-\eps\right)  \\
& =\tS_{\alpha}(  A|B)  _{\rho} - \tS_\beta(A|B)_\sigma  -\log\left(
1-\eps\right).
\end{align*}
We have shown that this relation holds for all $\eps\in\left(0,1\right)  $, and so taking the limit $\eps\searrow 0$ yields the claim.

The bounds \eqref{eq:renyi-fidelity-mutual} and \eqref{eq:renyi-fidelity-cmi} are proved in \Cref{sec:renyi-properties-proofs}.
\end{proof}

The next result concerns the Rényi conditional entropy of c-q states. 
Note that a special case of \eqref{eq:discarding-classical-info} for the Rényi entropies (i.e.~where system $B$ is trivial) appeared in \cite{Sha14}. 

\begin{proposition}\label{prop:renyi-cq-states}
Let $\rho_{ABX}=\sum_{x\in\cX} p_x \rho_{AB}^x\ox |x\rangle\langle x|_X$ be a c-q state. 
Then the following properties hold for all $\alpha>0$:
\begin{enumerate}[{\normalfont (i)}]
\item Monotonicity under discarding classical information:
\begin{align}\label{eq:discarding-classical-info}
\tS_\alpha(AX|B)_\rho \geq \tS_\alpha(A|B)_\rho.
\end{align}

\item Dimension bound:
\begin{align}
\tS_\alpha(A|BX)_\rho + \log |X| &\geq \tS_\alpha(A|B)_\rho,\label{eq:classical-dim-bound}\\
\tI_\alpha(A;BX)_\rho &\leq \log|X| + \tI_\alpha(A;B)_\rho.\label{eq:classical-dim-bound-mutual}
\end{align}
\end{enumerate}
\end{proposition}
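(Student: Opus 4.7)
The plan is to reduce all three inequalities to a single identity obtained by exploiting the block-diagonal structure of $\rho_{ABX}$ in the classical register $X$, then invoking McCarthy's inequalities \eqref{eq:mccarthy-quasinorm} and \eqref{eq:mccarthy-norm}. Concretely, I will first show that for any state $\sigma_B$ and any $\alpha>0$ with $\alpha\neq 1$, the c-q structure of $\rho_{ABX}$ factors the sandwich so that
\begin{align*}
\tD_\alpha(\rho_{ABX}\|\one_{AX}\ox\sigma_B) = \frac{1}{\alpha-1}\log \sum_x \|M_x\|_{\alpha}^\alpha, \qquad M_x := p_x\,\sigma_B^{(1-\alpha)/2\alpha}\rho_{AB}^x \sigma_B^{(1-\alpha)/2\alpha},
\end{align*}
whereas $\tD_\alpha(\rho_{AB}\|\one_A\ox\sigma_B) = \tfrac{1}{\alpha-1}\log\bigl\|\sum_x M_x\bigr\|_\alpha^\alpha$. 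Since each $M_x\geq 0$, McCarthy's inequality relates these two quantities in opposite directions for $\alpha>1$ and $\alpha\in(0,1)$; in either regime the sign of $1/(\alpha-1)$ flips to match, so that
\begin{align*}
\tD_\alpha(\rho_{ABX}\|\one_{AX}\ox\sigma_B) \leq \tD_\alpha(\rho_{AB}\|\one_A\ox\sigma_B).
\end{align*}
Minimizing both sides over $\sigma_B$ immediately yields the monotonicity \eqref{eq:discarding-classical-info} for part (i).

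For the conditional dimension bound \eqref{eq:classical-dim-bound}, I plan to plug the suboptimal ansatz $\sigma_{BX} = \pi_X\ox\sigma_B$ into the variational formula $\tS_\alpha(A|BX)_\rho = -\min_{\sigma_{BX}}\tD_\alpha(\rho_{ABX}\|\one_A\ox\sigma_{BX})$. The scaling identity $\tD_\alpha(\rho\|c\tau) = \tD_\alpha(\rho\|\tau) - \log c$ with $c = 1/|X|$ extracts the $\log|X|$ term, and combining with the single-$\sigma_B$ bound established above gives
\begin{align*}
\tS_\alpha(A|BX)_\rho + \log|X| \geq -\tD_\alpha(\rho_{ABX}\|\one_{AX}\ox\sigma_B) \geq -\tD_\alpha(\rho_{AB}\|\one_A\ox\sigma_B)
\end{align*}
for every $\sigma_B$. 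Maximizing the right-hand side over $\sigma_B$ produces $\tS_\alpha(A|B)_\rho$, as required.

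The mutual information bound \eqref{eq:classical-dim-bound-mutual} is handled by the same template: I plug $\sigma_{BX} = \pi_X\ox\sigma_B$ into $\tI_\alpha(A;BX)_\rho = \min_{\sigma_{BX}}\tD_\alpha(\rho_{ABX}\|\rho_A\ox\sigma_{BX})$, extract $\log|X|$ via the same scaling identity, and repeat the c-q block-diagonal computation with $\rho_A$ in place of $\one_A$ in the second argument, i.e., replace $M_x$ by $\widehat M_x := p_x\bigl(\rho_A^{(1-\alpha)/2\alpha}\ox \sigma_B^{(1-\alpha)/2\alpha}\bigr)\rho_{AB}^x\bigl(\rho_A^{(1-\alpha)/2\alpha}\ox\sigma_B^{(1-\alpha)/2\alpha}\bigr)$, to obtain $\tD_\alpha(\rho_{ABX}\|\rho_A\ox\one_X\ox\sigma_B) \leq \tD_\alpha(\rho_{AB}\|\rho_A\ox\sigma_B)$; minimizing over $\sigma_B$ closes the argument. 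The main care-point throughout is tracking the direction of McCarthy jointly with the sign of $1/(\alpha-1)$; this bookkeeping is what makes all three bounds hold uniformly for $\alpha > 0$, rather than only for $\alpha\geq 1/2$ as in the general dimension bound of \Cref{lem:renyi-quantities}.
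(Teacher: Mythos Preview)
Your proof is correct and, for part~(i) and the first dimension bound \eqref{eq:classical-dim-bound}, follows essentially the same route as the paper: exploit the block-diagonal structure in $X$ to reduce to $\sum_x\|M_x\|_\alpha^\alpha$ versus $\|\sum_x M_x\|_\alpha^\alpha$, apply McCarthy's inequalities, and track the sign of $1/(\alpha-1)$. The paper additionally handles $\alpha=1$ by a one-line remark via the quantum relative entropy; you should mention this case explicitly. For \eqref{eq:classical-dim-bound} the paper quotes \cite[Prop.~8]{MDSFT13} for the step $\tS_\alpha(A|BX)_\rho+\log|X|\geq\tS_\alpha(AX|B)_\rho$ and then invokes (i), whereas you unpack that same step via the ansatz $\sigma_{BX}=\pi_X\otimes\sigma_B$; these are the same argument.

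The one genuine difference is in the mutual-information bound \eqref{eq:classical-dim-bound-mutual}. The paper does not rerun the McCarthy argument with $\rho_A$ in the second slot; instead it uses the identity (from \Cref{sec:renyi-properties-proofs-sub}) $\tI_\alpha(A;BX)_\rho=-\tS_\alpha(A|BX)_{\widetilde\rho}+\tfrac{\alpha}{\alpha-1}\log\tr\{\rho_A^{(1-\alpha)/\alpha}\rho_{AB}\}$, where $\widetilde\rho_{ABX}\propto\rho_A^{(1-\alpha)/2\alpha}\rho_{ABX}\rho_A^{(1-\alpha)/2\alpha}$ is still c-q in $X$, and then applies the already-proved conditional bound \eqref{eq:classical-dim-bound} to $\widetilde\rho$. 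Your direct approach---repeating the block-diagonal/McCarthy computation with $\widehat M_x$---is cleaner and avoids introducing the auxiliary state $\widetilde\rho$; the paper's route has the advantage of deriving the mutual-information bound as a formal corollary of the conditional-entropy bound, so that a single inequality does all the work.
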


\begin{proof}
To prove \eqref{eq:discarding-classical-info}, let $\tau_B$ be the optimizing state for $\tS_\alpha(A|B)_\rho$, and assume $\alpha\neq 1$. We then have
\begin{align*}
\tS_\alpha(AX|B)_\rho & \geq \frac{1}{1-\alpha} \log\tr\left[\left(\tau_B^{(1-\alpha)/2\alpha}\rho_{ABX}\tau_B^{(1-\alpha)/2\alpha}\right)^\alpha\right]\\
&= \frac{1}{1-\alpha} \log\left\lbrace\sumi_{x\in\cX} \tr\left[\left(\tau_B^{(1-\alpha)/2\alpha}p_x\rho_{AB}^x\tau_B^{(1-\alpha)/2\alpha}\right)^\alpha\right]\right\rbrace\\
&= \frac{1}{1-\alpha} \log\left\lbrace\sumi_{x\in\cX} \left\|\tau_B^{(1-\alpha)/2\alpha}p_x\rho_{AB}^x\tau_B^{(1-\alpha)/2\alpha}\right\|_\alpha^\alpha\right\rbrace\\
&\geq \frac{1}{1-\alpha} \log\left\lbrace \left\|\tau_B^{(1-\alpha)/2\alpha}\sumi_{x\in\cX} p_x\rho_{AB}^x\tau_B^{(1-\alpha)/2\alpha}\right\|_\alpha^\alpha\right\rbrace\\
&= \frac{1}{1-\alpha} \log\left\lbrace \left\|\tau_B^{(1-\alpha)/2\alpha}\rho_{AB}\tau_B^{(1-\alpha)/2\alpha}\right\|_\alpha^\alpha\right\rbrace\\
&= \tS_\alpha(A|B)_\rho,
\end{align*}
where the inequality follows from McCarthy's inequalities, using \eqref{eq:mccarthy-quasinorm} for $\alpha <1$ and \eqref{eq:mccarthy-norm} for $\alpha>1$. 
For $\alpha = 1$, the claim follows easily from definition \eqref{eq:quantum-relative-entropy} of the quantum relative entropy.

To prove \eqref{eq:classical-dim-bound}, observe that we have 
\begin{align*}
\tS_\alpha(A|BX)_\rho + \log|X| \geq \tS_\alpha(AX|B)_\rho \geq \tS_\alpha(A|B)_\rho,
\end{align*}
where the first inequality is \cite[Prop.~8]{MDSFT13}, and the second inequality is \eqref{eq:discarding-classical-info}.

Finally, \eqref{eq:classical-dim-bound-mutual} follows from \eqref{eq:classical-dim-bound} and the reasoning in \Cref{sec:renyi-properties-proofs-sub}.
\end{proof}

\section{State redistribution}\label{sec:state-re}

\subsection{The protocol}\label{sec:state-re-protocol}
Consider a tripartite state $\rho_{ABC}$ shared between Alice and Bob, with the systems $A$ and $C$ being with Alice and the system $B$ being with Bob.
Let $\psi_{ABCR}$ denote a purification of $\rho_{ABC}$, with $R$ being an inaccessible, purifying reference system. 
Furthermore, Alice and Bob share entanglement in the form of an MES $\Phi_{T_AT_B}^k$ of Schmidt rank $k$, with the systems $T_A$ and $T_B$ being with Alice and Bob, respectively. 
The goal of the state redistribution protocol is to transfer the system $A$ from Alice to Bob, while preserving its correlations with the other systems.
In the process, the shared entanglement is transformed to an MES $\Phi_{T_A'T_B'}^m$ of Schmidt rank $m$, where $T_A'$ and $T_B'$ are with Alice and Bob, respectively. 
The initial state and the target state are shown in \Cref{fig:state-re-schematic}.

\begin{figure}[t]
\centering
\begin{subfigure}[b]{0.3\textwidth}
\begin{tikzpicture}
\newcommand{\yu}{-2} 
\newcommand{\yuu}{-3} 
\node at (-0.5,-0.5) {$\psi_{ABCR}$};
\coordinate (R1) at (1.5,0);
\fill (R1) circle[radius=1.5pt] node[above right] {$R$};
\coordinate (C1) at (0,\yu);
\fill (C1) circle[radius=1.5pt] node[below left] {$C$};
\coordinate (A) at (1,\yu);
\fill (A) circle[radius=1.5pt] node[below right] {$A$};
\coordinate (B1) at (3,\yu);
\fill (B1) circle[radius=1.5pt] node[below right] {$B$};
\coordinate (A0) at (0,\yuu);
\fill (A0) circle[radius=1.5pt] node[below left] {$T_A$};
\coordinate (B0) at (3,\yuu);
\fill (B0) circle[radius=1.5pt] node[below right] {$T_B$};

\draw (R1) -- (C1);
\draw (R1) -- (A);
\draw (R1) -- (B1);
\draw[dashed] (C1) -- (A);
\draw[snake it] (A0) -- (B0) node[below,pos=0.5] {$\Phi^k$};
\end{tikzpicture}
\caption{Initial state}
\label{fig:init-state}
\end{subfigure}
\qquad\qquad
\begin{subfigure}[b]{0.3\textwidth}
\begin{tikzpicture}
\newcommand{\yu}{-2} 
\newcommand{\yuu}{-3} 
\node at (3.5,-0.5) {$\psi_{A'B'C'R}$};
\coordinate (R1) at (1.5,0);
\fill (R1) circle[radius=1.5pt] node[above right] {$R$};
\coordinate (C1) at (0,\yu);
\fill (C1) circle[radius=1.5pt] node[below left] {$C'$};
\coordinate (Bp) at (2,\yu);
\fill (Bp) circle[radius=1.5pt] node[below right] {$A'$};
\coordinate (B1) at (3,\yu);
\fill (B1) circle[radius=1.5pt] node[below right] {$B'$};
\coordinate (A0) at (0,\yuu);
\fill (A0) circle[radius=1.5pt] node[below left] {$T_A'$};
\coordinate (B0) at (3,\yuu);
\fill (B0) circle[radius=1.5pt] node[below right] {$T_B'$};

\draw (R1) -- (C1);
\draw (R1) -- (Bp);
\draw (R1) -- (B1);
\draw[dashed] (Bp) -- (B1);
\draw[snake it] (A0) -- (B0) node[below,pos=0.5] {$\Phi^m$};
\end{tikzpicture}
\caption{Target state}
\label{fig:target-state}
\end{subfigure}
\caption{State redistribution protocol that transfers Alice's system $A$ to Bob. 
Starting with the initial state $\psi\ox\Phi^k$ depicted in (\subref{fig:init-state}), the protocol outputs a state that is close in fidelity to the target state $\psi\ox\Phi^m$ depicted in (\subref{fig:target-state}).}
\label{fig:state-re-schematic}
\end{figure}

In achieving this goal, Alice and Bob are allowed to use local encoding and decoding operations on the systems in their possession. 
In addition, Alice is allowed to send qubits to Bob (through a noiseless quantum channel). 
A general state redistribution protocol $(\rho,\Lambda)$ with $\Lambda \equiv \cD \circ \cE$ and $\rho=\rho_{ABC}$ consists of the following steps (cf.~\Cref{fig:state-re-protocol}):
\begin{enumerate} 
\item Alice applies an encoding CPTP map $\cE\colon ACT_A\rightarrow C'T_A'Q$ and sends the system $Q$ to Bob through the noiseless quantum channel.

\item Upon receiving the system $Q$, Bob applies a decoding CPTP map $\cD\colon QBT_B\rightarrow T_B'A'B',$ where $T_B'\cong T_A'$ and $A'\cong A$. 
\end{enumerate}
The initial state shared between Alice, Bob, and the reference is $\Phi_{T_AT_B}^k\ox\psi_{ABCR}$, the state after Alice's encoding operation is
\begin{align}\label{eq:state-re-encoded-state}
\omega\equiv \omega_{T_A'T_BC'QBR}\coloneqq (\cE\ox\id_R)(\Phi^k\ox\psi),
\end{align}
and the final state of the protocol $(\rho,\Lambda)$ is given by
\begin{align}\label{eq:state-re-final-state}
\sigma \equiv \sigma_{T_A'T_B'A'B'C'R}\coloneqq (\Lambda\ox\id_R)(\Phi^k\ox\psi) = (\cD\circ\cE\ox\id_R)(\Phi^k\ox\psi).
\end{align}
The aim is to obtain a state $\sigma$ that is close to the target state $\Phi_{T_A'T_B'}^m\ox \psi_{A'B'C'R}$, where $\psi_{A'B'C'R}=\psi_{ABCR}$. 
The figure of merit of the protocol is the fidelity $F(\sigma, \Phi^m\ox\psi).$
The number of qubits that Alice sends to Bob, is given by $\log |Q|$, whereas the number of ebits consumed in the protocol is given by $\log k-\log m= \log |T_A| - \log |T_A'|$.
If $k<m$ then ebits are \emph{gained} in the protocol.

\begin{figure}[t]
\centering
\includegraphics[width=0.7\textwidth]{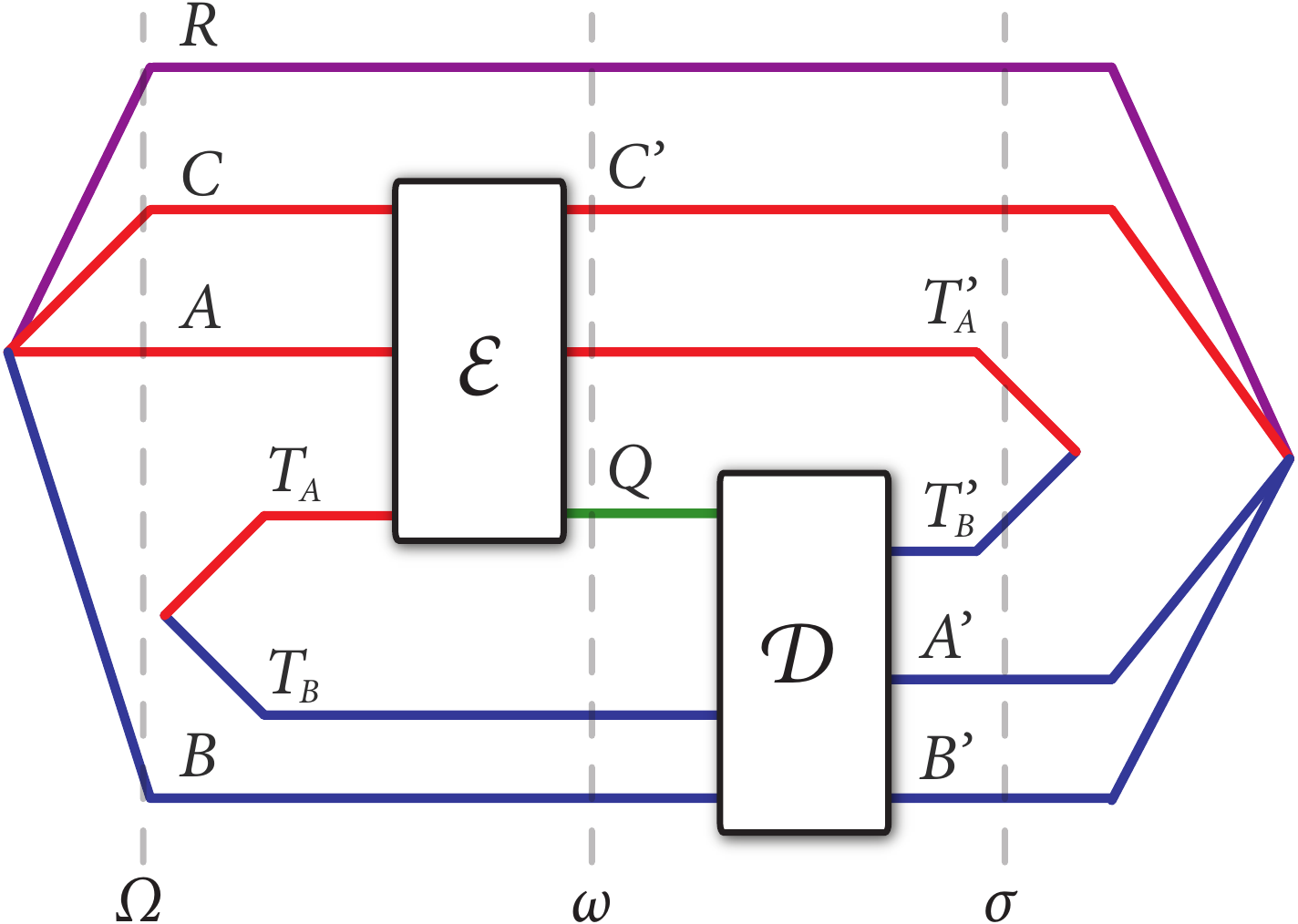}
\caption{State redistribution protocol (see \Cref{sec:state-re-protocol} for a detailed explanation).}
\label{fig:state-re-protocol}
\end{figure}

We consider state redistribution (and all other tasks studied in this paper) in the asymptotic, memoryless setting, where Alice and Bob start with multiple (say, $n$) copies of the initial resource, and the strong converse property is established in the limit $n\rightarrow \infty$.
In this case, Alice and Bob initially share $n$ identical copies of the state $\rho_{ABC}$ with purification $\psi_{ABCR}$, i.e.~they share the state $\rho_{ABC}^{\otimes n}$ with purification $\psi^{\ox n}_{ABCR}$.
Moreover, they share an MES $\Phi^{k_n}_{T_A^nT_B^n}$ of Schmidt rank $k_n$.
We then consider state redistribution protocols $(\rho^{\ox n},\Lambda_n)$ with the figure of merit
\begin{align}\label{eq:state-re-figure-of-merit}
F_n \coloneqq F\big(\sigma_n, \Phi^{m_n}\ox\psi^{\ox n} \big),
\end{align}
where $\Phi^{m_n}_{T_A'^nT_B'^n}$ is an MES of Schmidt rank $m_n$, and $\sigma_n \coloneqq (\Lambda_n\ox\id_{R^n}) (\Phi^{k_n}\ox\psi^{\ox n})$, where $\Lambda_n\colon A^nC^nT_A^n\ox B^nT_B^n\rightarrow C'^nT_A'^n\ox T_B'^nA'^nB'^n$ with $Q^n$ being sent from Alice to Bob.
The two operational quantities of interest are as follows:
\begin{enumerate}
\item The quantum communication cost of the protocol $(\rho^{\ox n},\Lambda_n)$, given by
\begin{align}\label{eq:state-re-qu-comm-cost}
q(\rho^{\ox n}, \Lambda_n) \coloneqq \frac{1}{n} \log |Q^n|.
\end{align}
\item The entanglement cost of the protocol $(\rho^{\ox n},\Lambda_n)$, given by 
\begin{align}\label{eq:state-re-ent-cost}
e(\rho^{\ox n}, \Lambda_n) \coloneqq \frac{1}{n} \left(\log k_n - \log m_n\right) = \frac{1}{n} \left(\log |T_A^n| - \log |T_A'^n|\right).
\end{align}
\end{enumerate}
A pair $(e,q)\in\mathbb{R}^2$ with $q \geq 0$, is said to be an achievable rate pair for state redistribution of a state $\rho_{ABC}$, if there exists a sequence of protocols $\{(\rho^{\ox n},\Lambda_n)\}_{n \in \mathbb{N}}$, satisfying 
\begin{align*}
\limsup_{n \to \infty} q(\rho^{\ox n}, \Lambda_n) &= q, & \limsup_{n \to \infty} e(\rho^{\ox n},\Lambda_n) &= e, & \liminf_{n \to \infty} F_n &= 1.
\end{align*}

Luo and Devetak \cite{LD09} and Yard and Devetak \cite{YD09} (see also Devetak and Yard \cite{DY08}) proved that a pair $(e,q)$ is an achievable rate pair for state redistribution of a state $\rho_{ABC}$ 
if and only if it lies in the region (cf.~\Cref{fig:achievable-rate-region}) defined by
\begin{align}\label{eq:state-re-optimal-rates}
q &\geq \frac{1}{2} I(A;R|B)_\rho, & \quad q + e &\geq S(A|B)_\rho.
\end{align}
Recently, a strong converse theorem for the quantum communication cost was proved by Berta {\em{et al.}} \cite{BCT14v2}, using the smooth entropy framework (cf.~\cite{Ren05,TCR09,Dat09,TCR10} and references therein). 
This theorem, however, did not prove the strong converse property for the entire boundary of the achievable rate region given by \eqref{eq:state-re-optimal-rates}. 
We fill this gap with \Cref{thm:state-re-strong-converse}, as well as provide an alternative proof of the strong converse theorem of \cite{BCT14v2}. 
As mentioned earlier, Berta \etal have now also extended their proof to the entire  boundary of the achievable rate region \cite{BCT14}.

\subsection{Strong converse theorem}\label{sec:state-re-strong-converse}

\begin{lemma}\label{lem:state-re-converse}
Let $\rho\equiv\rho_{ABC}$ be a tripartite state with purification $|\psi_{ABCR}\rangle$, and let $(\rho,\Lambda)$ be a state redistribution protocol where $\Lambda\equiv\cD\circ\cE$ with $\cE\colon ACT_A\rightarrow C'T_A'Q$ and $\cD\colon QBT_B\rightarrow T_B'A'B'$ as defined in \Cref{sec:state-re-protocol}. 
Furthermore, set
\begin{align*}
F &\coloneqq F\left(\Phi_{T_A'T_B'}^m\ox \psi_{A'B'C'R}, (\cD\circ\cE\ox\id_R)\left(\Phi_{T_AT_B}^k\ox\psi_{ABCR}\right)\right).
\end{align*}
Then we have the following bounds on $F$ for $\alpha\in(1/2,1)$ and $\beta \equiv \beta(\alpha)= \alpha/(2\alpha-1)$:
\begin{align}
\log F &\leq \frac{1-\alpha}{2\alpha} \left( \log|Q| + \log|T_A| - \log|T_A'| - S_\beta(AB)_\rho + S_\alpha(B)_\rho \right),\label{eq:state-re-q+e}\\
\log F &\leq \frac{1-\alpha}{2\alpha} \left( 2\log|Q| - \tS_\beta(R|B)_\rho + \tS_\alpha(R|AB)_\rho \right).\label{eq:state-re-q}\\
\intertext{We also have the following alternative bound to \eqref{eq:state-re-q}:}
\log F &\leq \frac{1-\alpha}{2\alpha} \left( 2\log|Q| - \tI_\alpha(R;AB)_\rho + \tI_\beta(R;B)_\rho \right).\label{eq:state-re-q-alt}
\end{align}
\end{lemma}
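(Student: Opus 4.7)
My strategy is to derive each of \eqref{eq:state-re-q+e}, \eqref{eq:state-re-q}, and \eqref{eq:state-re-q-alt} by applying the corresponding fidelity inequality from \Cref{prop:renyi-fidelity}---namely \eqref{eq:renyi-fidelity-entropy} for \eqref{eq:state-re-q+e}, \eqref{eq:renyi-fidelity-conditional} for \eqref{eq:state-re-q}, and \eqref{eq:renyi-fidelity-mutual} for \eqref{eq:state-re-q-alt}---to the actual output $\sigma$ and the target $\tau=\Phi^m_{T_A'T_B'}\otimes \psi_{A'B'C'R}$, restricted to a carefully chosen subsystem. Monotonicity of fidelity under the partial trace then turns each such local bound into a bound on $F$, reducing the problem to (a) evaluating the relevant Rényi quantity on the target (which is always easy because of the product structure of $\tau$) and (b) upper-bounding it on $\sigma$ by the expression advertised on the right-hand side.

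For \eqref{eq:state-re-q+e} I pick a subsystem whose target marginal has $S_\beta$ equal to $S_\beta(AB)_\rho$: the natural candidate is $T_A'T_B'A'B'$, whose target marginal $\Phi^m\otimes \rho_{AB}$ has $S_\beta = 0 + S_\beta(AB)_\rho$ by purity of $\Phi^m$ and additivity (\Cref{prop:renyi-properties}(\ref{item:RE-add})). The task is then to bound $S_\alpha(T_A'T_B'A'B')_\sigma$ from above by $\log|Q|+\log|T_A|-\log|T_A'|+S_\alpha(B)_\rho$. I plan to combine (i) the fact that Alice's encoding $\cE$ leaves $B,T_B,R$ untouched, so $\omega_{BT_B}=\psi_B\otimes \pi_{T_B}$ and hence $S_\alpha(BT_B)_\omega=S_\alpha(B)_\rho+\log|T_A|$; (ii) the subadditivity/dimension bounds of \Cref{lem:renyi-subadditivity}, which supply the factor $\log|Q|$ when enlarging from $BT_B$ to $QBT_B$; and (iii) isometric invariance (\Cref{prop:renyi-properties}(\ref{item:RE-isom})) together with pure-state duality (\Cref{prop:renyi-properties}(\ref{item:RE-duality})), applied to the Stinespring dilations of $\cE$ and $\cD$, in order to push the bound through the decoding while recovering the $-\log|T_A'|$ slack by passing to the complementary subsystem of the purification of $\sigma$, so that the Stinespring-environment dimensions cancel.

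For \eqref{eq:state-re-q} and \eqref{eq:state-re-q-alt} I pick subsystems of the form ``$R$ together with part of Bob's output''. The target marginals then factor as the $\psi$-part on $R$ and Bob's registers, tensored with a maximally mixed state on the auxiliary entanglement register; the product-state invariance \eqref{eq:product-conditional} and \eqref{eq:product-mutual} of \Cref{lem:renyi-quantities} then pin the target conditional entropy, respectively mutual information, to exactly $\tS_\beta(R|B)_\rho$ and $\tI_\beta(R;B)_\rho$. On the actual side, I use data processing of conditional entropy/mutual information (\Cref{prop:renyi-properties}(\ref{item:dpi})) under the Stinespring dilation of $\cD$ to pull back from $\sigma$ to the encoded intermediate $\omega$, then apply the dimension bounds \eqref{eq:dim-bound-conditional}/\eqref{eq:dim-bound-mutual} of \Cref{lem:renyi-quantities} to absorb the message register $Q$. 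These dimension bounds carry a prefactor of $2\log|\cdot|$, which is exactly why $\log|Q|$ appears with coefficient $2$ here rather than $1$. Finally I identify the resulting quantity on $\omega$ with $\tS_\alpha(R|AB)_\rho$ or $\tI_\alpha(R;AB)_\rho$ using that $\cE$ does not touch $R,A,B$ in the relevant marginal. The hypothesis $\rho_R=\sigma_R$ required by \eqref{eq:renyi-fidelity-mutual} is automatic because $R$ is never touched by the protocol.

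The main obstacle I anticipate is arranging the chain of inequalities so that the dimensional factors emerge with exactly the stated coefficients---in particular, recovering $-\log|T_A'|$ in \eqref{eq:state-re-q+e} without residual Stinespring-environment slack. I plan to handle this by alternating the dualities \Cref{prop:renyi-properties}(\ref{item:RE-duality}),(\ref{item:RE-cond-duality}) with the dimension bounds of \Cref{lem:renyi-subadditivity,lem:renyi-quantities}, using duality to replace an output subsystem by its complement (which contains the Stinespring environments together with the subsystems that are traced out anyway) just before applying a dimension bound, so that each spurious environment factor introduced on one side is matched by a traced-out factor on the other and cancels in the final estimate.
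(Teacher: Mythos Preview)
Your plan for \eqref{eq:state-re-q} and \eqref{eq:state-re-q-alt} is essentially the paper's: apply \eqref{eq:renyi-fidelity-conditional} (resp.~\eqref{eq:renyi-fidelity-mutual}) on the marginal $RA'B'$, so that the $\alpha$-term sits on the \emph{target} $\psi_{RA'B'}$ and yields $\tS_\alpha(R|AB)_\rho$ (resp.~$\tI_\alpha(R;AB)_\rho$), while the $\beta$-term sits on the \emph{actual} output $\sigma$ and is then pushed back to $\omega$ via data processing under $\cD$, a $2\log|Q|$ dimension bound, and the product rule on $\omega_{RBT_B}=\psi_{RB}\otimes\pi_{T_B}$. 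Two small slips in your write-up: the target produces the $\alpha$-quantity and the actual produces the $\beta$-quantity (you have them swapped), and $\cE$ \emph{does} act on $A$---what makes the last identification work is that $\cE$ leaves $R,B,T_B$ untouched, so $\omega_{RBT_B}=\psi_{RB}\otimes\pi_{T_B}$.

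For \eqref{eq:state-re-q+e}, however, there is a genuine gap. Your choice of subsystem $T_A'T_B'A'B'$ forces you to prove
\[
S_\alpha(T_A'T_B'A'B')_\sigma \leq \log|Q|+\log|T_A|-\log|T_A'|+S_\alpha(B)_\rho,
\]
and this inequality is simply false in general: take $A,B,C,T_A,Q$ all trivial and $|T_A'|=m>1$. Any protocol then outputs a product state $\sigma_{T_A'T_B'}$, so $S_\alpha(T_A'T_B'A'B')_\sigma\geq 0$, while the right-hand side equals $-\log m<0$. No amount of juggling dualities will rescue this, because the intermediate bound you are aiming for is not true. The paper avoids this by choosing the subsystem $T_B'A'B'E_2$, where $E_2$ is the Stinespring environment of $\cD$, and invoking Uhlmann's theorem to extend the target by a pure state $\phi_{E_1E_2}$. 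Then (i) $S_\alpha(T_B'A'B'E_2)_\sigma=S_\alpha(QBT_B)_\omega$ by isometric invariance under $U_\cD$, which gives the clean upper bound $\log|Q|+\log|T_A|+S_\alpha(B)_\rho$; and (ii) the target marginal is $\pi^m_{T_B'}\otimes\rho_{A'B'}\otimes\phi_{E_2}$, so that $S_\beta$ is bounded \emph{below} by $\log|T_A'|+S_\beta(AB)_\rho$. The $\log|T_A'|$ thus appears with a plus sign on the target side rather than as a minus sign on the actual side, and the environment contributes only the harmless nonnegative term $S_\beta(\phi_{E_2})\geq 0$.
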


\begin{proof}
We first prove \eqref{eq:state-re-q+e}. 
Denote by $U_\cE\colon \cH_{ACT_A}\rightarrow\cH_{C'T_A'QE_1}$ and $U_\cD\colon \cH_{QBT_B}\rightarrow \cH_{T_B'A'B'E_2}$ the Stinespring isometries of the maps $\cE$ and $\cD$ respectively, and define the pure states
\begin{align}
|\omega_{C'T_A'QT_BBRE_1} \rangle &\coloneqq U_\cE \left(\big|\Phi_{T_AT_B}^k\big\rangle\ox |\psi_{ABCR}\rangle\right),\label{eq:omega}\\
|\sigma_{T_A'T_B'A'B'C'RE_1E_2}\rangle &\coloneqq U_\cD |\omega_{C'T_A'QT_BBRE_1}\rangle,\label{eq:sigma}
\end{align}
that purify the mixed states $\omega$ and $\sigma$ defined in \eqref{eq:state-re-encoded-state} and \eqref{eq:state-re-final-state}, respectively.
We then have 
\begin{align}
S_\alpha(QT_BB)_\omega \leq \log|Q| + \log |T_A| + S_\alpha(B)_\rho,\label{eq:state-re-dim-bound}
\end{align}
where we used the subadditivity of the Rényi entropies (\Cref{lem:renyi-subadditivity}) twice, as well as the fact that $T_A$ is the same size as $T_B$. 
For the fidelity $F \coloneqq F(\Phi_{T_A'T_B'}^m\ox \psi_{A'B'C'R}, \sigma_{T_A'T_B'A'B'C'R})$, we know by Uhlmann's theorem that there exists a pure state $\phi_{E_1E_2}$ such that
\begin{align*}
F &= F\left(\Phi_{T_A'T_B'}^m\ox \psi_{A'B'C'R} \ox \phi_{E_1E_2}, \sigma_{T_A'T_B'A'B'C'RE_1E_2}\right)\\
&\leq F(\pi^m_{T_B'}\ox\rho_{A'B'}\ox\phi_{E_2},\sigma_{T_B'A'B'E_2}),
\end{align*}
where $|\sigma_{T_A'T_B'A'B'C'RE_1E_2}\rangle$ is the pure state defined in \eqref{eq:sigma}. 
The inequality follows from the monotonicity of the fidelity under partial trace. 
Hence, by \cref{eq:renyi-fidelity-entropy} of \Cref{prop:renyi-fidelity} we obtain the following bound, setting $\beta=\alpha/(2\alpha-1)$:
\begin{align}
S_\alpha(QT_BB)_\omega &= S_\alpha(T_B'A'B'E_2)_\sigma\notag\\
&\geq S_\beta(T_B'A'B'E_2)_{\pi^m\ox\rho\ox\phi} + \frac{2\alpha}{1-\alpha}\log F\notag\\
&\geq \log|T_A'| + S_\beta(A'B')_\rho + \frac{2\alpha}{1-\alpha}\log F,\label{eq:state-re-fidelity-bound}
\end{align}
where we used the invariance of the Rényi entropies under the isometry $U_\cD$ (\Cref{prop:renyi-properties}) in the first equality, \cref{eq:renyi-fidelity-entropy} of \Cref{prop:renyi-fidelity} in the first inequality, and additivity and positivity of the Rényi entropies (\Cref{prop:renyi-properties}(\ref{item:RE-add}) and (\ref{item:RE-dim-bound})) in the second inequality. 
Combining \eqref{eq:state-re-dim-bound} and \eqref{eq:state-re-fidelity-bound} then yields
\begin{align*} 
\log|Q| + \log|T_A| + S_\alpha(B)_\rho &\geq \log|T_A'| + S_\beta(AB)_\rho + \frac{2\alpha}{1-\alpha}\log F,
\end{align*}
which is equivalent to \eqref{eq:state-re-q+e}.

To prove \eqref{eq:state-re-q}, we first note that from \cref{eq:renyi-fidelity-conditional} of \Cref{prop:renyi-fidelity} we have the inequality
\begin{align}\label{eq:cond-bound}
\frac{2\alpha}{1-\alpha} \log F(\rho_{ABR},\sigma_{A'B'R}) \leq \tS_\alpha(R|AB)_\rho - \tS_\beta(R|A'B')_\sigma.
\end{align}
We bound the second term on the right-hand side of \eqref{eq:cond-bound} as follows:
\begin{align*}
- \tS_\beta(R|A'B')_\sigma &\leq - \tS_\beta(R|QBT_B)_\omega\\
&\leq - \tS_\beta(R|BT_B)_\omega + 2\log \vert Q\vert\\
&= - \tS_\beta(R|B)_\omega + 2\log \vert Q\vert\\
&= - \tS_\beta(R|B)_\rho + 2\log \vert Q\vert,\numberthis\label{eq:2nd-term-bound}
\end{align*}
where we used data processing (\Cref{prop:renyi-properties}(\ref{item:dpi})) in the first inequality, and \cref{eq:dim-bound-conditional} and \eqref{eq:product-conditional} of \Cref{lem:renyi-quantities} in the second inequality and the first equality, respectively. 
Substituting \eqref{eq:2nd-term-bound} in \eqref{eq:cond-bound} now yields \eqref{eq:state-re-q}.

The bound \eqref{eq:state-re-q-alt} follows from similar arguments as those used for the proof of \eqref{eq:state-re-q}, relying on \cref{eq:dim-bound-mutual} and \eqref{eq:product-mutual} of \Cref{lem:renyi-quantities} and \cref{eq:renyi-fidelity-mutual} of \Cref{prop:renyi-fidelity} instead. 
We therefore omit an explicit proof.
\end{proof}

\Cref{lem:state-re-converse} immediately implies the following strong converse theorem:

\begin{theorem}[Strong converse for state redistribution]\label{thm:state-re-strong-converse}
Let $\rho\equiv\rho_{ABC}$ be a tripartite state and $\lbrace(\rho^{\ox n},\Lambda_n)\rbrace_{n\in\mathbb{N}}$ be a sequence of state redistribution protocols as described in \Cref{sec:state-re-protocol}, with figure of merit $F_n$ as defined in \eqref{eq:state-re-figure-of-merit}.
Then for all $n\in\mathbb{N}$ we have the following bounds on $F_n$ for $\alpha\in(1/2,1)$ and $\beta=\alpha/(2\alpha-1)$:
\begin{align}
F_n &\leq \exp\left\lbrace - n\kappa(\alpha) \left[S_\beta(AB)_\rho - S_\alpha(B)_\rho - \left(q+e\right) \right]\right\rbrace,\label{eq:state-re-q+e-n}\\
F_n &\leq \exp\left\lbrace - n\kappa(\alpha) \left[\tS_\beta(R|B)_\rho - \tS_\alpha(R|AB)_\rho - 2q\right] \right\rbrace,\label{eq:state-re-q-n}
\end{align}
where $\kappa(\alpha)=(1-\alpha)/(2\alpha)$, and $q\equiv q(\rho^{\ox n},\Lambda_n)$ and $e\equiv e(\rho^{\ox n},\Lambda_n)$ are the quantum communication cost and entanglement cost defined in \eqref{eq:state-re-qu-comm-cost} and \eqref{eq:state-re-ent-cost}, respectively. 
As an alternative to \eqref{eq:state-re-q-n}, we also obtain the bound
\begin{align}
F_n &\leq \exp\left\lbrace - n\kappa(\alpha) \left[\tI_\alpha(R;AB)_\rho - \tI_\beta(R;B)_\rho - 2q\right] \right\rbrace. \label{eq:state-re-q-n-alt}
\end{align}
\end{theorem}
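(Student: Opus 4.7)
The plan is to obtain the theorem as a direct tensorization of \Cref{lem:state-re-converse}. Specifically, for each $n$, I would apply the lemma to the protocol $(\rho^{\otimes n},\Lambda_n)$, viewing $\rho^{\otimes n}_{A^nB^nC^n}$ as the tripartite input state (with the composite systems $A^n$, $B^n$, $C^n$ playing the roles of $A$, $B$, $C$), $\psi^{\otimes n}_{ABCR}$ as its purification, and $T_A^n,T_B^n,T_A'^n,T_B'^n,Q^n$ playing the roles of $T_A,T_B,T_A',T_B',Q$. This immediately converts \cref{eq:state-re-q+e,eq:state-re-q,eq:state-re-q-alt} into bounds on $\log F_n$ with $\rho^{\otimes n}$ appearing throughout.

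The second step is to single-letterize the Rényi entropic quantities using the additivity statements of \Cref{prop:renyi-properties}(\ref{item:RE-add}). For \eqref{eq:state-re-q+e-n} this requires only additivity of the Rényi entropy, which holds for all $\alpha>0$, so $S_\beta(A^nB^n)_{\rho^{\otimes n}}=nS_\beta(AB)_\rho$ and $S_\alpha(B^n)_{\rho^{\otimes n}}=nS_\alpha(B)_\rho$. For \eqref{eq:state-re-q-n} and \eqref{eq:state-re-q-n-alt} I would invoke additivity of the Rényi conditional entropy and Rényi mutual information respectively; the hypothesis $\alpha\ge 1/2$ needed there is satisfied since $\alpha\in(1/2,1)$ forces $\beta=\alpha/(2\alpha-1)>1$, so both parameters lie in the valid range. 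The dimensional terms collapse via the definitions \eqref{eq:state-re-qu-comm-cost} and \eqref{eq:state-re-ent-cost}, namely $\log|Q^n|=nq$ and $\log|T_A^n|-\log|T_A'^n|=ne$.

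Combining these two steps, each of the single-shot bounds becomes $\log F_n\le -n\kappa(\alpha)\,X$ with $\kappa(\alpha)=(1-\alpha)/(2\alpha)$ and $X$ the pertinent single-letter gap (for example, $X=S_\beta(AB)_\rho-S_\alpha(B)_\rho-(q+e)$ for the first inequality). Exponentiating then yields \eqref{eq:state-re-q+e-n}, \eqref{eq:state-re-q-n}, and \eqref{eq:state-re-q-n-alt} respectively.

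In truth there is no substantial obstacle here beyond what was already resolved in proving \Cref{lem:state-re-converse}: the argument is essentially a bookkeeping exercise. The one point worth double-checking is that the additivity of $\tS_\alpha$ and $\tI_\alpha$ applies at \emph{both} orders $\alpha$ and $\beta(\alpha)$ simultaneously, which is why restricting to $\alpha\in(1/2,1)$ (hence $\beta\in(1,\infty)$) is exactly the right range to quote \Cref{prop:renyi-properties}(\ref{item:RE-add}) without further justification.
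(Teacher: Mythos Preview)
Your proposal is correct and takes essentially the same approach as the paper, which simply states that \Cref{lem:state-re-converse} ``immediately implies'' \Cref{thm:state-re-strong-converse} without spelling out the details. You have correctly identified and articulated the two ingredients the paper leaves implicit: applying the one-shot lemma to the $n$-fold protocol, and invoking the additivity properties of \Cref{prop:renyi-properties}(\ref{item:RE-add}) (together with the observation that both $\alpha\in(1/2,1)$ and $\beta(\alpha)\in(1,\infty)$ lie in the range $[1/2,\infty)$ required for additivity of $\tS_\alpha$ and $\tI_\alpha$).
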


We have 
\begin{subequations}\label{eq:state-re-renyi-convergence}
\begin{align}
S_\beta(AB)_\rho - S_\alpha(B)_\rho &\xrightarrow{\alpha\rightarrow 1} S(A|B)_\rho,\\ 
\tS_\beta(R|B)_\rho - \tS_\alpha(R|AB)_\rho &\xrightarrow{\alpha\rightarrow 1} I(A;R|B)_\rho,\label{eq:state-re-renyi-convergence-b}
\end{align}
\end{subequations}
where \eqref{eq:state-re-renyi-convergence-b} follows from a similar argument as in \cite[Lem.~10]{CMW14}. 
Moreover, the right-hand sides of \eqref{eq:state-re-renyi-convergence} determine the boundary of the region of achievable rate pairs $(e,q)$ given by \eqref{eq:state-re-optimal-rates}. 
Hence, we obtain the following strong converse theorem along the same lines as at the end of \Cref{sec:strong-converse-property}: 
If $q+e< S(A|B)_\rho$ or $q < \onehalf I(A;R|B)_\rho$, there is a constant $K>0$ such that
\begin{align*}
F_n \leq \exp(-nK).
\end{align*}
For the remainder of the paper, we will skip this last step, and merely state strong converse theorems in the form of \Cref{thm:state-re-strong-converse}.

\subsection{Rényi generalizations of the conditional mutual information}

We can regard the expressions $\tS_\beta(R|B)_\rho - \tS_\alpha(R|AB)_\rho$ and $\tI_\alpha(R;AB)_\rho - \tI_{\beta}(R;B)_\rho$ appearing in the bounds on the fidelity in \Cref{lem:state-re-converse} and \Cref{thm:state-re-strong-converse} as Rényi generalizations of the conditional mutual information $I(A;R|B)_\rho$ (see \cite{BSW15} for a detailed discussion of this concept). 
More generally, for a tripartite state $\rho_{ABC}$ and Rényi parameters $\alpha\geq 1/2$ and $\beta\equiv\beta(\alpha)= \alpha/(2\alpha-1)$, we define
\begin{subequations}\label{eq:renyi-cmi-with-cond}
\begin{align}
\tI^{(1)}_\alpha(A;B|C)_\rho &\coloneqq \tS_\alpha(A|C)_\rho - \tS_{\beta}(A|BC)_\rho,\\
\tI^{(2)}_\alpha(A;B|C)_\rho &\coloneqq \tI_\alpha(A;BC)_\rho - \tI_{\beta}(A;C)_\rho.\label{eq:renyi-cmd-mutual}
\end{align}
\end{subequations}
These quantities satisfy the following properties:
\begin{proposition}\label{prop:renyi-gen-properties}
Let $\rho_{ABC}$ be a tripartite state and $\alpha\geq 1/2$. The quantities $\tI^{(i)}_\alpha(A;B|C)_\rho$, defined by \eqref{eq:renyi-cmi-with-cond} for $i=1,2$, satisfy:
\begin{enumerate}[{\normalfont (i)}]
\item Rényi generalization of conditional mutual information: 
\begin{align*}
\lim_{\alpha\rightarrow 1}\tI^{(i)}_\alpha(A;B|C)_\rho = I(A;B|C)_\rho.
\end{align*}

\item Monotonicity in $\alpha$: For $1/2\leq \alpha\leq\alpha'$, we have
\begin{align*}
\tI^{(1)}_\alpha(A;B|C)_\rho &\geq \tI^{(1)}_{\alpha'}(A;B|C)_\rho & \tI^{(2)}_\alpha(A;B|C)_\rho &\leq \tI^{(2)}_{\alpha'}(A;B|C)_\rho.
\end{align*}

\item Data processing inequality on the $B$ system: Let $\Lambda\colon B\rightarrow B'$ be a CPTP map and define $\sigma_{AB'C} = (\id_{AC}\ox\Lambda)(\rho_{ABC})$, then
\begin{align*}
\tI^{(i)}_\alpha(A;B|C)_\rho \geq \tI^{(i)}_\alpha(A;B'|C)_\sigma.
\end{align*}

\item Duality: Let $\rho_{ABCD}$ be a purification of $\rho_{ABC}$, then
\begin{align*}
\tI^{(1)}_\alpha(A;B|C)_\rho = \tI^{(1)}_\alpha(A;B|D)_\rho.
\end{align*}
\end{enumerate}
\end{proposition}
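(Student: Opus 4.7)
The plan is to derive each of the four properties by reducing to facts already established for the sandwiched Rényi conditional entropy and mutual information, exploiting two structural features of the parameter pair $(\alpha,\beta)$: the relation $\beta=\alpha/(2\alpha-1)$ is equivalent to the Rényi duality condition $1/\alpha+1/\beta=2$, and $\alpha\mapsto\beta(\alpha)$ is a strictly decreasing involution on $(1/2,\infty)$ satisfying $\beta(1)=1$. These will be invoked repeatedly without further comment.

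For property (i), I would apply the limit \eqref{eq:D-alpha-limit} together with continuity of the minimizations defining $\tS_\alpha$ and $\tI_\alpha$ (a standard argument of the type used for \eqref{eq:state-re-renyi-convergence-b}, cf.~\cite[Lem.~10]{CMW14}) to conclude that $\tS_\alpha(A|B)_\rho\to S(A|B)_\rho$ and $\tI_\alpha(A;B)_\rho\to I(A;B)_\rho$ as $\alpha\to 1$. Since $\beta\to 1$ simultaneously, both $\tI^{(1)}_\alpha$ and $\tI^{(2)}_\alpha$ tend to $S(A|C)_\rho-S(A|BC)_\rho=I(A;BC)_\rho-I(A;C)_\rho=I(A;B|C)_\rho$. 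For property (ii), the key input is Proposition \ref{prop:renyi-properties}(\ref{item:RE-mon-alpha}): monotonicity of $\tD_\alpha$ in $\alpha$ is preserved by the minimization over the second argument, so $\tS_\alpha(A|B)_\rho$ is non-increasing and $\tI_\alpha(A;B)_\rho$ is non-decreasing in $\alpha$. Since $\beta(\alpha)$ is strictly decreasing, $\alpha\mapsto\tS_{\beta(\alpha)}(A|BC)_\rho$ is non-decreasing and $\alpha\mapsto\tI_{\beta(\alpha)}(A;C)_\rho$ is non-increasing, and the claimed monotonicities of $\tI^{(1)}$ and $\tI^{(2)}$ follow by subtracting in \eqref{eq:renyi-cmi-with-cond}.

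For property (iii), the crucial observation is that the CPTP map $\Lambda\colon B\to B'$ leaves $\rho_{AC}$ invariant, so $\tS_\alpha(A|C)_\rho=\tS_\alpha(A|C)_\sigma$ and $\tI_\beta(A;C)_\rho=\tI_\beta(A;C)_\sigma$ and only the remaining terms in \eqref{eq:renyi-cmi-with-cond} need to be controlled. Applying the data processing inequalities in Proposition \ref{prop:renyi-properties}(\ref{item:dpi}) to the channel $\id_A\otimes\id_C\otimes\Lambda$ yields $\tS_\beta(A|BC)_\rho\leq\tS_\beta(A|B'C)_\sigma$ and $\tI_\alpha(A;BC)_\rho\geq\tI_\alpha(A;B'C)_\sigma$, which upon substitution establish the DPI for both $\tI^{(1)}$ and $\tI^{(2)}$.

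For property (iv), I would apply the conditional-entropy duality of Proposition \ref{prop:renyi-properties}(\ref{item:RE-cond-duality}) twice to the purification $|\rho_{ABCD}\rangle$. First, using the tripartition $A\,|\,C\,|\,BD$ gives $\tS_\alpha(A|C)_\rho=-\tS_\beta(A|BD)_\rho$. Then, using the tripartition $A\,|\,BC\,|\,D$ and noting that $\alpha$ and $\beta$ are each other's duals under $1/\alpha+1/\beta=2$, one obtains $\tS_\beta(A|BC)_\rho=-\tS_\alpha(A|D)_\rho$. Subtracting these two identities turns the difference defining $\tI^{(1)}_\alpha(A;B|C)_\rho$ into the one defining $\tI^{(1)}_\alpha(A;B|D)_\rho$. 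None of the four steps looks deep; the only place where a careful bookkeeping is needed is the duality step (iv), where one must track which parameter sits in which entropy after each swap.
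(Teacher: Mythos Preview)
Your proposal is correct and follows essentially the same approach as the paper's own proof, which is extremely terse: it cites \cite[Lem.~8]{CMW14} for (i), invokes monotonicity of $\tD_\alpha$ in $\alpha$ together with the fact that $\beta(\alpha)$ is decreasing for (ii), calls (iii) ``straightforward,'' and points to the conditional-entropy duality of \Cref{prop:renyi-properties}(\ref{item:RE-cond-duality}) for (iv). You have correctly supplied the details the paper omits, in particular the two applications of duality in (iv) and the bookkeeping that the involution $\alpha\leftrightarrow\beta$ swaps the parameters back.
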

\begin{proof}
Property (i) follows from \cite[Lem.~8]{CMW14}.
For (ii), note that $\beta(\alpha)=\alpha/(2\alpha-1)$ is a decreasing function. 
The assertion then follows from the monotonicity in $\alpha$ of the sandwiched Rényi divergence, \Cref{prop:renyi-properties}(\ref{item:RE-mon-alpha}).\footnote{Note that the quantity $S_\beta(AB)_\rho - S_\alpha(B)_\rho$ that appears in \Cref{thm:state-re-strong-converse} is monotonic in $\alpha$ for the same reason.} 
Property (iii) is straightforward, and (iv) is obtained by employing duality (\Cref{prop:renyi-properties}(\ref{item:RE-cond-duality})).
\end{proof}

\subsection{State redistribution with feedback}\label{sec:feedback}
In this section, we consider state redistribution with feedback \cite{BCT14v2}, where the state redistribution protocol consists of $M$ rounds of forward and backward quantum communication between Alice and Bob. 
The initial state of the protocol is again the pure state $\psi_{ABCR}\ox\Phi_{T_AT_B}^k$, where systems $A$ and $C$ are with Alice, $B$ is with Bob, $R$ is an inaccessible reference system, and $\Phi_{T_AT_B}^k$ is an MES of Schmidt rank $k$ shared between Alice ($T_A$) and Bob ($T_B$). 
As before, the goal is for Alice to transfer the $A$ system to Bob, while preserving its correlations with the other systems.

The main difference with single-round state redistribution as described in \Cref{sec:state-re-protocol} is that now backward quantum communication from Bob to Alice is possible.
Furthermore, we allow for $M$ rounds of communication, in the following way (cf.~\Cref{fig:state-re-fb-protocol}): 
Alice first applies an encoding operation $\cE_1\colon ACT_A\rightarrow Q_1A_1$ to the initial state and sends $Q_1$ to Bob, who applies a decoding operation $\cD_1\colon Q_1BT_B\rightarrow Q_1'B_1$. 
The system $Q_1'$ is the quantum communication register that he sends back to Alice. 
She then applies the encoding $\cE_2\colon Q_1'A_1\rightarrow Q_2A_2$ and sends $Q_2$ to Bob, who applies the decoding $\cD_2\colon Q_2B_1\rightarrow Q_2'B_2$ and sends $Q_2'$ back, and so forth. 
In the $i$-th round, we denote by $\omega^i$ and $\sigma^i$ the states shared between Alice, Bob, and the reference, after applying the encoding $\cE_i$ and decoding $\cD_i$, respectively. 
In the final round, Alice applies the encoding $\cE_M\colon Q_{M-1}'A_{M-1}\rightarrow Q_M C' T_{A}'$ and sends $Q_M$ to Bob, who applies the decoding $\cD_M\colon Q_M B_{M-1}\rightarrow A'B'T_B'$.
The protocol succeeds if the final state is close in fidelity to the pure target state $\psi_{A'B'C'R}\ox \Phi^m_{T_A'T_B'}$, where $\psi_{A'B'C'R}=\psi_{ABCR}$ and $\Phi^m_{T_A'T_B'}$ is an MES of Schmidt rank $m$ (for some $m\in\mathbb{N}$) shared between Alice and Bob.

\begin{figure}[t]
\centering
\includegraphics[width=\textwidth]{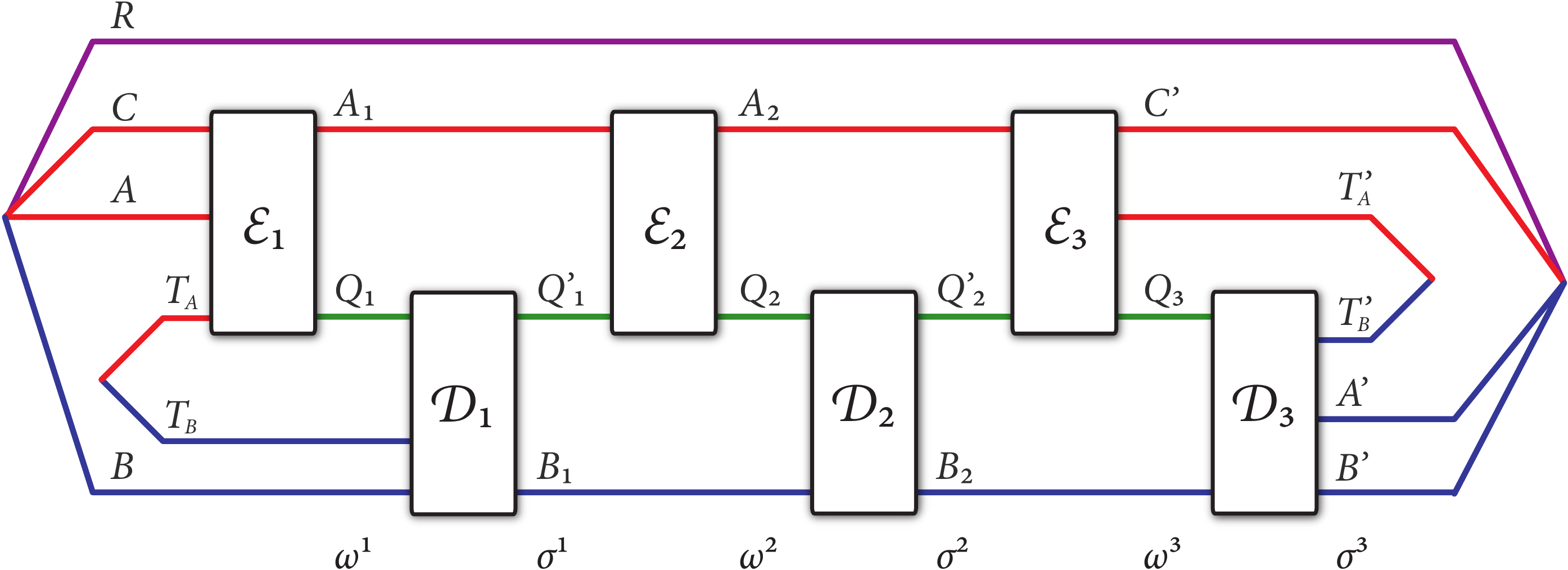}
\caption{State redistribution protocol with feedback for $M=3$ (see \Cref{sec:feedback} for a detailed description).}
\label{fig:state-re-fb-protocol}
\end{figure}

For a protocol acting on a many-copy initial state $\psi_{ABCR}^{\ox n}\ox\Phi^{k_n}_{T_AT_B}$, we define the entanglement cost $e$, the forward quantum communication $\qfw$, and the total quantum communication $\qtot$ (equal to forward plus backward communication) as
\begin{align}
e &\coloneqq \frac{1}{n} \left(\log|T_A^n| - \log |T_A'^n|\right),\label{eq:state-re-fb-ent-cost}\\
\qfw &\coloneqq \frac{1}{n} \sumi_{i=1}^M \log|Q_i^n|,\label{eq:state-re-fb-qfw}\\
\qtot &\coloneqq \qfw + \frac{1}{n} \sumi_{i=1}^{M-1} \log |Q_i'^n|.\label{eq:state-re-fb-qtot}
\end{align}
Using the smooth entropy framework, the authors in \cite{BCT14v2} proved that the achievable region for $e$, $\qfw$, and $\qtot$ is determined by the following conditions:
\begin{align}\label{eq:state-re-fb-optimal-rates}
\qfw &\geq \onehalf I(A;R|B)_\psi, & \qtot + e \geq S(A|B)_\psi.
\end{align}
We first note that both conditions are independent of $M$, the number of rounds.
Let us also compare \eqref{eq:state-re-fb-optimal-rates} to the conditions for single-round state redistribution in \eqref{eq:state-re-optimal-rates}. 
The achievable region for forward quantum communication coincides with that of \eqref{eq:state-re-optimal-rates}, as shown in \cite{BCT14v2}. 
However, for state redistribution with feedback the condition for the entanglement cost involves the \emph{total} quantum communication between the two parties. 
This can be understood to arise from the fact that quantum communication from Bob to Alice can introduce additional entanglement between them. 
In comparison to a single-round state redistribution protocol without feedback, the optimal rate of the overall entanglement cost $e$ is \emph{lowered} by the amount of backward quantum communication, since we have $\qtot\geq \qfw$.

Using the Rényi entropy method, we derive a strong converse theorem for state redistribution with feedback, \Cref{thm:state-re-fb-strong-converse} below. 
It follows from \Cref{lem:state-re-fb-sc}, which we state and prove in \Cref{sec:feedback-proofs}. 
The strong converse for state redistribution with feedback originally appeared in \cite{BCT14v2}. 
However, in contrast to \cite{BCT14v2} our proof method yields a Rényi entropic quantity as an explicit exponent in the strong converse bound. 

\begin{theorem}[Strong converse for state redistribution with feedback]\label{thm:state-re-fb-strong-converse}
Let $\Lambda_n$ denote a state redistribution protocol with feedback for a pure state $\psi_{ABCR}^{\ox n}$, as described above.
Setting
\begin{align*}
F_n \coloneqq F\left( \psi_{A'B'C'R}^{\ox n}\ox\Phi^{m_n}_{T_A'T_B'}, (\Lambda_n\ox \id_{R^n})\left(\psi_{ABCR}^{\ox n}\ox\Phi^{k_n}_{T_AT_B}\right) \right),
\end{align*}
we then have the following bounds on $F_n$ for $\alpha\in(1/2,1)$ and $\beta=\alpha/(2\alpha-1)$:
\begin{align}
F_n &\leq \exp\left\lbrace -n\kappa(\alpha) \left[S_\beta(AB)_\psi - S_\alpha(B)_\psi - (\qtot + e)\right] \right\rbrace,\\
F_n &\leq \exp\left\lbrace-n\kappa(\alpha) \left[\tS_\beta(R|B)_\psi - \tS_\alpha(R|AB)_\psi - 2\qfw \right] \right\rbrace,\\
F_n &\leq \exp\left\lbrace-n\kappa(\alpha) \left[\tI_\alpha(R;AB)_\psi - \tI_\beta(R;B)_\psi  - 2\qfw\right] \right\rbrace,
\end{align}
where $\kappa(\alpha) = (1-\alpha)/(2\alpha)$, and $e$, $\qfw$, and $\qtot$ are the entanglement cost \eqref{eq:state-re-fb-ent-cost}, forward quantum communication cost \eqref{eq:state-re-fb-qfw}, and total quantum communication cost \eqref{eq:state-re-fb-qtot}, respectively.
\end{theorem}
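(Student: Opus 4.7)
The plan is to prove the analogous statement for a general single-shot protocol with feedback and then specialize to the $n$-fold state $\psi_{ABCR}^{\ox n}$, invoking additivity of the R\'enyi quantities (\Cref{prop:renyi-properties}(\ref{item:RE-add})). The strategy is to mirror each of the three arguments behind \Cref{lem:state-re-converse}, replacing the single application of each inequality by a telescoping iteration over the $M$ rounds. For bookkeeping I would dilate every $\cE_i$ and every $\cD_i$ to its Stinespring isometry, placing each ancilla environment on the side (Alice's or Bob's) that performed the operation; the resulting global evolution is then an isometry that acts as the identity on the reference system $R$.

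For the bound involving $\qtot+e$ I would mimic the proof of \eqref{eq:state-re-q+e}: applying Uhlmann's theorem and monotonicity of fidelity under partial trace gives $F_n\leq F\big(\pi^{m_n}_{T_B'^n}\ox\rho_{A'B'}^{\ox n}\ox\phi_{\mathrm{env}_B},\,\tilde\sigma_{T_B'^n A'^n B'^n \mathrm{env}_B}\big)$, where $\mathrm{env}_B$ aggregates the ancillas produced by Bob's dilated decodings and $\phi_{\mathrm{env}_B}$ is a suitable Uhlmann purifier. Invoking \eqref{eq:renyi-fidelity-entropy} then lower-bounds the R\'enyi entropy of Bob's total final system by $\log m_n + nS_\beta(AB)_\psi + \frac{2\alpha}{1-\alpha}\log F_n$. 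For the matching upper bound I would telescope \Cref{lem:renyi-subadditivity} through the $M$ rounds: starting from $S_\alpha(B^n T_B^n)_{\psi\ox\Phi}=nS_\alpha(B)_\psi+\log k_n$, each receipt of a forward register $Q_i^n$ inflates the R\'enyi entropy by at most $\log|Q_i^n|$, each backward send of $Q_i'^n$ inflates it by at most $\log|Q_i'^n|$ (via the other direction of \Cref{lem:renyi-subadditivity}), and each dilated decoding is isometric on Bob's side so leaves the entropy invariant. Summing and using \eqref{eq:state-re-fb-ent-cost} and \eqref{eq:state-re-fb-qtot} yields the first claimed bound.

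For the two bounds involving $\qfw$ I would begin from \eqref{eq:renyi-fidelity-conditional} and \eqref{eq:renyi-fidelity-mutual} respectively, applied to $F_n\leq F\big(\psi_{ABR}^{\ox n},\sigma_{A'^n B'^n R^n}\big)$; the mutual-information version is legitimate because $R$ is untouched throughout, so the marginals on $R^n$ agree, as required by \Cref{prop:renyi-fidelity}. The task is then to telescope the resulting quantity $-\tS_\beta(R^n|A'^n B'^n)_\sigma$ (respectively $\tI_\beta(R^n;A'^n B'^n)_\sigma$) back to the initial state. The three moves I would alternate are: the data processing inequality (\Cref{prop:renyi-properties}(\ref{item:dpi})) across each decoding $\cD_i$; the dimension bound \eqref{eq:dim-bound-conditional} (respectively \eqref{eq:dim-bound-mutual}) to strip each forward register $Q_i^n$ at cost $2\log|Q_i^n|$; and a second use of DPI, this time for the partial-trace map that discards the backward register $Q_i'^n$ from Bob. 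This last move is the crucial point: discarding conditioning information only increases a R\'enyi conditional entropy and only decreases a R\'enyi mutual information, so the backward transmissions sit on the favourable side of the inequality and contribute nothing to the exponent. Iterating back to the initial state and using \eqref{eq:product-conditional} (respectively \eqref{eq:product-mutual}) to decouple $T_B^n$ from the reference then yields the remaining two claimed bounds.

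The main obstacle is the careful tracking of who holds which register (and which dilation environment) at each step of the $M$-round protocol, and verifying that every inequality in the telescope points in the direction required by the argument. Once this bookkeeping is in place, the three bounds follow from exactly the ingredients used for $M=1$ in \Cref{lem:state-re-converse}, with only the summation over rounds and the additivity step being genuinely new.
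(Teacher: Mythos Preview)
Your proposal is correct and matches the paper's proof (\Cref{lem:state-re-fb-sc} in \Cref{sec:feedback-proofs}) essentially step for step: the $\qfw$-bounds are handled by iterating DPI/dimension-bound/DPI exactly as you describe, and the $\qtot+e$-bound by Uhlmann plus a telescoped subadditivity argument over Bob's registers. The only cosmetic difference is that for the backward-communication step in the $\qtot+e$ argument the paper flips to Alice's side via duality (\Cref{prop:renyi-properties}(\ref{item:RE-duality})) and applies the upper bound of \Cref{lem:renyi-subadditivity} there, whereas you stay on Bob's side and invoke the lower bound of \Cref{lem:renyi-subadditivity} directly; since the latter is proved from the former by duality, the two arguments are equivalent.
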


\subsection{Coherent state merging}\label{sec:fqsw-protocol}
Coherent state merging \cite{Opp08} is the task in which Alice wants to transfer the $A$-part of a bipartite state $\rho_{AB}$ to Bob (who holds the $B$ system), while at the same time generating entanglement between them. 
We assume that $\rho_{AB}$ is purified by an inaccessible reference system $R$. 
To achieve their goal, Alice and Bob are allowed to perform local operations on the systems in their possession as well as noiseless quantum communication.
This protocol (also known as `Fully Quantum Slepian Wolf' (FQSW) protocol \cite{ADHW09}) is a special case of the state redistribution protocol from \Cref{sec:state-re-protocol} where the system $C$ is absent (or equivalently taken to be a trivial one-dimensional system), and Alice and Bob do not share any entanglement prior to commencing the protocol (hence, the systems $T_A$ and $T_B$ are trivial and entanglement is always \emph{gained} in the course of the protocol). 

Let Alice and Bob share $n$ identical copies of the state $\rho_{AB}$ with purification $\psi_{ABR}$, i.e.~the state $\rho_{AB}^{\ox n}$ with purification $\psi_{ABR}^{\ox n}$. 
A general coherent state merging protocol $(\rho^{\ox n},\Lambda_n)$ is given by a joint quantum operation $\Lambda_n\equiv\cD_n\circ\cE_n$ where $\cE_n\colon A^n\rightarrow T_A'^nQ^n$ is Alice's encoding map, the system $Q^n$ is sent to Bob, and $\cD_n\colon Q^nB^n\rightarrow T_B'^nA'^nB'^n$ is Bob's decoding map. 
Here, $A'^n\cong A^n$, $T_B'^n\cong T_A'^n$, and $B'^n\cong B^n$. 
Denoting the final state of the protocol by $\sigma_n = (\Lambda_n\ox\id_{R^n})(\psi_{ABR}^{\ox n})$, the figure of merit is chosen to be the fidelity  
\begin{align}\label{eq:fqsw-figure-of-merit}
F_n\coloneqq F\left(\sigma_n,\psi_{ABR}^{\ox n}\ox \Phi^{m_n}_{T_A'^n T_B'^n}\right)
\end{align}
where the second argument of the fidelity is the target state of the protocol $(\rho^{\ox n},\Lambda_n)$. 

The \emph{quantum communication cost} $\Qncsm(\rho^{\ox n},\Lambda_n)$ and the \emph{entanglement gain} $\Encsm(\rho^{\ox n},\Lambda_n)$ are defined in analogy to  \Cref{sec:state-re-protocol}:
\begin{align}\label{eq:fqsw-rates}
\Qncsm(\rho^{\ox n},\Lambda_n) &\coloneqq \frac{1}{n} \log |Q^n|, & \Encsm(\rho^{\ox n},\Lambda_n) &\coloneqq \frac{1}{n} \log |T_A'^n|.
\end{align}
A pair $(e,q)$, with $e,q \geq 0$, is said to be an achievable rate pair for coherent state merging of a state $\rho_{AB}$, if there exists a sequence of protocols $\{(\rho^{\ox n},\Lambda_n)\}_{n \in \mathbb{N}}$ such that $ \liminf_{n \to \infty} F_n = 1$ and 
\begin{align*}
\limsup_{n \to \infty} \Qncsm(\rho^{\ox n}, \Lambda_n) &= q, & \liminf_{n \to \infty} \Encsm(\rho^{\ox n}, \Lambda_n) &= e.
\end{align*}

Coherent state merging was introduced by Abeyesinghe \etal \cite{ADHW09} and further investigated in \cite{DH11,BCR11}. 
It was proved that a rate pair $(e,q)$ is achievable if and only if $e$ and $q$ satisfy the conditions
\begin{align*}
q &\geq \onehalf I(A;R)_\rho, & q-e &\geq S(A|B)_\rho.
\end{align*}

As mentioned above, every coherent state merging protocol can be seen as a special case of a state redistribution protocol where the systems $C$ and $T_A$ are trivial. 
In this case, $I(A;R|B)_\rho=I(A;R)_\rho$, and \Cref{lem:state-re-converse} reduces to

\begin{lemma}\label{lem:fqsw-converse-one-shot}
Let $\rho\equiv\rho_{AB}$ be a bipartite state with purification $|\psi_{ABR}\rangle$, and let $(\rho,\Lambda)$ be a coherent state merging protocol where $\Lambda\equiv \cD\circ\cE$ with $\cE\colon A\rightarrow T_A'Q$ and $\cD\colon QB\rightarrow T_B'A'B'$ as defined above (for $n=1$). 
Furthermore, set
\begin{align*} 
F \coloneqq F(\Phi_{T_A'T_B'}\ox\psi_{B'BR},(\cD\circ\cE\ox\id_R)(\psi_{ABR})).
\end{align*} 
Then we have the following bounds for $\alpha\in(1/2,1)$ and $\beta=\alpha/(2\alpha-1)$:
\begin{align}
\log F &\leq \frac{1-\alpha}{2\alpha} \left( \log|Q| - \log|T_A'| - S_\beta(R)_\rho + S_\alpha(AR)_\rho \right),\\ 
\log F &\leq \frac{1-\alpha}{2\alpha} ( 2\log|Q| - S_\beta(R)_\rho + \tS_\alpha(R|A)_\rho). 
\end{align}
\end{lemma}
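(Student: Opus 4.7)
The plan is to derive both bounds as direct specializations of \Cref{lem:state-re-converse} to the coherent state merging setting, where the systems $C$, $T_A$, and $T_B$ are trivial, combined with the duality properties of the Rényi entropy and Rényi conditional entropy stated in \Cref{prop:renyi-properties}(\ref{item:RE-duality}) and (\ref{item:RE-cond-duality}). The essential idea is that once one sets $\log|T_A| = 0$ and treats $\psi_{ABR}$ as a tripartite pure state, the entropies of $B$-systems appearing in the state-redistribution bounds can be rewritten as entropies of $A$- and $R$-systems.

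For the first inequality, I would start from \eqref{eq:state-re-q+e} with $\log|T_A| = 0$. Since $C$ is trivial and the state $\psi_{ABR}$ is pure, \Cref{prop:renyi-properties}(\ref{item:RE-duality}) gives $S_\beta(AB)_\psi = S_\beta(R)_\rho$ and $S_\alpha(B)_\psi = S_\alpha(AR)_\rho$. Substituting these into \eqref{eq:state-re-q+e} immediately produces the claimed bound.

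For the second inequality, I would begin from \eqref{eq:state-re-q} (with $C$ trivial) and apply duality of the Rényi conditional entropy (\Cref{prop:renyi-properties}(\ref{item:RE-cond-duality})) twice. First, regarding $\psi_{ABR}$ as a pure state on the three systems $R$, $B$, $A$, duality yields $\tS_\beta(R|B)_\psi = -\tS_{\beta^*}(R|A)_\psi$ with $\beta^*$ defined by $1/\beta + 1/\beta^* = 2$. A short calculation shows $\beta^* = \alpha$ (so the involution $\alpha \leftrightarrow \beta$ is self-inverse), giving $\tS_\beta(R|B)_\psi = -\tS_\alpha(R|A)_\rho$. Second, viewing $\psi_{ABR}$ as a pure state on the bipartition $R \mid AB$ with a trivial third system adjoined, duality gives $\tS_\alpha(R|AB)_\psi = -\tS_\beta(R|\emptyset)_\psi = -S_\beta(R)_\rho$, where the last equality follows from the definition of $\tS_\beta$ applied to a trivial conditioning system. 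Substituting these two identities into \eqref{eq:state-re-q} collapses the sum $-\tS_\beta(R|B)_\rho + \tS_\alpha(R|AB)_\rho$ into $\tS_\alpha(R|A)_\rho - S_\beta(R)_\rho$, which is exactly the desired second bound.

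There is no real obstacle in this argument; it is essentially an exercise in duality and dimension bookkeeping. The only subtlety worth verifying is the second duality step, which invokes duality against a trivial purifying system—this amounts to checking that $\tS_\gamma(X|Z)_\sigma = S_\gamma(X)_\sigma$ whenever $Z$ is one-dimensional, which is immediate from \Cref{def:renyi-entropies}. Once these reductions are in place, no new inequality beyond \Cref{lem:state-re-converse} is needed, and the statement follows.
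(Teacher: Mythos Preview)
Your proposal is correct and takes essentially the same approach as the paper, which simply states that coherent state merging is the special case of state redistribution with $C$, $T_A$, $T_B$ trivial and that \Cref{lem:state-re-converse} then reduces to the present lemma. You have additionally spelled out the duality computations (\Cref{prop:renyi-properties}(\ref{item:RE-duality}) and (\ref{item:RE-cond-duality})) that convert the $B$-system entropies in \eqref{eq:state-re-q+e} and \eqref{eq:state-re-q} into the $R$- and $A$-system entropies appearing in the statement; the paper leaves these to the reader.
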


\Cref{lem:fqsw-converse-one-shot} in turn implies the strong converse property for the quantum communication cost and entanglement gain of coherent state merging:
\begin{theorem}[Strong converse for coherent state merging]\label{thm:fqsw-strong-converse}
Let $\rho\equiv\rho_{AB}$ be a bipartite state and $\lbrace(\rho^{\ox n},\Lambda_n)\rbrace_{n\in\mathbb{N}}$ be a sequence of coherent state merging protocols as described above, with figure of merit $F_n$ as defined in \eqref{eq:fqsw-figure-of-merit}.
Then for all $n\in\mathbb{N}$ we obtain the following bounds on the fidelity $F_n$ for $\alpha\in(1/2,1)$ and $\beta=\alpha/(2\alpha-1)$:
\begin{align} 
F_n &\leq \exp\left\lbrace - n\kappa(\alpha) \left[S_\beta(AB)_\rho - S_\alpha(B)_\rho - \Qncsm + \Encsm\right]\right\rbrace,\\
F_n &\leq \exp\left\lbrace - n\kappa(\alpha) \left[S_\beta(R)_\rho - \tS_\alpha(R|A)_\rho- 2\Qncsm \right] \right\rbrace.
\end{align}
where $\kappa(\alpha)=(1-\alpha)/(2\alpha)$, and $\Qncsm\equiv\Qncsm(\rho^{\ox n},\Lambda_n)$ and $\Encsm\equiv \Encsm(\rho^{\ox n},\Lambda_n)$ denote the quantum communication cost and entanglement gain respectively, as defined in \eqref{eq:fqsw-rates}.
\end{theorem}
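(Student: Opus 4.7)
The plan is to reduce the asymptotic bounds to a direct application of the one-shot bounds supplied by Lemma \ref{lem:fqsw-converse-one-shot}, then use additivity of the Rényi entropies on product states together with the duality relation for Rényi entropy of pure states to rewrite the single-copy entropic quantities in the form demanded by the theorem statement.

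First I would apply Lemma \ref{lem:fqsw-converse-one-shot} to the $n$-fold i.i.d.\ scenario, taking the bipartite state to be $\rho_{AB}^{\ox n}$ with purification $\psi_{ABR}^{\ox n}$ and the protocol to be $(\rho^{\ox n},\Lambda_n)$. Writing $\log|Q^n| = n\,\Qncsm$ and $\log|T_A'^n| = n\,\Encsm$ by definition \eqref{eq:fqsw-rates}, this gives the two one-shot inequalities with the corresponding Rényi entropies evaluated on the $n$-copy state $\psi^{\ox n}_{ABR}$.

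Next I would invoke additivity of the Rényi entropy and the sandwiched Rényi conditional entropy for product states (Proposition \ref{prop:renyi-properties}(\ref{item:RE-add})) to reduce these to
\begin{align*}
S_\alpha(A^nR^n)_{\psi^{\ox n}} &= n S_\alpha(AR)_\psi, & S_\beta(R^n)_{\psi^{\ox n}} &= n S_\beta(R)_\psi, & \tS_\alpha(R^n|A^n)_{\psi^{\ox n}} &= n\tS_\alpha(R|A)_\psi,
\end{align*}
valid for $\alpha\in(1/2,1)$ (hence $\beta=\alpha/(2\alpha-1)\in(1,\infty)$, so $\alpha,\beta\geq 1/2$ as required for additivity of the conditional entropy). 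For the first bound of the theorem, I would then use duality of the Rényi entropy on the pure state $\psi_{ABR}$ (Proposition \ref{prop:renyi-properties}(\ref{item:RE-duality})) to substitute $S_\alpha(AR)_\psi = S_\alpha(B)_\psi$ and $S_\beta(R)_\psi = S_\beta(AB)_\psi$, which converts the one-shot expression $-S_\beta(R)_\rho + S_\alpha(AR)_\rho$ into $-S_\beta(AB)_\psi + S_\alpha(B)_\psi$ as required.

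Finally I would exponentiate both logarithmic inequalities and absorb the factor $-\kappa(\alpha) = -(1-\alpha)/(2\alpha)$ (which is negative for $\alpha\in(1/2,1)$) into a negative exponent, yielding precisely the two displayed bounds. I expect no substantive obstacle: the one-shot lemma already contains the hard analytic content through Proposition \ref{prop:renyi-fidelity} and the data-processing inequality, so the only step that requires any care is ensuring that the duality and additivity statements are indeed applicable in the range $\alpha\in(1/2,1)$ (and the corresponding $\beta\geq 1/2$) and that the signs and directions of the inequalities survive the conversion between $\log F$ and the exponential form.
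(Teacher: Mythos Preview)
Your proposal is correct and matches the paper's approach: the paper states the theorem as an immediate consequence of Lemma~\ref{lem:fqsw-converse-one-shot} without further details, and the implicit argument is exactly the one you spell out (apply the one-shot bound to $n$ copies, use additivity from Proposition~\ref{prop:renyi-properties}(\ref{item:RE-add}), and invoke duality \ref{item:RE-duality} to pass from $S_\beta(R)_\psi,\,S_\alpha(AR)_\psi$ to $S_\beta(AB)_\psi,\,S_\alpha(B)_\psi$). Your checks on the parameter ranges and signs are also fine.
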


\subsection{Quantum state splitting}\label{sec:state-splitting}
Quantum state splitting is the task in which a tripartite pure state $\psi_{ACR}$, which is initially shared between Alice (who has $AC$) and the reference ($R$), is split between Alice, Bob and the reference, with the system $A$ being transferred to Bob. 
To this end, Alice and Bob can make use of prior shared entanglement and are allowed to do local operations on systems which they possess or receive.
This protocol (also known as `Fully Quantum Reverse Shannon' (FQRS) protocol \cite{ADHW09,Dev06}) is dual to the coherent state merging protocol from \Cref{sec:fqsw-protocol} under time reversal \cite{Dev06}. 
Hence, the quantum state splitting protocol can also be obtained as a special case from the state redistribution protocol, if the systems $B$ and $T_A'$ are taken to be trivial. 
That is, Bob does not possess a share of the input state of the protocol, and the target state does not consist of an MES shared between Alice and Bob (i.e.~the protocol always \emph{consumes} entanglement).

Let Alice and Bob share $n$ identical copies of the state $\rho_{AC}$ with purification $\psi_{ACR}$, i.e.~the state $\rho_{AC}^{\ox n}$ with purification $\psi_{ACR}^{\ox n}$. 
A general quantum state splitting protocol $(\rho^{\ox n},\Lambda_n)$ is given by a joint quantum operation $\Lambda_n=\cD_n\circ\cE_n$ where $\cE_n\colon A^nC^nT_A^n\rightarrow C'^nQ^n$ is Alice's encoding map, the system $Q^n$ is sent to Bob, and $\cD_n\colon Q^nT_B^n\rightarrow A'^n$ is Bob's decoding map. 
Here, $A'^n\cong A^n$ and $C'^n\cong C^n$. 
Denote the final state of the protocol by $\sigma_n = (\Lambda_n\ox\id_{R^n})(\Omega^n)$ where $\Omega^n\equiv \psi_{ACR}^{\ox n}\ox\Phi^{k_n}_{T_A^n T_B^n}$ is the initial state shared between Alice and Bob. 
Then the figure of merit is chosen to be the fidelity 
\begin{align}\label{eq:fqrs-figure-of-merit}
F_n\coloneqq F\left(\sigma_n,\psi^{\ox n}\right),
\end{align}
where $\psi\equiv\psi_{A'C'R}$.

The \emph{quantum communication cost} $\Qnqss(\rho^{\ox n},\Lambda_n)$ and the \emph{entanglement cost} $\Enqss(\rho^{\ox n},\Lambda_n)$ are defined in analogy to  \Cref{sec:state-re-protocol}:
\begin{align}\label{eq:qss-rates}
\Qnqss(\rho^{\ox n},\Lambda_n) &\coloneqq \frac{1}{n} \log |Q^n|, & \Enqss(\rho^{\ox n},\Lambda_n) &\coloneqq \frac{1}{n} \log |T_A^n|.
\end{align}
A pair $(e,q)$, with $e,q \geq 0$, is said to be an achievable rate pair for quantum state splitting of a state $\rho_{AC}$, if there exists a sequence of protocols $\{(\rho^{\ox n},\Lambda_n)\}_{n \in \mathbb{N}}$ such that $ \liminf_{n \to \infty} F_n = 1$ and 
\begin{align*}
\limsup_{n \to \infty} \Enqss(\rho^{\ox n}, \Lambda_n)&=e, & \limsup_{n \to \infty} \Qnqss(\rho^{\ox n}, \Lambda_n) &= q.
\end{align*}

The optimal rates of entanglement cost and quantum communication cost for quantum state splitting were investigated in \cite{ADHW09,Dev06,BDHSW14,BCR11}: 
A rate pair $(e,q)$ is achievable if and only if $e$ and $q$ satisfy
\begin{align*}
q &\geq \onehalf I(A;R)_\rho, & q+e &\geq S(A)_\rho.
\end{align*}
One-shot bounds characterizing the quantum communication cost and entanglement cost for quantum state splitting were derived by Berta \etal \cite{BCR11} as a building block in a proof of the Quantum Reverse Shannon theorem based on smooth entropies. 

As mentioned at the beginning of this section, quantum state splitting is a special case of state redistribution with the choices $|B|=|T_A'|=1$.
In this case, $I(A;R|B)_\rho=I(A;R)$, and \Cref{lem:state-re-converse} reduces to

\begin{lemma}\label{lem:state-splitting-converse-one-shot}
Let $\rho\equiv\rho_{AB}$ be a bipartite state with purification $|\psi_{ABR}\rangle$, and let $(\rho,\Lambda)$ be a quantum state splitting protocol where $\Lambda\equiv \cD\circ\cE$ with $\cE\colon AA'T_A\rightarrow AQ$ and $\cD\colon QT_B\rightarrow B$ as defined above (for $n=1$). 
Furthermore, set
\begin{align*} 
F \coloneqq F(\psi_{ABR},(\cD\circ\cE\ox\id_R)(\psi_{AA'R}\ox\Phi_{T_AT_B})).
\end{align*} 
Then we have the following bounds for $\alpha\in(1/2,1)$ and $\beta=\alpha/(2\alpha-1)$:
\begin{align}
\log F &\leq \frac{1-\alpha}{2\alpha} \left( \log|Q| + \log|T_A| - S_\beta(A)_\rho \right), \\
\log F &\leq \frac{1-\alpha}{2\alpha} \left( 2\log|Q| - S_\beta(R)_\rho + \tS_\alpha(R|A)_\rho \right),\\ 
\log F & \leq \frac{1-\alpha}{2\alpha} \left( 2\log|Q| - \tI_{\alpha}(R;A)_\rho\right).
\end{align}
\end{lemma}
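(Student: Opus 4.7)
The plan is to derive all three bounds as immediate specializations of Lemma 3.1, using the observation made at the beginning of \Cref{sec:state-splitting}: a quantum state splitting protocol is precisely a state redistribution protocol in which the systems $B$ and $T_A'$ are taken to be trivial one-dimensional systems. Under this identification, the encoding $\cE\colon A A' T_A \to AQ$ and decoding $\cD\colon Q T_B \to B$ of the state splitting protocol correspond to the state redistribution encoding and decoding, with Alice's kept system playing the role of $C$, the system to be transferred playing the role of $A$, and Bob's decoded output playing the role of $A'B'$. The fidelity $F$ in the statement is therefore exactly the state redistribution fidelity in this degenerate setting.

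First I would collect the identifications of the relevant R\'enyi quantities forced by $|B|=|T_A'|=1$: one has $\log|T_A'|=0$, $S_\alpha(B)_\rho=0$, and $S_\beta(AB)_\rho = S_\beta(A)_\rho$; moreover, the minimization defining $\tS_\beta(R|B)_\rho$ becomes trivial and yields $\tS_\beta(R|B)_\rho = -\tD_\beta(\rho_R\|\one_R) = S_\beta(R)_\rho$, while $\tS_\alpha(R|AB)_\rho = \tS_\alpha(R|A)_\rho$ by the same reasoning. For the mutual information specialization in the third bound, $\tI_\alpha(R;AB)_\rho = \tI_\alpha(R;A)_\rho$, and $\tI_\beta(R;B)_\rho = \min_{\sigma_B} \tD_\beta(\rho_R\|\rho_R\ox\sigma_B) = 0$.

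Substituting these identifications into the three inequalities \eqref{eq:state-re-q+e}, \eqref{eq:state-re-q}, and \eqref{eq:state-re-q-alt} of \Cref{lem:state-re-converse} directly produces the three claimed bounds. No additional entropic inequalities are required beyond what is already used to prove \Cref{lem:state-re-converse}.

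I do not anticipate a serious obstacle; the only mild subtlety is bookkeeping the relabeling of systems between the two protocols (in \Cref{lem:state-state-splitting-converse-one-shot} the system sent from Alice to Bob is called $A'$ rather than $A$, and Bob's output is denoted $B$ rather than $A'$). If one preferred to avoid invoking state redistribution altogether, the same three bounds can be obtained by repeating the proof of \Cref{lem:state-re-converse} verbatim in the simpler setting: dilate $\cE$ and $\cD$ to isometries, apply Uhlmann's theorem to the target fidelity with respect to $\psi_{A'C'R}$, and then invoke \eqref{eq:renyi-fidelity-entropy}, \eqref{eq:renyi-fidelity-conditional}, and \eqref{eq:renyi-fidelity-mutual} of \Cref{prop:renyi-fidelity} together with the dimension and subadditivity bounds of \Cref{lem:renyi-subadditivity} and \Cref{lem:renyi-quantities}. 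Either route is routine once the specialization is recognized.
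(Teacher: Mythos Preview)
Your proposal is correct and follows exactly the approach the paper takes: the paper simply states that quantum state splitting is the special case of state redistribution with $|B|=|T_A'|=1$, so that \Cref{lem:state-re-converse} ``reduces to'' the present lemma, without even spelling out the entropic identifications you carefully list. Your write-up is thus more explicit than the paper's own treatment, but identical in substance.
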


\Cref{lem:state-splitting-converse-one-shot} now implies a strong converse theorem for state splitting:

\begin{theorem}[Strong converse for quantum state splitting]\label{thm:qss-strong-converse}
Let $\rho\equiv\rho_{AB}$ be a bipartite state and $\lbrace(\rho^{\ox n},\Lambda_n)\rbrace_{n\in\mathbb{N}}$ be a sequence of quantum state splitting protocols as described above, with figure of merit $F_n$ as defined in \eqref{eq:fqrs-figure-of-merit}.
Then for all $n\in\mathbb{N}$ we obtain the following bounds on the fidelity $F_n$ for $\alpha\in(1/2,1)$ and $\beta=\alpha/(2\alpha-1)$:
\begin{align}
F_n &\leq \exp\left\lbrace - n\kappa(\alpha) \left[S_\beta(A)_\rho - (\Qnqss + \Enqss) \right] \right\rbrace,\\
F_n &\leq \exp\left\lbrace - n\kappa(\alpha) \left[S_\beta(R)_\rho - \tS_\alpha(R|A)_\rho - 2\Qnqss\right]\right\rbrace,\\
F_n &\leq \exp\left\lbrace - n\kappa(\alpha) \left[\tI_\alpha(R;A)_\rho - 2\Qnqss\right]\right\rbrace,
\end{align}
where $\kappa(\alpha)=(1-\alpha)/(2\alpha)$, and $\Qnqss\equiv\Qnqss(\rho^{\ox n},\Lambda_n)$ and $\Enqss\equiv\Enqss(\rho^{\ox n},\Lambda_n)$ denote the quantum communication cost and entanglement cost defined in \eqref{eq:qss-rates}, respectively.
\end{theorem}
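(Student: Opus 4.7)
The plan is to obtain the three strong converse bounds as a direct corollary of the one-shot bounds in \Cref{lem:state-splitting-converse-one-shot} applied to the $n$-copy protocol $(\rho^{\ox n},\Lambda_n)$, combined with additivity of the relevant Rényi entropic quantities on tensor-product inputs. The structure of the argument mirrors the one used to deduce \Cref{thm:state-re-strong-converse} from \Cref{lem:state-re-converse}, and the same derivation used to deduce \Cref{thm:fqsw-strong-converse} from \Cref{lem:fqsw-converse-one-shot}.

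First, I would instantiate the three inequalities of \Cref{lem:state-splitting-converse-one-shot} with the pure state $\psi^{\ox n}\equiv \psi_{A^n C^n R^n}^{\ox n}$, with the encoding/decoding pair $\cE_n,\cD_n$ of the protocol $\Lambda_n$, and with the quantum register $Q^n$ and entanglement register $T_A^n$ that appear in the $n$-copy protocol. This immediately yields upper bounds on $\log F_n$ in which $\log|Q|$ and $\log|T_A|$ are replaced by $\log|Q^n|$ and $\log|T_A^n|$, and in which the Rényi quantities $S_\beta(A)_\rho$, $S_\beta(R)_\rho$, $\tS_\alpha(R|A)_\rho$, $\tI_\alpha(R;A)_\rho$ are replaced by their counterparts on $\rho^{\ox n}$.

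Next, I would invoke additivity under tensor products (\Cref{prop:renyi-properties}(\ref{item:RE-add})) to write
\begin{align*}
S_\beta(A^n)_{\rho^{\ox n}} &= n S_\beta(A)_\rho, & S_\beta(R^n)_{\rho^{\ox n}} &= n S_\beta(R)_\rho,\\
\tS_\alpha(R^n|A^n)_{\rho^{\ox n}} &= n\tS_\alpha(R|A)_\rho, & \tI_\alpha(R^n;A^n)_{\rho^{\ox n}} &= n\tI_\alpha(R;A)_\rho,
\end{align*}
where additivity of $\tS_\alpha(\cdot|\cdot)$ and $\tI_\alpha(\cdot;\cdot)$ is legitimate because both $\alpha\in(1/2,1)$ and $\beta=\alpha/(2\alpha-1)>1$ satisfy the required condition $\alpha\geq 1/2$. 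I would then substitute $\log|Q^n|=n\Qnqss$ and $\log|T_A^n|=n\Enqss$ from \eqref{eq:qss-rates}, pull out the common factor of $n$, and identify the prefactor $(1-\alpha)/(2\alpha)$ as $\kappa(\alpha)$. Exponentiating both sides and flipping the sign to turn lower bounds on $-\log F_n$ into upper bounds on $F_n$ produces the three displayed inequalities.

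There is no real obstacle here: the heavy lifting has already been done in \Cref{lem:state-re-converse}, which, upon specializing to the case $|B|=|T_A'|=1$, furnishes the one-shot bounds of \Cref{lem:state-splitting-converse-one-shot}. The only care required in the present theorem is bookkeeping: checking that the additivity range covers both $\alpha$ and its dual $\beta$, and tracking signs so that the negative prefactor $(1-\alpha)/(2\alpha)$ ends up as the positive constant $\kappa(\alpha)$ multiplying the strong converse exponent. Given this, all three bounds follow in one stroke from the one-shot lemma.
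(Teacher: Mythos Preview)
Your proposal is correct and follows exactly the approach intended by the paper, which simply states that \Cref{lem:state-splitting-converse-one-shot} ``implies'' \Cref{thm:qss-strong-converse} without spelling out the details. The passage from the one-shot lemma to the $n$-copy theorem via additivity (\Cref{prop:renyi-properties}(\ref{item:RE-add})) and the substitutions $\log|Q^n|=n\Qnqss$, $\log|T_A^n|=n\Enqss$ is precisely the same mechanism used implicitly throughout \Cref{sec:state-re} to pass from \Cref{lem:state-re-converse} to \Cref{thm:state-re-strong-converse}.
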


\section{Measurement compression with quantum side information}\label{sec:measurement-comp}
\subsection{The protocol}\label{sec:meas-comp-protocol}
Consider a bipartite state $\rho_{AB}$ between two parties (say, Alice and Bob), and a positive operator-valued measure (POVM) $\Lambda = \lbrace \Lambda_x\rbrace_{x\in\cX}$ (i.e.~$0\leq \Lambda_x\leq \one_A$ for all $x\in\cX$ and $\sum_{x\in\cX}\Lambda_x=\one_A$) on the $A$ system, where $X$ denotes a classical register.
Suppose that Alice wants to communicate the measurement outcome $X$ of $\Lambda$ to Bob via classical communication.
A simple solution is of course for Alice to apply the POVM $\Lambda$ and send the outcome to Bob, requiring $\log|X|$ bits of communication.
In measurement compression with quantum side information, Alice and Bob want to reduce this communication cost by \emph{simulating} the POVM $\Lambda$ using shared randomness, Bob's quantum side information $B$, and sending $\log|L|$ bits of classical communication with $|L|\leq|X|$.
This information-theoretic task was introduced in \cite{WHBH12} as an extension of Winter's original formulation of measurement compression \cite{Win04}.
In the following, we explain this protocol in more detail.

Given $\rho_{AB}\in\cD(\cH_{AB})$ and a POVM $\Lambda = \lbrace \Lambda_x\rbrace_{x\in\cX}$ on the $A$ system with outcome $X$, a general protocol for measurement compression with quantum side information consists of the following steps (see also \Cref{fig:meas-comp}): 
Alice applies a quantum operation $\cE\colon AM_A \rightarrow \bX L$ to her shares of a purification $\psi_{RAB}$ of the initial state $\rho_{AB}$ (with $R$ being an inaccessible reference system) and the shared randomness $\chi_{M_AM_B}$.
This produces a classical register $\bX$ that holds her copy of the simulated outcome of the measurement, and a classical register $L$. 
She sends the latter to Bob, who then applies a quantum operation $\cD\colon L BM_B \rightarrow \hX B'$ to $L$ and his shares of $\psi_{RAB}$ and $\chi_{M_AM_B}$, producing a quantum output $B'$ and the classical output $\hX$, which represents the simulated outcome of the measurement.
We denote the overall state of the protocol after applying $\cE$ and $\cD$ by $\omega$ and $\sigma$, respectively (cf.~\Cref{fig:meas-comp}). 
In comparison, we consider the ideal state $\varphi_{RXX'B}$ that would result from Alice applying the POVM $\Lambda$ to her system $A$ yielding the outcome $X$, and sending a copy $X'$ uncompressed to Bob.
That is, for $\zeta_A\in\cD(\cH_A)$ we define the measurement channel $\Lambda^{A\to XX'}(\zeta_A)\coloneqq \sum_{x\in\cX} \tr(\Lambda_x \zeta_A) |x\rangle\langle x|_X \ox |x\rangle\langle x|_{X'}$ associated to the POVM $\Lambda$, and set $\varphi_{RXX'B} \coloneqq (\id_{RB}\ox\Lambda^{A\to XX'})(\psi_{RAB})$.
The aim of the measurement compression protocol is to assure that $\sigma$ is close in fidelity to the ideal state $\varphi$:
\begin{align}\label{eq:meas-comp-fidelity}
F\coloneqq F\left(\varphi_{RXX'B},\sigma_{R\bX \hX B'}\right) = F\left(\varphi_{RXX'B},(\id_R\ox \cD\circ\cE)(\psi_{RAB}\ox\chi_{M_AM_B})\right).
\end{align}

\begin{figure}[t]
\centering
\includegraphics[width=0.7\textwidth]{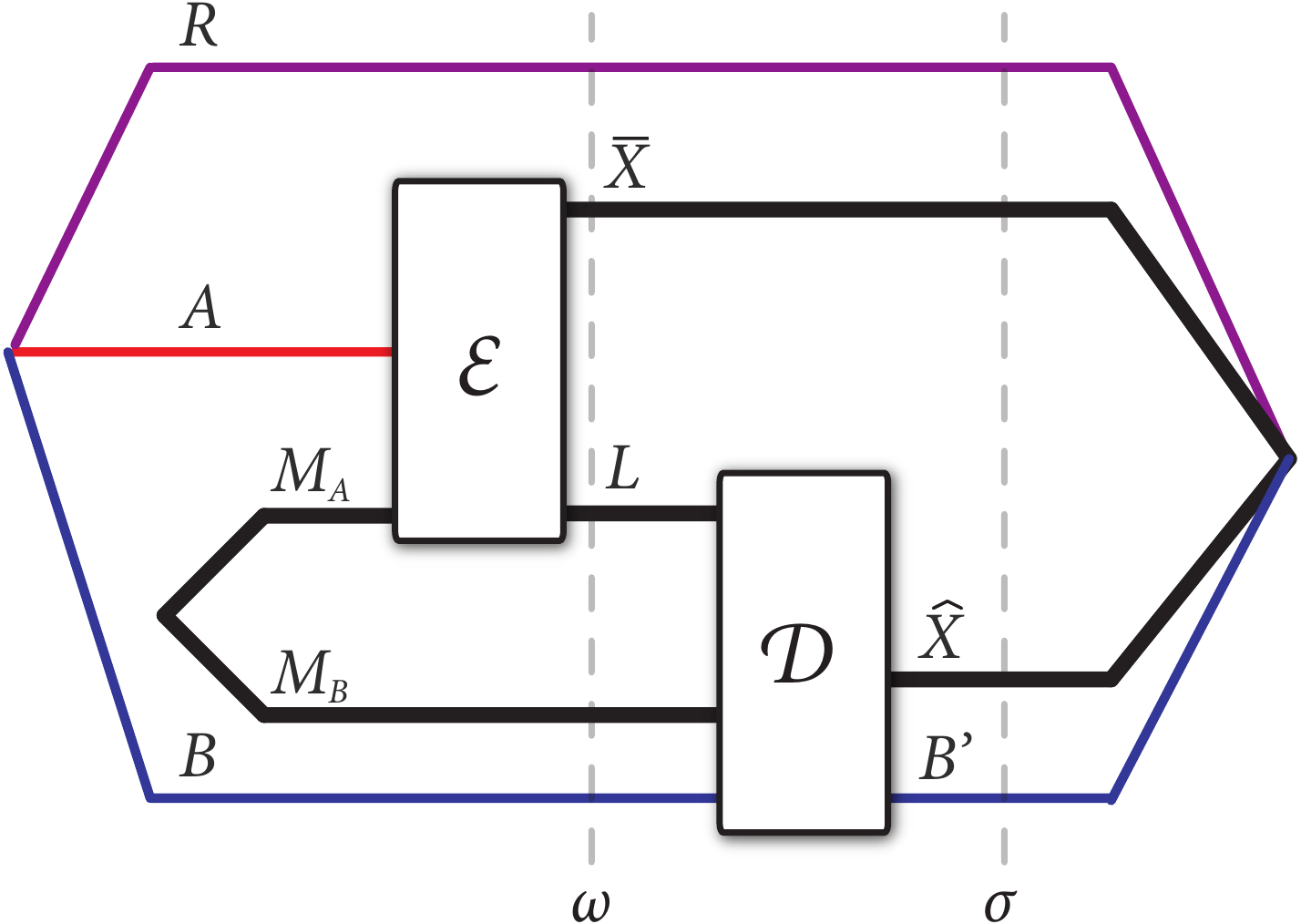}
\caption{Measurement compression with quantum side information (see \Cref{sec:meas-comp-protocol} for a detailed explanation).}
\label{fig:meas-comp}
\end{figure}

Given $n$ identical copies of the input state $\rho_{AB}$ with purification $\psi_{RAB}$ and the shared randomness $\chi_{M_A^n M_B^n}$, we consider a measurement compression protocol with maps $\cE_n\colon A^nM_A^n \rightarrow \bX^n L^n$ and $\cD_n\colon L^nB^nM_B^n\rightarrow \hX^n B'^n$, where $L^n$ is the classical communication between Alice and Bob.
The figure of merit is then given by
\begin{align}\label{eq:meas-comp-fidelity-iid}
F_n\coloneqq  F\left(\vphi_n, (\id_{R^n}\ox \cD_n\circ\cE_n)\left(\psi_{RAB}^{\ox n}\ox \chi_{M_A^n M_B^n}\right)\right)
\end{align}
where the ideal state $\vphi_n$ is obtained by Alice applying the POVM $\Lambda^{\ox n}$ yielding the outcome $X^n$ and sending a copy $X'^n$ to Bob. We define the \emph{classical communication cost} 
\begin{align}\label{eq:cc-cost}
c\left(\rho^{\ox n}, \Lambda^{\ox n}\right) \coloneqq \frac{1}{n} \log |L^n|,
\end{align}
and the \emph{randomness cost}
\begin{align}\label{eq:rand-cost}
r\left(\rho^{\ox n}, \Lambda^{\ox n}\right) \coloneqq \frac{1}{n} \log |M_A^n|.
\end{align}
A rate pair $(c,r)$ with $c,r\geq 0$ is called achievable if there exists a sequence $\lbrace(\cE_n,\cD_n)\rbrace_{n\in\mathbb{N}}$ of measurement compression protocols such that
\begin{align*}
\liminf_{n\to\infty} F_n &= 1, & \limsup_{n\to\infty} c\left(\rho^{\ox n}, \Lambda^{\ox n}\right) &= c, & \limsup_{n\to\infty} r\left(\rho^{\ox n}, \Lambda^{\ox n}\right) &= r.
\end{align*}
In \cite{WHBH12} it was proved that $(c,r)$ is achievable if and only if 
\begin{subequations}\label{eq:meas-comp-optimal}
\begin{align}
c &\geq I(X;R|B)_\vphi\label{eq:meas-comp-optimal-c},\\
c + r &\geq S(X|B)_\vphi,\label{eq:meas-comp-optimal-c+r}
\end{align}
\end{subequations}
where $\vphi_{RXX'B}$ is the ideal state of the protocol defined above.

\subsection{Strong converse theorem}
In this section, we strengthen the weak converse result obtained from \eqref{eq:meas-comp-optimal-c} for the classical communication cost in measurement compression with quantum side information to a strong converse theorem. As in \Cref{sec:state-re-strong-converse}, we first derive the following `one-shot' lemma:
\begin{lemma}\label{lem:meas-comp}
Let a bipartite state $\rho_{AB}$ with purification $\psi_{RAB}$ and a POVM $\Lambda$ on $A$ be given. Furthermore, let $\lbrace( \cE,\cD)\rbrace$ be a measurement compression protocol as defined in \Cref{sec:meas-comp-protocol} with figure of merit
\begin{align*}
F\coloneqq  F\left(\varphi_{RXX'B},(\id_R\ox \cD\circ\cE)(\psi_{RAB}\ox\chi_{M_AM_B})\right).
\end{align*}
Then we have the following bound on $F$ for $\alpha\in (1/2,1)$ and $\beta\equiv \beta(\alpha) = \alpha/(2\alpha-1)$:
\begin{align}
\log F &\leq \frac{1-\alpha}{2\alpha} \left( \log|L| - \tS_\beta( R|B)_\varphi + \tS_\alpha(R|XB)_\vphi \right).\label{eq:meas-comp-c} 
\end{align}
\end{lemma}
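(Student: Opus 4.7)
The proof will follow the same blueprint as the bound \eqref{eq:state-re-q} in \Cref{lem:state-re-converse} for the quantum communication cost of state redistribution. First, by monotonicity of the fidelity under partial trace, I discard the copy register $X'$ from $\vphi$ and Alice's simulated outcome $\bar{X}$ from $\sigma$ to obtain $\log F \leq \log F(\vphi_{RXB},\sigma_{R\hat{X}B'})$. Applying \eqref{eq:renyi-fidelity-conditional} of \Cref{prop:renyi-fidelity} with the classical register $\hat{X}$ identified with $X$ and $B'$ identified with $B$ then gives
\begin{align*}
\log F \;\leq\; \frac{1-\alpha}{2\alpha}\left[\tS_\alpha(R|XB)_\vphi - \tS_\beta(R|\hat{X}B')_\sigma\right].
\end{align*}

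The main work is then to estimate $-\tS_\beta(R|\hat{X}B')_\sigma$ in terms of the classical communication cost and $\tS_\beta(R|B)_\vphi$. Letting $\omega_{R\bar{X}LBM_B}\coloneqq (\cE\otimes\id_{RBM_B})(\psi_{RAB}\otimes\chi_{M_AM_B})$ denote the state after Alice's encoding, so that $\sigma = (\id_R\otimes\cD)(\omega)$, the data-processing inequality (\Cref{prop:renyi-properties}(\ref{item:dpi})) yields $-\tS_\beta(R|\hat{X}B')_\sigma \leq -\tS_\beta(R|LBM_B)_\omega$. Because $L$ is a classical register, I would then invoke the sharper $\log|L|$ dimension bound for c-q states (\Cref{prop:renyi-cq-states}, equation \eqref{eq:classical-dim-bound}) to obtain $-\tS_\beta(R|LBM_B)_\omega \leq -\tS_\beta(R|BM_B)_\omega + \log|L|$.

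To close the argument, note that $\cE$ is trace-preserving and acts only on $AM_A$, so tracing out $\bar{X}L$ produces $\omega_{RBM_B} = \psi_{RB}\otimes\chi_{M_B}$. The tensor-product identity \eqref{eq:product-conditional} of \Cref{lem:renyi-quantities} then gives $\tS_\beta(R|BM_B)_\omega = \tS_\beta(R|B)_\psi$, which coincides with $\tS_\beta(R|B)_\vphi$ since the ideal measurement channel acts only on $A$ and leaves the $RB$-marginal unchanged. Chaining these inequalities yields the claimed bound \eqref{eq:meas-comp-c}. The only real subtleties are the careful bookkeeping of which systems belong to Alice, Bob, the reference, and the randomness registers at each stage, and ensuring that the classical structure of $L$ is exploited to replace the generic $2\log|L|$ penalty from \Cref{lem:renyi-quantities} by the tighter $\log|L|$.
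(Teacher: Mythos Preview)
Your proposal is correct and follows essentially the same approach as the paper's proof: the same fidelity reduction via partial trace, the same application of \eqref{eq:renyi-fidelity-conditional}, the same data-processing step through $\cD$, the classical dimension bound \eqref{eq:classical-dim-bound} for $L$, and the product identity \eqref{eq:product-conditional} together with $\psi_{RB}=\vphi_{RB}$ to finish. The paper's proof and yours are effectively identical in structure and in the lemmas invoked.
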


\begin{proof}
Define the states 
\begin{align*}
\omega_{R\bX L BM_B} &\coloneqq (\id_R\ox \cE)(\psi_{RAB}\ox\chi_{M_AM_B}),\\
\sigma_{R\bX\hX B'} &\coloneqq (\id_R\ox\cD\circ\cE)(\psi_{RAB}\ox\chi_{M_AM_B}).
\end{align*} 
To prove \eqref{eq:meas-comp-c}, consider the following bound for $\alpha\in (1/2,1)$ and $\beta = \alpha/(2\alpha-1)$:
\begin{align}
\frac{2\alpha}{1-\alpha}\log F &\leq \frac{2\alpha}{1-\alpha}\log F\left(\sigma_{R \hX B'},\vphi_{RXB}\right)\notag,\\
&\leq \tS_\alpha(R|XB)_\vphi - \tS_\beta(R|\hX B')_\sigma\label{eq:mmnt-comp-fidelity-bound}
\end{align}
where the first inequality follows from the monotonicity of the fidelity under partial trace, and the second inequality follows from \cref{eq:renyi-fidelity-conditional} of \Cref{prop:renyi-fidelity}.
We continue to bound the second term on the right-hand side of \eqref{eq:mmnt-comp-fidelity-bound}:
\begin{align*}
-\tS_\beta(R|\hX B')_\sigma &\leq - \tS_\beta(R|LB M_B)_\omega\\
&\leq \log |L| - \tS_\beta(R|BM_B )_\omega\\
&= \log |L| - \tS_\beta(R|BM_B )_{\psi\ox\chi}\\
&= \log |L| - \tS_\beta(R|B)_\vphi\numberthis.\label{eq:mmnt-comp-second-term}
\end{align*}
The first inequality follows from data processing with respect to the quantum operation $\cD\colon LM_BB\rightarrow \hX B'$ (\Cref{prop:renyi-properties}(\ref{item:dpi})), the second inequality follows from \cref{eq:classical-dim-bound} of \Cref{prop:renyi-cq-states}, and the first equality follows from the fact that $\omega_{RB M_B} = \psi_{RB}\ox\chi_{M_B}$. The last equality uses \cref{eq:product-conditional} of \Cref{lem:renyi-quantities}, and the fact that $\psi_{RB}=\vphi_{RB}$.
Substituting \eqref{eq:mmnt-comp-second-term} in \eqref{eq:mmnt-comp-fidelity-bound} then yields the claim. 
\end{proof}

This immediately implies the following strong converse theorem:
\begin{theorem}[Strong converse theorem for measurement compression with QSI]\label{thm:meas-comp-strong-converse}
Let $\rho_{AB}$ be a bipartite state, $\Lambda$ a given POVM on $A$, and $\lbrace (\cE_n,\cD_n)\rbrace_{n\in\mathbb{N}}$ be a sequence of measurement compression protocols as described in \Cref{sec:meas-comp-protocol}, with figure of merit $F_n$ as defined in \eqref{eq:meas-comp-fidelity-iid}. Then for all $n\in\mathbb{N}$ we have the following bound on $F_n$ for $\alpha\in(1/2,1)$ and $\beta=\alpha/(2\alpha-1)$:
\begin{align}
F_n \leq \exp\left\lbrace - n\kappa(\alpha)\left[\tS_\beta(R|B)_\vphi - \tS_\alpha(R|XB)_\vphi - c\right]\right\rbrace,
\end{align}
where $\kappa(\alpha)=(1-\alpha)/(2\alpha)$ and $c\equiv c(\rho^{\ox n},\Lambda^{\ox n})$ is the classical communication cost defined in \eqref{eq:cc-cost}.
\end{theorem}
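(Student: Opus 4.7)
The plan is to derive \Cref{thm:meas-comp-strong-converse} as an immediate corollary of \Cref{lem:meas-comp} applied to $n$ independent copies of the input resource, using additivity of the sandwiched Rényi conditional entropy on product states.

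More precisely, I would apply \Cref{lem:meas-comp} to the $n$-copy setting, in which the input state is $\psi_{RAB}^{\ox n}$, the POVM is $\Lambda^{\ox n}$, the shared randomness is $\chi_{M_A^n M_B^n}$, and the encoder/decoder pair is $(\cE_n,\cD_n)$. Denoting the resulting ideal state by $\vphi_n = (\id_{R^n B^n}\ox (\Lambda^{A\to XX'})^{\ox n})(\psi_{RAB}^{\ox n})$, which is itself a tensor product of $n$ copies of the single-shot ideal state $\vphi_{RXX'B}$, the one-shot bound gives, for $\alpha\in(1/2,1)$ and $\beta=\alpha/(2\alpha-1)$,
\begin{align*}
\log F_n \leq \frac{1-\alpha}{2\alpha}\left(\log|L^n| - \tS_\beta(R^n|B^n)_{\vphi_n} + \tS_\alpha(R^n|X^n B^n)_{\vphi_n}\right).
\end{align*}

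Next I would invoke additivity of the Rényi conditional entropy for $\alpha\geq 1/2$ from \Cref{prop:renyi-properties}(\ref{item:RE-add}), noting that $\vphi_n = \vphi^{\ox n}$, to obtain $\tS_\beta(R^n|B^n)_{\vphi^{\ox n}} = n\tS_\beta(R|B)_\vphi$ and $\tS_\alpha(R^n|X^n B^n)_{\vphi^{\ox n}} = n\tS_\alpha(R|XB)_\vphi$. Combined with the definition $c \equiv c(\rho^{\ox n},\Lambda^{\ox n}) = \frac{1}{n}\log|L^n|$ from \eqref{eq:cc-cost}, this yields
\begin{align*}
\log F_n \leq -n\kappa(\alpha)\left[\tS_\beta(R|B)_\vphi - \tS_\alpha(R|XB)_\vphi - c\right],
\end{align*}
with $\kappa(\alpha) = (1-\alpha)/(2\alpha)>0$, and exponentiating both sides gives the claimed bound.

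There is no real obstacle here since \Cref{lem:meas-comp} carries all the analytic content; the only point that deserves a brief justification is that the ideal state $\vphi_n$ associated with the $n$-copy protocol factorizes as $\vphi^{\ox n}$, which is clear from the definition of the measurement channel $\Lambda^{A\to XX'}$ and the product structure of both $\psi_{RAB}^{\ox n}$ and the POVM $\Lambda^{\ox n}$. To complete the strong converse interpretation one would observe that $\tS_\beta(R|B)_\vphi - \tS_\alpha(R|XB)_\vphi \to I(X;R|B)_\vphi$ as $\alpha\to 1$ (by the argument from \cite[Lem.~10]{CMW14} already cited around \eqref{eq:state-re-renyi-convergence-b}), so for any $c<I(X;R|B)_\vphi$ one can choose $\alpha$ close enough to $1$ to make the exponent strictly positive, matching \eqref{eq:meas-comp-optimal-c}.
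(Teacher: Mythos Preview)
Your proposal is correct and is exactly the approach the paper takes: the paper simply states that \Cref{lem:meas-comp} ``immediately implies'' \Cref{thm:meas-comp-strong-converse}, leaving implicit precisely the additivity-on-tensor-powers argument you spell out (mirroring the passage from \Cref{lem:state-re-converse} to \Cref{thm:state-re-strong-converse}).
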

\begin{remark}
The achievable rate region in the $(c,r)$-plane is determined by the two boundaries $c\geq I(X;R|B)_\vphi$ and $c+r\geq S(X|B)_\vphi$, as stated in \eqref{eq:meas-comp-optimal} (compare this to the similar situation in the state redistribution protocol discussed in \Cref{sec:state-re}). 
\Cref{thm:meas-comp-strong-converse} only proves the strong converse property for the $c$-boundary of the achievable rate region, and it remains open to prove the strong converse property also for the $(c+r)$-boundary as stated in \eqref{eq:meas-comp-optimal-c+r}.
While the proof of \eqref{eq:meas-comp-c} in \Cref{lem:meas-comp} closely follows that of \eqref{eq:state-re-q} in \Cref{lem:state-re-converse}, our investigations indicate that the proof method of \eqref{eq:state-re-q+e} does not immediately carry over to show the desired bound for $c+r$ in measurement compression, that is,
\begin{align}
\log F \overset{?}{\leq} f(\alpha)(\log|L| + \log|M_A| - S_{\beta(\alpha)}(XB)_\vphi + S_\alpha(B)_\vphi),
\end{align}
for some functions $f(\alpha)$ and $\beta(\alpha)$ satisfying $f(\alpha)>0$ for all $\alpha$ in some open interval whose boundary contains $1$, and $\lim_{\alpha\to 1}\beta(\alpha) = 1$.
\end{remark}

\section{Randomness extraction}\label{sec:randex}

\subsection{The protocol}\label{sec:randex-protocol}
Suppose that Alice and Bob share the c-q state $\rho_{XB} = \sum_{x\in\cX} p_x |x\rangle\langle x|_X \ox \rho_B^x$ with $\rho_B^x\in\cD(\cH_B)$ for all $x\in\cX$, where the classical register $X$ is with Alice and the quantum system $B$ is with Bob. 
The goal of a randomness extraction protocol is to extract from $X$ a random string $Z$ that is uncorrelated with $B$.

In the asymptotic, memoryless setting Alice and Bob share $n$ copies of the c-q state $\rho_{XB}$. A general randomness extraction protocol $(\rho_{XB}^{\ox n},e_n)$ consists of a (surjective) encoding function $e_n\colon \cX^n\rightarrow \cZ^n$ where $\cX^n = \cX^{\times n}$. The classical register $Z^n$ is then defined to be the one associated with the set $\cZ^n$. 
The encoding function $e_n$ gives rise to an encoding (quantum) operation, and without loss of generality this encoding map can be taken to be an isometry
$U_{e_n}\colon |x^n\rangle \mapsto |x^n\rangle \ox |e_n(x^n)\rangle$ where $|e_n(x^n)\rangle\in\lbrace|z^n\rangle\rbrace_{z^n\in\cZ^n}$ for all $x^n\in\cX^n$, resulting in the encoded state 
\begin{align}\label{eq:randex-omega}
\omega_{X^nZ^nB^n} \coloneqq U_{e_n} \rho_{XB}^{\ox n} U_{e_n}^\dagger = \sum_{x^n\in\cX^n} p_{x^n} |x^n\rangle\langle x^n|_{X^n} \ox |e_n(x^n)\rangle\langle e_n(x^n)|_{Z^n}\ox\rho_{B^n}^{x^n},
\end{align}
with $\rho_{B^n}^{x^n}\coloneqq \rho_B^{x_1}\otimes \dots\otimes \rho_B^{x_n}$ for $x^n=x_1\dots x_n\in\cX^n$.
Upon discarding $X^n$, the final state of the protocol is then given by 
\begin{align}\label{eq:randex-final-state}
\omega_n\equiv \omega_{Z^nB^n} = \sum_{z^n\in\cZ^n} |z^n\rangle \langle z^n|_{Z^n} \ox \sum_{x^n\in e_n^{-1}(z^n)} p_{x^n}\rho_{B^n}^{x^n}.
\end{align}
The randomness extraction protocol $(\rho_{XB}^{\ox n},e_n)$ succeeds if the final state $\omega_n$ is close to a state that is completely mixed on $Z^n$ and independent of $B^n$. 
As the figure of merit we choose the fidelity 
\begin{align}\label{eq:randex-figure-of-merit}
F_n\coloneqq \max_{\sigma_{B^n}}F\left(\omega_n,\pi_Z^{\ox n}\ox \sigma_{B^n}\right).
\end{align}
The rate of extractable randomness $l(\rho_{XB}^{\ox n},e_n)$ is defined as
\begin{align}\label{eq:extractable-randomness}
l(\rho_{XB}^{\ox n},e_n) \coloneqq \frac{1}{n}\log |Z^n|.
\end{align}
A real number $l\geq 0$ is said to be an achievable rate for randomness extraction if there is a sequence of protocols $\lbrace (\rho_{XB}^{\ox n},e_n)\rbrace_{n\in\mathbb{N}}$ such that 
\begin{align*}
\liminf_{n\to \infty} F_n &=1, & \liminf_{n\to \infty}l\left(\rho_{XB}^{\ox n},e_n\right) &= l.
\end{align*}

Randomness extraction was first studied by Bennett \etal \cite{BBR88} (under the name of `privacy amplification') and further developed by Renner \cite{Ren05} and Renner and König \cite{RK05} (see also \cite{TSSR11}). 
They showed that $l\geq 0$ is an achievable rate for randomness extraction if and only if
\begin{align}\label{eq:randex-optimal-rate}
l \leq S(X|B)_\rho.
\end{align}
Tomamichel \cite{Tom12} proved a strong converse theorem for randomness extraction based on one-shot bounds in terms of smooth entropies.

\subsection{Strong converse theorem}\label{sec:randex-strong-converse}
We first state the following general bound on the fidelity: 

\begin{lemma}[\cite{Sha14}]\label{lem:fidelity-bound}
Let $\rho_{AB}\in\cD(\cH_{AB})$, $\sigma_{A}\in\cD(\cH_A)$, and $\chi_B\in\cD(\cH_B)$ be arbitrary quantum states, then
\begin{align*}
F(\rho_{AB},\sigma_A\ox\rho_B) \geq F^2(\rho_{AB},\sigma_A\ox\chi_B).
\end{align*}
\end{lemma}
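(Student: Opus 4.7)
The plan is to reduce both sides of the inequality to quantities involving a single auxiliary sub-normalized operator $\xi_B$ on $B$ that is dominated by $\rho_B$, and then finish by combining Cauchy--Schwarz with the operator monotonicity of the square root. Fix a canonical purification $|\alpha\rangle_{AR_1}$ of $\sigma_A$. Any purification of $\sigma_A\ox\chi_B$ on $ABR_1R_2$ is of the form $|\alpha\rangle_{AR_1}\ox|\gamma\rangle_{BR_2}$ up to a unitary on $R_1R_2$, which can be absorbed into the purification of $\rho_{AB}$; the same is true for $\sigma_A\ox\rho_B$. Uhlmann's theorem then yields
\begin{align*}
F(\rho_{AB},\sigma_A\ox\chi_B)=\max_{|\psi\rangle,|\gamma\rangle}|\langle\psi|\alpha\gamma\rangle|,
\end{align*}
where the maximum is over purifications $|\psi\rangle_{ABR_1R_2}$ of $\rho_{AB}$ and $|\gamma\rangle_{BR_2}$ of $\chi_B$. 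For each fixed $|\psi\rangle$, the inner maximum over $|\gamma\rangle$ equals $F(\xi_B,\chi_B)$ by a second application of Uhlmann, where $|\eta\rangle_{BR_2}\coloneqq(\langle\alpha|_{AR_1}\ox I_{BR_2})|\psi\rangle$ and $\xi_B\coloneqq\tr_{R_2}|\eta\rangle\langle\eta|$.

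Let $|\psi^\ast\rangle$ denote the outer optimizer, so that $F(\rho_{AB},\sigma_A\ox\chi_B)=F(\xi_B,\chi_B)$ for the corresponding $\xi_B$. Using the \emph{same} $|\psi^\ast\rangle$ (generally suboptimal) in the analogous Uhlmann problem for $\sigma_A\ox\rho_B$ gives $F(\rho_{AB},\sigma_A\ox\rho_B)\geq F(\xi_B,\rho_B)$, with this shared $\xi_B$. The structural fact I would next establish is $\xi_B\leq\rho_B$: writing $\xi_B=\tr_{AR_1}[(|\alpha\rangle\langle\alpha|\ox I_B)\tr_{R_2}\psi^\ast]$ and using $|\alpha\rangle\langle\alpha|\leq I_{AR_1}$ together with positivity of the partial trace gives $\rho_B-\xi_B=\tr_{AR_1}[((I-|\alpha\rangle\langle\alpha|)\ox I_B)\tr_{R_2}\psi^\ast]\geq 0$.

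From $\xi_B\leq\rho_B$, operator monotonicity of $x\mapsto\sqrt{x}$ yields $\sqrt{\rho_B}-\sqrt{\xi_B}\geq 0$; multiplying by $\sqrt{\xi_B}\geq 0$ and tracing gives $\tr(\sqrt{\rho_B}\sqrt{\xi_B})\geq\tr(\xi_B)$. Combined with the elementary bound $F(\xi_B,\rho_B)=\|\sqrt{\xi_B}\sqrt{\rho_B}\|_1\geq\tr(\sqrt{\xi_B}\sqrt{\rho_B})$, this produces $F(\xi_B,\rho_B)\geq\tr(\xi_B)$. On the other hand, Cauchy--Schwarz ($\|XY\|_1\leq\|X\|_2\|Y\|_2$) applied to $X=\sqrt{\xi_B}$ and $Y=\sqrt{\chi_B}$ gives $F(\xi_B,\chi_B)^2\leq\tr(\xi_B)\tr(\chi_B)=\tr(\xi_B)$. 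Chaining these inequalities proves the claim. The main obstacle I expect is the Uhlmann reduction in the first paragraph: arranging the purifications so that a \emph{common} $\xi_B$ controls both fidelities and satisfies $\xi_B\leq\rho_B$; once this is in place, the operator-inequality finish is routine.
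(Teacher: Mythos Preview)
Your argument is correct. The paper itself does not give a proof of this lemma; it only states it and cites Sharma~\cite{Sha14}, so there is no ``paper's own proof'' to compare against. I therefore just confirm the steps.

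The Uhlmann reduction is sound: with $|\alpha\rangle_{AR_1}$ a fixed purification of $\sigma_A$ and $R_2$ large enough to purify any state on $B$, one has $F(\rho_{AB},\sigma_A\ox\chi_B)=\max_{|\psi\rangle,|\gamma\rangle}|\langle\psi|\alpha\gamma\rangle|$, and for each $|\psi\rangle$ the inner maximum over $|\gamma\rangle$ is $\|\sqrt{\xi_B}\sqrt{\chi_B}\|_1$ with $\xi_B=\tr_{R_2}|\eta\rangle\langle\eta|$ and $|\eta\rangle=(\langle\alpha|\ox I)|\psi\rangle$. Picking the optimal $|\psi^\ast\rangle$ and reusing it (suboptimally) for $\sigma_A\ox\rho_B$ gives $F(\rho_{AB},\sigma_A\ox\rho_B)\geq\|\sqrt{\xi_B}\sqrt{\rho_B}\|_1$ with the \emph{same} $\xi_B$, exactly as you say. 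The operator inequality $\xi_B\leq\rho_B$ follows since $\xi_B=\tr_{AR_1}[(|\alpha\rangle\langle\alpha|\ox I_B)\,\psi^\ast_{ABR_1}]$ and $I-|\alpha\rangle\langle\alpha|\geq 0$, and the map $X\mapsto\tr_{AR_1}[(X\ox I_B)\,\psi^\ast_{ABR_1}]$ is completely positive.

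The finish is also clean: operator monotonicity of the square root gives $\sqrt{\rho_B}-\sqrt{\xi_B}\geq 0$, whence $\tr(\sqrt{\xi_B}\sqrt{\rho_B})\geq\tr(\xi_B)$ because $\tr(PQ)\geq 0$ for $P,Q\geq 0$; combined with $\|M\|_1\geq|\tr M|$ this yields $\|\sqrt{\xi_B}\sqrt{\rho_B}\|_1\geq\tr(\xi_B)$. H\"older's inequality with $p=q=2$ gives $\|\sqrt{\xi_B}\sqrt{\chi_B}\|_1^2\leq\tr(\xi_B)\tr(\chi_B)=\tr(\xi_B)$. Chaining these produces $F^2(\rho_{AB},\sigma_A\ox\chi_B)\leq F(\rho_{AB},\sigma_A\ox\rho_B)$, as claimed. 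The ``obstacle'' you flag is not actually a problem: your construction automatically produces a common $\xi_B$ dominated by $\rho_B$.
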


With this result in hand, we can prove the following bound on the fidelity for arbitrary randomness extraction protocols:

\begin{lemma}\label{lem:randex-converse}
Let $\rho_{XB}=\sum_{x\in\cX}p_x |x\rangle\langle x|_X\ox \rho_B^x$ be a c-q state with $\rho_B^x\in\cD(\cH_B)$ for all $x\in\cX$, and denote by $\omega_{XZB}$ the encoded state of a randomness extraction protocol $(\rho_{XB},e)$ as defined in \Cref{sec:randex-protocol} (for $n=1$). 
Furthermore, set
\begin{align*}
F\coloneqq \max_{\sigma_B} F(\omega_{ZB},\pi_Z\ox \sigma_B).
\end{align*}
Then we have the following bound for all $\alpha\in (1/2,1)$ and $\beta\equiv\beta(\alpha)=\alpha/(2\alpha-1)$:
\begin{align}
\log F &\leq \frac{1-\alpha}{4\alpha} \left(S_\alpha(XB)_\rho - S_\beta(B)_\rho - \log|Z|\right).\label{eq:randex-fidelity-linear-renyi}
\intertext{For the same range of $\alpha$, we also obtain the following alternative bound on $F$:}
\log F &\leq \frac{1-\alpha}{4\alpha} \left(\tS_\alpha(X|B)_\rho - \log|Z|\right).\label{eq:randex-fidelity-renyi-cond}
\end{align}
\end{lemma}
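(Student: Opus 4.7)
The strategy is to reduce both claimed bounds to a single fidelity estimate of the form $\log F(\omega_{ZB}, \pi_Z \otimes \omega_B) \leq (\text{Rényi expression})$, with the factor-of-two loss between the usual coefficient $(1-\alpha)/(2\alpha)$ from \Cref{prop:renyi-fidelity} and the claimed $(1-\alpha)/(4\alpha)$ arising from a single squaring step. Concretely, \Cref{lem:fidelity-bound} (due to Sharma) gives $F(\omega_{ZB}, \pi_Z \otimes \omega_B) \geq F^2(\omega_{ZB}, \pi_Z \otimes \chi_B)$ for every $\chi_B$, so maximizing over $\chi_B$ and taking logarithms yields $2\log F \leq \log F(\omega_{ZB}, \pi_Z \otimes \omega_B)$. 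It therefore suffices to bound the latter by a Rényi expression.

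For \eqref{eq:randex-fidelity-linear-renyi} I would apply \eqref{eq:renyi-fidelity-entropy} with the ``$A$'' system taken to be $ZB$, obtaining
\begin{align*}
\frac{2\alpha}{1-\alpha}\log F(\omega_{ZB}, \pi_Z \otimes \omega_B) \leq S_\alpha(ZB)_\omega - S_\beta(ZB)_{\pi \otimes \omega}.
\end{align*}
Additivity of $S_\beta$ (\Cref{prop:renyi-properties}(\ref{item:RE-add})) simplifies the second term to $\log|Z| + S_\beta(B)_\rho$, using that $\omega_B = \rho_B$. For the first term, \Cref{prop:renyi-cq-states}(i) with trivial conditioning gives $S_\alpha(ZB)_\omega \leq S_\alpha(XZB)_\omega$ (since $\omega_{XZB}$ is classical on $X$), and invariance under the isometry $U_{e}\otimes I_B$ acting on $\rho_{XB}$ (\Cref{prop:renyi-properties}(\ref{item:RE-isom})) gives $S_\alpha(XZB)_\omega = S_\alpha(XB)_\rho$. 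Combining these ingredients with the squaring inequality above produces the first claim.

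For \eqref{eq:randex-fidelity-renyi-cond} I would instead invoke the conditional version \eqref{eq:renyi-fidelity-conditional} applied to the same pair of states, which yields
\begin{align*}
\frac{2\alpha}{1-\alpha}\log F(\omega_{ZB}, \pi_Z \otimes \omega_B) \leq \tS_\alpha(Z|B)_\omega - \tS_\beta(Z|B)_{\pi \otimes \omega}.
\end{align*}
The second term evaluates to $\log|Z|$ by \eqref{eq:product-conditional} applied to the product $\pi_Z \otimes \omega_B$, and the first is controlled by $\tS_\alpha(Z|B)_\omega \leq \tS_\alpha(XZ|B)_\omega = \tS_\alpha(X|B)_\rho$. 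Here the inequality is \Cref{prop:renyi-cq-states}(i) applied on the classical register $X$, and the equality is the only step where I expect any friction: since the reference operator $I_X \otimes \sigma_B$ is not mapped to $I_{XZ} \otimes \sigma_B$ by $U_{e}\otimes I_B$, \Cref{prop:renyi-properties}(\ref{item:RE-isom}) does not apply verbatim. A short direct calculation, however, using $I_{XZ}^{(1-\alpha)/(2\alpha)} = I_{XZ}$ and the orthogonality of the vectors $|x,e(x)\rangle$, shows that $\tD_\alpha(\omega_{XZB}\,\|\,I_{XZ}\otimes\sigma_B)$ and $\tD_\alpha(\rho_{XB}\,\|\,I_X\otimes\sigma_B)$ agree for every $\sigma_B$, which after minimization over $\sigma_B$ yields the desired identity.
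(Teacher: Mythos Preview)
Your proposal is correct and follows essentially the same route as the paper: both arguments combine Sharma's squaring lemma (\Cref{lem:fidelity-bound}) with the fidelity bounds \eqref{eq:renyi-fidelity-entropy} and \eqref{eq:renyi-fidelity-conditional}, together with \Cref{prop:renyi-cq-states}(i) and invariance under the encoding isometry $U_e$. Your extra care about the identity $\tS_\alpha(XZ|B)_\omega = \tS_\alpha(X|B)_\rho$ is well placed---the paper simply cites ``invariance under $U_e$'' here, whereas your direct verification via $I_{XZ}^{(1-\alpha)/(2\alpha)}=I_{XZ}$ and the orthogonality of the $|x,e(x)\rangle$ makes this step rigorous.
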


\begin{proof}
We first prove \eqref{eq:randex-fidelity-linear-renyi}. 
Set $F'\coloneqq F(\omega_{ZB},\pi_Z\ox\rho_B)$, then by \cref{eq:renyi-fidelity-entropy} of \Cref{prop:renyi-fidelity} we have the following bound for $\alpha\in(1/2,1)$ and $\beta=\alpha/(2\alpha-1)$:
\begin{align}
S_\alpha(ZB)_\omega &\geq S_\beta(ZB)_{\pi\ox\rho} + \frac{2\alpha}{1-\alpha} \log F' \nonumber\\
&= \log|Z| + S_\beta(B)_\rho + \frac{2\alpha}{1-\alpha} \log F'.\label{eq:log-Z-fidelity}
\end{align}
In the second line we used additivity of the Rényi entropy, as well as the fact that $S_\gamma(\pi_Z) = \log|Z|$ for all $\gamma\geq0$ (cf.~\Cref{prop:renyi-properties}(\ref{item:RE-add}) and (\ref{item:RE-dim-bound})).

Furthermore, we have the bound
\begin{align}
S_\alpha(XB)_\rho = S_\alpha(XZB)_\omega \geq S_\alpha(ZB)_\omega,\label{eq:XZB-bound}
\end{align}
where the equality follows from the invariance of the Rényi entropy under the encoding isometry $U_e$ (\Cref{prop:renyi-properties}(\ref{item:RE-isom})), and the inequality follows from \cref{eq:discarding-classical-info} of \Cref{prop:renyi-cq-states}.

Putting \eqref{eq:log-Z-fidelity} and \eqref{eq:XZB-bound} together, we obtain
\begin{align}\label{eq:F-prime-bound}
\log F'\leq \frac{1-\alpha}{2\alpha}\left(S_\alpha(XB)_\rho - S_\beta(B)_\rho - \log |Z|\right).
\end{align}
Now observe that $\omega_B=\rho_B$, and by \Cref{lem:fidelity-bound} we have 
\begin{align}\label{eq:rand-ex-fidelity-relation}
F' = F(\omega_{ZB},\pi_Z\ox\rho_B) = F(\omega_{ZB},\pi_Z\ox\omega_B) \geq F^2(\omega_{ZB},\pi_Z\ox \sigma_B)
\end{align}
for all $\sigma_B\in\cD(\cH_B)$. 
Substituting this into \eqref{eq:F-prime-bound} and using the monotonicity of the logarithm then yields the claim.

To prove \eqref{eq:randex-fidelity-renyi-cond}, observe that \eqref{eq:rand-ex-fidelity-relation} together with \cref{eq:renyi-fidelity-conditional} of \Cref{prop:renyi-fidelity} yield the following for $\alpha\in(1/2,1)$ and $\beta=\alpha/(2\alpha-1)$:
\begin{align*}
\frac{4\alpha}{1-\alpha}\log F &\leq \frac{2\alpha}{1-\alpha}\log F'\\
&\leq \tS_\alpha(Z|B)_\omega - \tS_\beta(Z|B)_{\pi\ox\rho}\\
&= \tS_\alpha(Z|B)_\omega - \log|Z|\\
&\leq \tS_\alpha(XZ|B)_\omega - \log|Z|\\
&= \tS_\alpha(X|B)_\rho - \log|Z|,
\end{align*}
where the first equality follows from \cref{eq:product-conditional} of \Cref{lem:renyi-quantities}, the third inequality uses \cref{eq:discarding-classical-info} of \Cref{prop:renyi-cq-states}, and the last equality uses the invariance of the Rényi conditional entropy under the isometry $U_e$.
\end{proof}

This implies the following strong converse theorem for randomness extraction:
\begin{theorem}\label{thm:randex-strong-converse}
Let $\rho_{XB}$ be a c-q state, and let $\lbrace(\rho_{XB}^{\ox n},e_n)\rbrace_{n\in\mathbb{N}}$ be a sequence of randomness extraction protocols as defined in \Cref{sec:randex-protocol} with figure of merit $F_n$ as given by \eqref{eq:randex-figure-of-merit}. 
Then for all $n\in\mathbb{N}$ we have the following bounds on $F_n$ for $\alpha\in(1/2,1)$ and $\beta=\alpha/(2\alpha-1)$:
\begin{align}
F_n &\leq \exp\left\lbrace-\kappa(\alpha) n\left[l - S_\alpha(XB)_\rho + S_\beta(B)_\rho \right]\right\rbrace,\label{eq:randex-sc-linear}\\
F_n &\leq \exp\left\lbrace-\kappa(\alpha) n\left[l - \tS_\alpha(X|B)_\rho \right]\right\rbrace,\label{eq:randex-sc-cond}
\end{align}
where $\kappa(\alpha)=(1-\alpha)/(4\alpha)$, and $l\equiv l(\rho_{XB}^{\ox n},e_n)$ is the rate of extractable randomness defined in \eqref{eq:extractable-randomness}.
\end{theorem}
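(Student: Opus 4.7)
The plan is to obtain the strong converse theorem as an essentially immediate corollary of the one-shot bounds established in \Cref{lem:randex-converse}, by exponentiating those bounds and invoking the additivity of the relevant R\'enyi quantities on i.i.d.\ states.

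Concretely, I would apply \Cref{lem:randex-converse} to the pair $(\rho_{XB}^{\ox n}, e_n)$ treated as a single ``one-shot'' randomness extraction protocol on the composite systems $X^n$ and $B^n$ with output register $Z^n$. This gives, for $\alpha\in(1/2,1)$ and $\beta=\alpha/(2\alpha-1)$,
\begin{align*}
\log F_n &\leq \frac{1-\alpha}{4\alpha}\left( S_\alpha(X^nB^n)_{\rho^{\ox n}} - S_\beta(B^n)_{\rho^{\ox n}} - \log|Z^n|\right),\\
\log F_n &\leq \frac{1-\alpha}{4\alpha}\left( \tS_\alpha(X^n|B^n)_{\rho^{\ox n}} - \log|Z^n|\right).
\end{align*}
I would then invoke additivity of the R\'enyi entropy and R\'enyi conditional entropy on tensor product states (\Cref{prop:renyi-properties}(\ref{item:RE-add})), which applies in the claimed range $\alpha\geq 1/2$, to replace $S_\alpha(X^nB^n)_{\rho^{\ox n}}$ by $n\,S_\alpha(XB)_\rho$, $S_\beta(B^n)_{\rho^{\ox n}}$ by $n\,S_\beta(B)_\rho$, and $\tS_\alpha(X^n|B^n)_{\rho^{\ox n}}$ by $n\,\tS_\alpha(X|B)_\rho$. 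By the definition \eqref{eq:extractable-randomness} of the rate of extractable randomness, $\log|Z^n|=n\,l(\rho_{XB}^{\ox n},e_n)\equiv n\,l$.

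Substituting these identifications and exponentiating both bounds produces the claimed estimates \eqref{eq:randex-sc-linear} and \eqref{eq:randex-sc-cond} with $\kappa(\alpha)=(1-\alpha)/(4\alpha)$. There is no serious obstacle here: the entire conceptual content sits in the one-shot \Cref{lem:randex-converse} and the additivity statements, both of which are already available. The only minor point to verify is that the additivity hypotheses are met, namely $\alpha\in(1/2,1)\subset[1/2,\infty)$, which is precisely the range in which \Cref{prop:renyi-properties}(\ref{item:RE-add}) guarantees additivity of the R\'enyi conditional entropy, while additivity of the unconditional R\'enyi entropy holds for all $\alpha>0$. Hence the full strong converse exponent emerges directly from the ``single-letterization'' step with no further work required.
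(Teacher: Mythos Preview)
Your proposal is correct and matches the paper's approach: the paper itself states \Cref{thm:randex-strong-converse} as an immediate consequence of \Cref{lem:randex-converse} (``This implies the following strong converse theorem\ldots''), relying precisely on the single-letterization via additivity that you spell out.
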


\begin{remark}\label{rem:bound-comparison}
An immediate question arising from \Cref{thm:randex-strong-converse} is whether one of the two bounds in \eqref{eq:randex-sc-linear} and \eqref{eq:randex-sc-cond} is tighter than the other, that is, whether one of 
\begin{align}\label{eq:bound-comparison}
S_\alpha(XB)_\rho - S_\beta(B)_\rho &\overset{?}{\leq} \tS_\alpha(X|B)_\rho, & S_\alpha(XB)_\rho - S_\beta(B)_\rho &\overset{?}{\geq} \tS_\alpha(X|B)_\rho
\end{align} 
holds for all $\alpha\in(1/2,1)$ and $\beta=\alpha/(2\alpha-1)$. 
Numerical investigations with classical registers $B$ show that neither inequality in \eqref{eq:bound-comparison} is always valid. 
Hence, the two exponents in \eqref{eq:randex-sc-linear} and \eqref{eq:randex-sc-cond} are in general incomparable. 
In the light of identifying the strong converse exponent of randomness extraction against quantum side information (cf.~\Cref{sec:discussion}), this fact indicates that further analysis is needed here, as a strong converse exponent usually characterizes the tightest possible strong converse bound.
Interchanging $\alpha$ and $\beta$ in the above, the same reasoning applies to \Cref{thm:dc-strong-converse} of the following section.
\end{remark}

\section{Data compression with quantum side information}\label{sec:dc}

\subsection{The protocol}\label{sec:dc-protocol}
In data compression with quantum side information, Alice has a classical register $X$, and Bob holds the quantum system $B$ (the `side information') which is correlated with $X$. 
The goal of the protocol is for Alice to encode her classical register $X$ in a (smaller) classical register $C$ such that Bob can recover $X$ from $C$ and his quantum system $B$.
A data compression protocol $(\rho_{XB},e,\cD)$ can be described by the following steps:

The initial state is a classical-quantum state $\rho_{XB} = \sum_{x\in\cX}p_x|x\rangle\langle x|_X\ox \rho_B^x$
where the classical register $X$ is with Alice and the quantum system $B$ is with Bob. 
To encode her message in the system $C$, Alice uses an arbitrary encoding function $e:\cX\rightarrow \cC$ (which we assume to be surjective, i.e.~$\cC=e(\cX)$). 
The classical register $C$ is then defined as the one associated with the Hilbert space $\cH_C$ with orthonormal basis $\lbrace |c\rangle\rbrace_{c\in\cC}$. 
The encoding function $e$ gives rise to an encoding (quantum) map, which without loss of generality can be taken to be an isometric encoding map  $U_e\colon |x\rangle \mapsto |x\rangle\ox|e(x)\rangle$ where $|e(x)\rangle\in\lbrace |c\rangle\rbrace_{c\in\cC}$ for all $x\in\cX$. 
This results in the state 
\begin{align}\label{eq:dc-omega}
\omega_{XCB} \coloneqq U_e\,\rho_{XB}U_e^\dagger = \sumi_{x\in\cX} p_x |x\rangle\langle x|_X\ox |e(x)\rangle\langle e(x)|_C \ox \rho_B^x.
\end{align} 

Upon receiving the classical message $C$, Bob applies a measurement given by a POVM $\Lambda_c = \lbrace \Lambda_{x'\!,\,c}\rbrace_{x'\in\cX}$ to his state $\rho_B^x$, where the measurement is conditioned on the value of $c$ in the encoded register $C$. 
We label the corresponding random variable by $X'$. 
The final state of the classical registers $X$ and $X'$ is given by
\begin{align}\label{eq:dc-sigma}
\sigma_{XX'} \coloneqq \sumi_{x,x'\in\cX}p_x q_{x'|x} |xx'\rangle\langle xx'|_{XX'}\quad\text{where}\quad q_{x'|x}\coloneqq \tr\left(\Lambda_{x'\!,\,e(x)}\rho_B^x\right).
\end{align}
The POVM constitutes a quantum operation $\cD:CB\rightarrow X'$, that is,
\begin{align*}
\sigma_{XX'} = (\id_X\ox\cD)(\omega_{XCB}).
\end{align*}
Here, the CPTP map $\cD$ is defined as the one that, conditioned on the value $c$, applies the map 
\begin{align}\label{eq:dc-conditional-decoding}
\cD^c(\nu_B) = \sum_{x'\in\cX}\tr(\Lambda_{x'\!,\,c}\nu_B)|x'\rangle\langle x'|_{X'}\quad\text{for }\nu_B\in\cD(\cH_B),
\end{align}
implementing the POVM $\Lambda_c$ to the $B$ system, and then traces out $C$.
Note that $\cD^c$ is a special case of an entanglement-breaking channel \cite{HSR03} with Kraus operators given by 
\begin{align}\label{eq:dc-decoding-kraus-ops}
K_{x'j}^c \coloneqq |x'_{X'}\rangle\langle j_B|\sqrt{\Lambda_{x'\!,\,c}}
\end{align}
for an orthonormal basis $\lbrace |j_B\rangle\rbrace_{j=1}^{|B|}$ of $\cH_B$.

The final state $\sigma_{XX'}$ of the protocol should be close to the classically correlated target state $\vphi_{XX'}$ defined by
\begin{align}\label{eq:dc-phi}
\vphi_{XX'} \coloneqq  \sumi_{x\in\cX} p_x |xx\rangle\langle xx|_{XX'}.
\end{align}
As the figure of merit for a data compression protocol $(\rho_{XB},e,\cD)$ we choose the success probability $\psuc(\rho_{XB},e,\cD)$ of successfully decoding $X$:
\begin{align}\label{eq:dc-success-probability}
\psuc(\rho_{XB},e,\cD) \coloneqq \sumi_{x\in\cX} p_x q_{x|x}.
\end{align}

If Alice and Bob share $n$ copies of the c-q state $\rho_{XB}$, then the figure of merit for a data compression protocol $(\rho_{XB}^{\ox n},e_n,\cD_n)$ is given by
\begin{align}\label{eq:dc-figure-of-merit}
p_n\coloneqq \psuc(\rho_{XB}^{\ox n},e_n,\cD_n) = \sumi_{x^n\in\cX^n} p_{x^n} q_{x^n|x^n},
\end{align}
where $e_n\colon \cX^n\rightarrow \cC^n$ is the encoding function, and $p_{x^n}q_{x'^n|x^n}$ is the probability distribution of the classical state 
\begin{align*}
\sigma_{X^nX'^n}\coloneqq (\id_{X^n}\ox\cD_n)\left(U_{e_n}\rho_{XB}^{\ox n} U_{e_n}^\dagger\right).
\end{align*} 
Here, the CPTP map $\cD_n$ implements the POVM on the system $B^n$. The rate $m(\rho_{XB}^{\ox n},e_n,\cD_n)$ of the data compression protocol $(\rho_{XB}^{\ox n},e_n,\cD_n)$ is defined as
\begin{align}\label{eq:minimum-compression-length}
m(\rho_{XB}^{\ox n},e_n,\cD_n) \coloneqq \frac{1}{n} \log|C^n|.
\end{align}
A real number $m\geq 0$ is said to be an achievable rate for data compression with quantum side information, if there is a sequence $\lbrace (\rho_{XB}^{\ox n},e_n,\cD_n)\rbrace_{n\in\mathbb{N}}$ of protocols satisfying 
\begin{align*}
\limsup_{n\to \infty} m(\rho_{XB}^{\ox n},e_n,\cD_n) &= m, & \liminf_{n\to\infty} p_n &= 1.
\end{align*}

Data compression with quantum side information is the `dual' task  \cite{Ren10} to randomness extraction, with the optimal rate of the former also given by the conditional entropy $S(X|B)_\rho$. 
That is, a real number $m\geq 0$ is an achievable rate for data compression with quantum side information if and only if
\begin{align*}
m\geq  S(X|B)_\rho.
\end{align*}
This was proved by Winter \cite{Win99a} (see also Devetak and Winter \cite{DW03}). 
Renes and Renner \cite{RR12} derived one-shot bounds for data compression with quantum side information in terms of smooth entropies. 
Tomamichel \cite{Tom12} proved a strong converse theorem for data compression based on one-shot bounds in terms of smooth entropies.

\subsection{Strong converse theorem}\label{sec:dc-strong-converse}

\begin{lemma}\label{lem:dc-one-shot-converse}
Let $\rho_{XB}=\sum_{x\in\cX} p_x |x\rangle\langle x|_X\ox \rho_B^x$ be a c-q state with $\rho_B^x\in\cD(\cH_B)$ for all $x\in\cX$, and consider a data compression protocol $(\rho_{XB},e,\cD)$ as defined in \Cref{sec:dc-protocol}. 
We have the following bounds for the success probability $\psuc\equiv \psuc(\rho_{XB},e,\cD)$ defined in \eqref{eq:dc-success-probability}, for $\alpha\in(1/2,1)$ and $\beta=\alpha/(2\alpha-1)$:
\begin{align}
\log \psuc &\leq \frac{1-\alpha}{2\alpha} \left( \log|C| - S_\beta(XB)_\rho + S_\alpha(B)_\rho \right),\label{eq:dc-one-shot}\\
\log \psuc &\leq \frac{1-\alpha}{2\alpha} \left( \log|C| - \tS_\beta(X|B)_\rho\right).\label{eq:dc-one-shot-cond}
\end{align}
\end{lemma}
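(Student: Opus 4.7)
The plan is to bound $\psuc$ by a classical fidelity and then invoke the Rényi fidelity inequalities of \Cref{prop:renyi-fidelity} together with data processing and the dimension bound for Rényi conditional entropies. The crucial first observation is the elementary probability bound $q \leq \sqrt{q}$ for $q\in[0,1]$, which yields
\begin{align*}
\psuc = \sumi_{x\in\cX} p_x q_{x|x} \leq \sumi_{x\in\cX} p_x\sqrt{q_{x|x}} = F\bigl(\sigma_{XX'},\vphi_{XX'}\bigr),
\end{align*}
the last equality because $\sigma_{XX'}$ and $\vphi_{XX'}$ are both diagonal in the $|xx'\rangle$ basis. Hence $\log\psuc \leq \log F$, and it suffices to upper-bound $\log F(\sigma_{XX'},\vphi_{XX'})$.

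For \cref{eq:dc-one-shot-cond}, I would apply the conditional fidelity bound \cref{eq:renyi-fidelity-conditional} of \Cref{prop:renyi-fidelity} \emph{with the two arguments swapped}, placing the ideal target $\vphi_{XX'}$ in the $\rho$-slot and $\sigma_{XX'}$ in the $\sigma$-slot. Since $\vphi_{XX'}$ is a perfectly classically correlated state, $\tS_\alpha(X|X')_\vphi = 0$ for every $\alpha$, and the bound collapses to
\begin{align*}
\log F \leq -\frac{1-\alpha}{2\alpha}\,\tS_\beta(X|X')_\sigma.
\end{align*}
Chaining the data-processing inequality for the Rényi conditional entropy (\Cref{prop:renyi-properties}(\ref{item:dpi})) along the decoding $\cD\colon CB\to X'$ with the c-q dimension bound \cref{eq:classical-dim-bound} of \Cref{prop:renyi-cq-states}, and using the fact that $\omega_{XB}=\rho_{XB}$, I then get
\begin{align*}
\tS_\beta(X|X')_\sigma \geq \tS_\beta(X|CB)_\omega \geq \tS_\beta(X|B)_\rho - \log|C|,
\end{align*}
and substituting back produces $\log\psuc \leq \log F \leq \frac{1-\alpha}{2\alpha}\bigl(\log|C| - \tS_\beta(X|B)_\rho\bigr)$, establishing \cref{eq:dc-one-shot-cond}.

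The bound \cref{eq:dc-one-shot} is derived by the analogous strategy using the unconditional fidelity inequality \cref{eq:renyi-fidelity-entropy} of \Cref{prop:renyi-fidelity} (with arguments again swapped). Since $\vphi_{XX'}$ has eigenvalue spectrum $\{p_x\}_{x\in\cX}$, one has $S_\alpha(XX')_\vphi = S_\alpha(X)_\rho$, and the remaining term $S_\beta(XX')_\sigma$ is converted into the desired combination of $S_\beta(XB)_\rho$ and $S_\alpha(B)_\rho$ (paying the additive $\log|C|$ penalty) by invoking the isometric invariance of the Rényi entropy under the encoding isometry $U_e$ together with the subadditivity property of \Cref{lem:renyi-subadditivity} applied to the c-q marginals of $\omega_{XCB}$. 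The main obstacle is keeping every inequality pointing in the right direction: the naive identification $\psuc = F(\sigma_{XX'},|\varphi\rangle\langle\varphi|)^2$ via the pure purification $|\varphi\rangle = \sum_x\sqrt{p_x}|xx\rangle$ only yields the weaker coefficient $\frac{1-\alpha}{\alpha}$, whereas the classical-target inequality $\psuc\leq F(\sigma_{XX'},\vphi_{XX'})$ together with the swap of arguments in \Cref{prop:renyi-fidelity} is precisely what allows the sharper coefficient $\frac{1-\alpha}{2\alpha}$ to emerge, while keeping data processing (which only gives a lower bound on $\tS_\beta(X|X')_\sigma$) compatible with the required upper bound on $\log\psuc$.
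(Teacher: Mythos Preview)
Your argument for \cref{eq:dc-one-shot-cond} is correct and essentially coincides with the paper's. Your route to $\psuc\leq F(\sigma_{XX'},\vphi_{XX'})$ via the elementary bound $q\leq\sqrt{q}$ is in fact cleaner than the paper's, which establishes the same inequality by building Stinespring dilations and computing an overlap of purifications. After that, both you and the paper apply \cref{eq:renyi-fidelity-conditional} with $\vphi$ in the $\rho$-slot, use $\tS_\alpha(X|X')_\vphi=0$, data-process back through $\cD$, and pay $\log|C|$ to remove the classical register $C$.

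Your sketch for \cref{eq:dc-one-shot}, however, has a genuine gap. Applying \cref{eq:renyi-fidelity-entropy} to $F(\sigma_{XX'},\vphi_{XX'})$ with the arguments swapped yields
\[
\log F \leq \frac{1-\alpha}{2\alpha}\bigl(S_\alpha(XX')_\vphi - S_\beta(XX')_\sigma\bigr) = \frac{1-\alpha}{2\alpha}\bigl(S_\alpha(X)_\rho - S_\beta(XX')_\sigma\bigr),
\]
and the claim is that the right-hand side can be ``converted'' into $\log|C|-S_\beta(XB)_\rho+S_\alpha(B)_\rho$ using isometry invariance under $U_e$ and subadditivity. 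But this cannot work as stated: the leftover term $S_\alpha(X)_\rho$ carries the wrong system (you need $S_\alpha(B)_\rho$, and there is no inequality relating the two); and the only tool that links $\sigma_{XX'}$ back to $\omega_{XCB}$ is the decoding $\cD\colon CB\to X'$, which is \emph{not} an isometry, so isometry invariance of the R\'enyi entropy is unavailable for $S_\beta(XX')_\sigma$. Subadditivity applied to $\omega_{XCB}$ can produce $\log|C|$ factors, but only attached to $\alpha$- or $\beta$-entropies of marginals of $\omega$, never a mix of both parameters arising from a single $\beta$-quantity.

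The paper circumvents this by working with purifications. It purifies $\rho_{XB}$ to $|\psi_{XBRS}\rangle$, pushes through $U_e$ to the pure $|\omega_{XCBRS}\rangle$, and constructs a second pure state $|\bar\sigma_{XX'RSE}\rangle$ corresponding to the \emph{trivial} encoding $e(x)=x$ (so that its $XX'$-marginal is exactly $\vphi_{XX'}$). The overlap $\langle\bar\sigma|\sigma\rangle\geq\psuc$ combined with monotonicity of fidelity under partial trace gives $\psuc\leq F(\bar\sigma_{XRS},\omega_{XRS})$. Now \cref{eq:renyi-fidelity-entropy} is applied on the $XRS$ marginal, and \emph{duality} of the R\'enyi entropy for the pure state $\omega$ flips $S_\alpha(XRS)_\omega=S_\alpha(CB)_\omega$, after which subadditivity gives $\leq\log|C|+S_\alpha(B)_\rho$. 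On the other side, the trivial-encoding construction forces $S_\beta(XRS)_{\bar\sigma}=S_\beta(XB)_\rho$. The purification-plus-duality step is the missing idea; without it there is no mechanism to trade $X$-entropies for $B$-entropies.
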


\begin{proof}
We first prove \eqref{eq:dc-one-shot}. 
Given an arbitrary POVM $\Lambda_c = \lbrace\Lambda_{x'\!,\,c}\rbrace_{x'\in\cX}$ for $c\in\cC$ and the corresponding conditional decoding map $\cD^c$ defined in \eqref{eq:dc-conditional-decoding} with Kraus operators $K^c_{x'\!,\,j}$ given by \eqref{eq:dc-decoding-kraus-ops}, it is straightforward to construct from this a Stinespring isometry $V_c\colon \cH_B\rightarrow \cH_{X'}\ox\cH_E$ for $\cD^c$:
\begin{align}\label{eq:stinespring-isometry}
V_c \coloneqq \sum_{x'\!,\,j} K^c_{x'\!,\,j}\ox |x'j_E\rangle = \sum_{x'\!,\,j} |x'_{X'}\rangle\langle j_B|\Lambda_{x'\!,\,c}^{1/2}\ox |x'j_E\rangle
\end{align}
where $\lbrace |x'j_E\rangle \rbrace_{x'\in\cX,j=1,\dots,|B|}$ is an orthonormal basis for the environment $\cH_E$ with $\dim\cH_E = |\cX||B|$, satisfying $\langle xj|yk\rangle_E = \delta_{xy}\delta_{jk}$. 
The isometry defined in \eqref{eq:stinespring-isometry} satisfies $V_c^\dagger V_c = \one_B$ and $\cD_c(\rho_B) = \tr_E(V_c \rho_B V_c^\dagger)$.

For every $x\in\cX$, let $|\psi^x_{BS}\rangle$ be a purification of $\rho_B^x$. Consider then the following purification of $\rho_{XB}$:
\begin{align*}
|\psi_{XBRS}\rangle \coloneqq \sum_{x\in\cX} \sqrt{p_x} |x_X\rangle |x_R\rangle |\psi^x_{BS}\rangle.
\end{align*}
Then the pure states obtained after encoding with $U_e$ and decoding with $\cD$ in the data compression protocol (cf.~\Cref{sec:dc-protocol}) are given by:
\begin{subequations}
\begin{align}
|\omega_{XCBRS}\rangle &= U_e |\psi_{XBRS}\rangle\nonumber\\
& =  \sum_{x\in\cX} \sqrt{p_x} |x_X\rangle |x_R\rangle |e(x)_C\rangle |\psi^x_{BS}\rangle,\label{eq:dc-omega-pure}\\
|\sigma_{XX'RSE}\rangle &= V |\omega_{XCBRS}\rangle\nonumber\\
& = \sum_{x,x'\in\cX}\sum_{j=1,\dots,|B|} \sqrt{p_x} |x_X\rangle |x_R\rangle |x'_{X'}\rangle \langle j_B|\Lambda_{x'\!,\,\,e(x)}^{1/2} |\psi^x_{BS}\rangle |x'j_E\rangle.\label{eq:dc-sigma-pure}
\end{align}
\end{subequations}
Here, $V$ denotes the Stinespring isometry of the overall decoding map $\cD$ that, conditional on the classical message $c\in\cC$, applies the decoding operation $\cD^c$. 
It can easily be checked that $|\omega_{XCBRS}\rangle$ and $|\sigma_{XX'RSE}\rangle$, as given by \eqref{eq:dc-omega-pure} and \eqref{eq:dc-sigma-pure} respectively, indeed purify $\omega_{XCB}$ and $\sigma_{XX'}$ as given by \eqref{eq:dc-omega} and \eqref{eq:dc-sigma}.

In the next step, we relate the final state $\sigma_{XX'}$ (resp.~its purification $|\sigma_{XX'RSE}\rangle$) of the data compression protocol to the target state $\vphi_{XX'}$ given by \eqref{eq:dc-phi}, thus obtaining a bound on the success probability $\psuc\equiv\psuc(\rho_{XB},e,\cD)$ defined in \eqref{eq:dc-success-probability}. 
To this end, we consider the following data compression protocol that allows perfect recovery of the register $X$, resulting in the target state $\vphi_{XX'}$: 
Take $C\cong X$ and consider the trivial encoding $e(x) = x$ for all $x\in\cX$, together with the trivial POVM $E_c = \lbrace \delta_{x'\!,\,c}\one_B\rbrace_{x'}$ which discards the quantum system $B$ and yields the message $x' = c = x$ with certainty. 
Following the same procedure as above, we can work out the pure states obtained after encoding and decoding using these particular choices:
\begin{subequations}
\begin{align}
|\bomega_{XCBRS}\rangle &= \sum_{x\in\cX} \sqrt{p_x} |x_X\rangle |x_R\rangle |x_C\rangle |\psi^x_{BS}\rangle,\\
|\bsigma_{XX'RSE}\rangle &= \sum_{x\in\cX}\sum_{j=1,\dots,|B|} \sqrt{p_x} |x_X\rangle |x_R\rangle |x_{X'}\rangle \langle j_B|\psi^x_{BS}\rangle |xj_E\rangle.
\end{align}
\end{subequations}
Note that $\bsigma_{XX'RSE}$ indeed purifies the target state $\vphi_{XX'}$ of the data compression protocol. Let us compute the inner product of the pure states $\sigma$ and $\bsigma$:
\begin{align*}
\langle \bsigma_{XX'RSE}|\sigma_{XX'RSE}\rangle &= \sum_{x,y,y'\in\cX}\sum_{j,k} \sqrt{p_x p_y} \langle xx|yy\rangle_{XR} \langle x|y'\rangle_{X'} \langle \psi^x_{BS}|j\rangle\langle k|_B \Lambda^{1/2}_{y'\!,\,e(y)} |\psi^x_{BS}\rangle \langle xj|y'k\rangle_E\\
&= \sum_{x\in\cX}\sum_j p_x \langle \psi^x_{BS}|j\rangle\langle j|_B \Lambda_{x,\,e(x)}^{1/2} | \psi^x_{BS}\rangle \\
&= \sum_{x\in\cX} p_x \langle \psi^x_{BS}| \Lambda_{x,\,e(x)}^{1/2} | \psi^x_{BS}\rangle \\
&\geq \sum_{x\in\cX} p_x \langle \psi^x_{BS}| \Lambda_{x,\,e(x)} | \psi^x_{BS}\rangle\\
&= \sum_{x\in\cX} p_x \tr\left(\Lambda_{x,\,e(x)}\rho_B^x\right)\\
&= \psuc,\numberthis\label{eq:dc-psucc-bound}
\end{align*}
where in the third equality we used the completeness of the basis $\lbrace |j_B\rangle\rbrace_{j=1}^{|B|}$, and in the inequality we used the fact that 
\begin{align*}
\sqrt{\Lambda_{x'\!,\,c}} \geq \Lambda_{x'\!,\,c}\quad \text{for all $x'\in\cX$ and $c\in\cC$},
\end{align*} 
since $0\leq \Lambda_{x'\!,c}\leq \one_B$ for all $x'\in\cX$ and $c\in\cC$. On the other hand, we have
\begin{align*}
\langle \bsigma_{XX'RSE}|\sigma_{XX'RSE}\rangle & = |\langle \bsigma_{XX'RSE}|\sigma_{XX'RSE}\rangle|\\
&= F\left(\bsigma_{XX'RSE}, \sigma_{XX'RSE}\right)\\
&\leq F\left(\bsigma_{XRS},\sigma_{XRS}\right)\\
&= F\left(\bsigma_{XRS},\omega_{XRS}\right),\numberthis\label{eq:dc-fidelity-bound}
\end{align*}
where we used the monotonicity of the fidelity under partial trace in the inequality, and the last line follows from the fact that the decoding $\cD$ does not affect the systems $X$, $R$, and $S$. 
Putting \eqref{eq:dc-psucc-bound} and \eqref{eq:dc-fidelity-bound} together, we obtain the following bound from \cref{eq:renyi-fidelity-entropy} of \Cref{prop:renyi-fidelity} for $\alpha\in(1/2,1)$ and $\beta = \alpha/(2\alpha-1)$:
\begin{align}\label{eq:dc-alpha-psucc-bound}
S_\alpha(XRS)_\omega \geq S_\beta(XRS)_{\bsigma} + \frac{2\alpha}{1-\alpha}\log \psuc.
\end{align}
By the duality property of the Rényi entropies (\Cref{prop:renyi-properties}(\ref{item:RE-duality})) and the subadditivity property (\Cref{lem:renyi-subadditivity}), we have
\begin{align}\label{eq:dc-subadditivity}
S_\alpha(XRS)_\omega = S_\alpha(CB)_\omega \leq \log |C| + S_\alpha(B)_\omega = \log |C| + S_\alpha(B)_\rho,
\end{align}
using the fact that $\omega_B=\rho_B$. 
Furthermore, due to the choice of the trivial encoding and decoding operations defined above and resulting in the state $\bsigma_{XX'} = \vphi_{XX'}$, we have 
\begin{align}\label{eq:ideal-encoder-renyi}
S_\beta(XRS)_{\bsigma} = S_\beta (X'E)_{\bsigma} = S_\beta(CB)_{\bomega} = S_\beta(XB)_\rho,
\end{align}
where we used the invariance of the Rényi entropies under isometries (\Cref{prop:renyi-properties}(\ref{item:RE-isom})) in the second equality, and the fact that $C\cong X$ is just a copy of the initial register $X$ in the last equality. 
Hence, combining \eqref{eq:dc-alpha-psucc-bound}, \eqref{eq:dc-subadditivity}, and \eqref{eq:ideal-encoder-renyi} we obtain
\begin{align*}
\log |C| + S_\alpha(B)_\rho \geq S_\beta(XB)_\rho + \frac{2\alpha}{1-\alpha}\log \psuc,
\end{align*}
which after re-arranging yields \eqref{eq:dc-one-shot}.

To prove \eqref{eq:dc-one-shot-cond}, consider the following chain of inequalities:
\begin{align*}
\frac{2\alpha}{1-\alpha}\log \psuc &\leq \frac{2\alpha}{1-\alpha}\log F(\sigma_{XX'},\varphi_{XX'})\\
&\leq \tS_\alpha(X|X')_\varphi - \tS_\beta(X|X')_\sigma\\
&= -\tS_\beta(X|X')_\sigma\\
&\leq -\tS_\beta(X|CB)_\omega\\
&\leq \log|C| - \tS_\beta(XC|B)_\omega\\
&= \log|C| - \tS_\beta(X|B)_\rho.
\end{align*}
The first inequality follows from \eqref{eq:dc-psucc-bound} and \eqref{eq:dc-fidelity-bound}, the second inequality uses \cref{eq:renyi-fidelity-conditional} of \Cref{prop:renyi-fidelity}, and the first equality uses the fact that $\tS_\alpha(X|X')_\varphi = 0$.
The third inequality is data processing with respect to the decoding map $\cD$, the fourth inequality follows from \cite[Prop.~8]{MDSFT13}, and the last equality uses the invariance of the Rényi conditional entropy with respect to the encoding isometry $U_e$.
\end{proof}

We then have the following strong converse theorem for data compression with quantum side information:

\begin{theorem}\label{thm:dc-strong-converse}
Let $\rho_{XB}$ be a c-q state and let $\lbrace (\rho_{XB}^{\ox n},e_n,\cD_n)\rbrace_{n\in\mathbb{N}}$ be a sequence of data compression protocols as defined in \Cref{sec:dc-protocol}, with figure of merit $p_n\equiv \psuc(\rho_{XB}^{\ox n},e_n,\cD_n)$ as defined in \eqref{eq:dc-figure-of-merit}. 
Then for all $n\in\mathbb{N}$ we have the following bounds on $p_n$ for $\alpha\in(1/2,1)$ and $\beta=\alpha/(2\alpha-1)$:
\begin{align}
p_n &\leq \exp\left\lbrace-n\kappa(\alpha)\left[S_\beta(XB)_\rho - S_\alpha(B)_\rho - m\right]\right\rbrace,\\
p_n &\leq \exp\left\lbrace-n\kappa(\alpha)\left[\tS_\beta(X|B)_\rho - m\right]\right\rbrace,
\end{align}
where $\kappa(\alpha)=(1-\alpha)/(2\alpha)$, and $m\equiv m(\rho_{XB}^{\ox n},e_n,\cD_n)$ denotes the compression rate as defined in \eqref{eq:minimum-compression-length}.
\end{theorem}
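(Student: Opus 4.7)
The theorem is essentially a direct corollary of the one-shot bounds of \Cref{lem:dc-one-shot-converse}, combined with the additivity of the Rényi entropic quantities under tensor products. So the plan consists of only a few routine steps.

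First, I would apply \Cref{lem:dc-one-shot-converse} to the $n$-fold protocol $(\rho_{XB}^{\ox n}, e_n, \cD_n)$, regarding $\rho_{XB}^{\ox n}$ as a c-q state on $X^n B^n$ with encoding function $e_n\colon \cX^n\to\cC^n$ and decoding map $\cD_n\colon C^n B^n \to X'^n$. This yields, for any $\alpha\in(1/2,1)$ and $\beta=\alpha/(2\alpha-1)$,
\begin{align*}
\log p_n &\leq \frac{1-\alpha}{2\alpha}\left(\log|C^n| - S_\beta(X^nB^n)_{\rho^{\ox n}} + S_\alpha(B^n)_{\rho^{\ox n}}\right),\\
\log p_n &\leq \frac{1-\alpha}{2\alpha}\left(\log|C^n| - \tS_\beta(X^n|B^n)_{\rho^{\ox n}}\right).
\end{align*}

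Next I would invoke the additivity property of the Rényi entropy and Rényi conditional entropy under tensor products (\Cref{prop:renyi-properties}(\ref{item:RE-add}), valid for $\alpha,\beta\geq 1/2$) to collapse the $n$-copy quantities:
\begin{align*}
S_\beta(X^nB^n)_{\rho^{\ox n}} &= n\,S_\beta(XB)_\rho, & S_\alpha(B^n)_{\rho^{\ox n}} &= n\,S_\alpha(B)_\rho, & \tS_\beta(X^n|B^n)_{\rho^{\ox n}} &= n\,\tS_\beta(X|B)_\rho.
\end{align*}
Finally, I would use the definition $m \equiv m(\rho_{XB}^{\ox n}, e_n, \cD_n) = \frac{1}{n}\log|C^n|$, so that $\log|C^n| = nm$, and exponentiate both sides with the identification $\kappa(\alpha) = (1-\alpha)/(2\alpha)$. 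This yields the two claimed bounds.

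There is no real obstacle here: all the substantive work was carried out in the proof of \Cref{lem:dc-one-shot-converse}, where the construction of the Stinespring dilation, the fidelity bound $\psuc \leq F(\bar\sigma_{XRS}, \omega_{XRS})$, and the invocation of \Cref{prop:renyi-fidelity} produced the one-shot inequalities. The only ingredient needed for the asymptotic statement is the additivity of $S_\alpha$ and $\tS_\alpha$ on product states, which is already recorded in \Cref{prop:renyi-properties}(\ref{item:RE-add}). As noted in \Cref{rem:bound-comparison}, the two exponents are in general incomparable, so neither bound strictly subsumes the other; both should be stated.
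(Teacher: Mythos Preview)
Your proposal is correct and follows exactly the approach the paper takes: the theorem is stated as an immediate consequence of \Cref{lem:dc-one-shot-converse}, and the passage to $n$ copies requires only additivity of $S_\alpha$ and $\tS_\alpha$ (\Cref{prop:renyi-properties}(\ref{item:RE-add})) together with the definition of the rate $m$. There is nothing to add.
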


\begin{remark}
See \Cref{rem:bound-comparison} for a discussion of whether the two strong converse bounds of \Cref{thm:dc-strong-converse} are comparable.
\end{remark}

\section{Discussion and open questions}\label{sec:discussion}
For any information-theoretic task one can define {\em{achievable}} and {\em{strong converse rates}}. 
An achievable rate is a non-negative real number such that if one codes at a rate above it (for the case in which the optimal rate is a cost) then the error probability of the protocol vanishes asymptotically. 
The strong converse rate, in contrast, is a non-negative real number such that if one codes at a rate below it, then the error probability goes to one in the asymptotic limit.
Consequently, the fidelity between the initial and final target states of the protocol decays to zero asymptotically. 
The optimal rate of the protocol is said to satisfy the {\em{strong converse property}} if the largest strong converse rate coincides with the smallest achievable rate.
In this case the optimal rate provides a sharp rate threshold for the task. 
The exact exponent of the decaying fidelity (or success probability) for a given rate above zero and below the smallest achievable rate is called the {\em{strong converse exponent}}.\footnote{If the optimal rate is a \emph{gain} instead of a cost, one needs to swap `above' and `below' as well as `smallest' and `largest' in the above paragraph.}

In this paper, we used the Rényi entropy method to derive strong converse theorems which establish the strong converse property of the optimal rates for the following protocols: 
state redistribution (without and with feedback) (\Cref{thm:state-re-strong-converse} and \Cref{thm:state-re-fb-strong-converse}), coherent state merging (\Cref{thm:fqsw-strong-converse}), quantum state splitting (\Cref{thm:qss-strong-converse}), measurement compression with quantum side information (\Cref{thm:meas-comp-strong-converse}), randomness extraction against quantum side information (\Cref{thm:randex-strong-converse}), and data compression with quantum side information (\Cref{thm:dc-strong-converse}). 

To this end, we established certain Rényi generalizations of the optimal rates of these protocols as bounds on the strong converse exponents. 
We derived inequalities involving these Rényi generalizations, the most important of which relate Rényi entropic quantities to the fidelity between two quantum states. 
These inequalities played a crucial role in the proofs of our strong converse theorems. 

Determining whether any of these Rényi generalizations are indeed the exact strong converse exponents for the information-theoretic tasks in question is an interesting problem for future research.

\paragraph{Acknowledgements.}
We would like to thank Renato Renner and Dave Touchette for helpful discussions. 
The hospitality of the Banff International Research Station (BIRS) during the workshop `Beyond IID in Information Theory' (5-10 July 2016), where part of the present work was done, is gratefully acknowledged.

\appendix
\section{Remaining proofs of Section \ref{sec:renyi-properties}}\label{sec:renyi-properties-proofs}

\subsection{Lemma \ref{lem:renyi-quantities}}\label{sec:renyi-properties-proofs-sub}

\begin{proof}[Proof of \eqref{eq:dim-bound-mutual} in \Cref{lem:renyi-quantities}]
Consider that
\begin{align*}
\tI_{\alpha}\left(  A;CB\right)  _{\rho} &  =\min_{\tau_{CB}}%
\frac{\alpha}{\alpha-1}\log\left\Vert \left(  \rho_{A}^{\left(
1-\alpha\right)  /2\alpha}\otimes\tau_{CB}^{\left(  1-\alpha\right)  /2\alpha
}\right)  \rho_{ABC}\left(  \rho_{A}^{\left(  1-\alpha\right)  /2\alpha
}\otimes\tau_{CB}^{\left(  1-\alpha\right)  /2\alpha}\right)  \right\Vert
_{\alpha}\\
&  =\min_{\tau_{CB}}\frac{\alpha}{\alpha-1}\log\left\Vert \tau_{CB}^{\left(
1-\alpha\right)  /2\alpha}\widetilde{\rho}_{ABC}\tau_{CB}^{\left(
1-\alpha\right)  /2\alpha}\right\Vert _{\alpha}+\frac{\alpha}{\alpha-1}%
\log\text{Tr}\left\{  \rho_{A}^{\left(  1-\alpha\right)  /\alpha}\rho
_{AB}\right\}  \\
&  =-\tS_{\alpha}(  A|CB)  _{\widetilde{\rho}}%
+\frac{\alpha}{\alpha-1}\log\text{Tr}\left\{  \rho_{A}^{\left(
1-\alpha\right)  /\alpha}\rho_{AB}\right\}  ,
\end{align*}
where we have defined the density operator
\begin{align*}
\widetilde{\rho}_{ABC}\equiv\frac{1}{\text{Tr}\left\{  \rho_{A}^{\left(
1-\alpha\right)  /\alpha}\rho_{ABC}\right\}  }\rho_{A}^{\left(
1-\alpha\right)  /2\alpha}\rho_{ABC}\rho_{A}^{\left(  1-\alpha\right)
/2\alpha}
\end{align*}
and have observed that%
\begin{align*}
\text{Tr}\left\{  \rho_{A}^{\left(  1-\alpha\right)  /\alpha}\rho
_{ABC}\right\}  =\text{Tr}\left\{  \rho_{A}^{\left(  1-\alpha\right)
/\alpha}\rho_{AB}\right\}  .
\end{align*}
Furthermore,
\begin{align*}
\widetilde{\rho}_{AB}=\text{Tr}_{C}\left\{  \widetilde{\rho}%
_{ABC}\right\}  =\frac{1}{\text{Tr}\left\{  \rho_{A}^{\left(  1-\alpha
\right)  /\alpha}\rho_{AB}\right\}  }\rho_{A}^{\left(  1-\alpha\right)
/2\alpha}\rho_{AB}\rho_{A}^{\left(  1-\alpha\right)  /2\alpha}.
\end{align*}
We now apply the bound from \Cref{lem:renyi-quantities}, \cref{eq:dim-bound-conditional},
\begin{align*}
-\tS_{\alpha}(  A|CB)  _{\widetilde{\rho}}%
\leq-\tS_{\alpha}(  A|B)  _{\widetilde{\rho}}%
+2\log\vert C\vert,
\end{align*}
to see that
\begin{align*}
&  -\tS_{\alpha}(  A|CB)  _{\widetilde{\rho}}%
+\frac{\alpha}{\alpha-1}\log\text{Tr}\left\{  \rho_{A}^{\left(
1-\alpha\right)  /\alpha}\rho_{AB}\right\}  \\
& \qquad\qquad \leq-\tS_{\alpha}(  A|B)  _{\widetilde{\rho}}%
+\frac{\alpha}{\alpha-1}\log\text{Tr}\left\{  \rho_{A}^{\left(
1-\alpha\right)  /\alpha}\rho_{AB}\right\}  +2\log\vert C\vert \\
& \qquad\qquad =\tI_{\alpha}(  A;Bt)  _{\rho}+2\log\vert
C\vert, 
\end{align*}
which yields \eqref{eq:dim-bound-mutual}.\end{proof}

\begin{proof}[Proof of \eqref{eq:product-mutual} in \Cref{lem:renyi-quantities}]
From the data processing inequality (\Cref{prop:renyi-properties}(\ref{item:dpi})) we know that
\begin{equation}
\tI_{\alpha}\left(  A;BC\right)  _{\rho\ox\sigma}\geq\tI_{\alpha
}\left(  A;B\right)  _{\rho}.
\end{equation}
On the other hand, consider that%
\begin{align}
\tI_{\alpha}(  A;BC)  _{\rho\ox\sigma}  &  =\min_{\tau_{BC}%
}\tD_{\alpha}(  \rho_{AB}\otimes\sigma_{C}\Vert\rho
_{A}\otimes\tau_{BC}) \\
&  \leq\tD_{\alpha}(  \rho_{AB}\otimes\sigma_{C}\Vert\rho
_{A}\otimes\theta_{B}\otimes\sigma_{C}) \\
&  =\tD_{\alpha}(  \rho_{AB}\Vert\rho_{A}\otimes\theta
_{B})  .
\end{align}
Since the inequality holds for all $\theta_{B}$, we get that%
\begin{equation}
\tI_{\alpha}(  A;BC)  _{\rho\ox\sigma}\leq\tI_{\alpha
}(  A;B)  _{\rho},
\end{equation}
and we are done.
\end{proof}

\subsection{Proposition \ref{prop:renyi-fidelity}}

\begin{proof}[Proof of \eqref{eq:renyi-fidelity-mutual} in \Cref{prop:renyi-fidelity}]
For an arbitrary density operator $\tau_B$ and $\eps\in(0,1)$ define the state $\tau(\eps)_B\coloneqq (1-\eps)\tau_B + \eps\pi_B$. 
We then have the following chain of inequalities:
\begin{align*}
& \tD_\beta(\rho_{AB}\|\rho_A\ox \tau_B) - \tD_\alpha(\sigma_{AB}\|\sigma_A\ox \tau(\eps)_B) - \log (1-\eps)\\
&\quad = \tD_\beta(\rho_{AB}\| \rho_A\ox (1-\eps)\tau_B) - \tD_\alpha(\sigma_{AB}\|\sigma_A\ox \tau(\eps)_B)\\
&\quad \geq \tD_\beta(\rho_{AB}\|\rho_A \ox \tau(\eps)_B) - \tD_\alpha (\sigma_{AB}\|\sigma_A \ox \tau(\eps)_B)\\
&\quad = \frac{2\beta}{\beta-1} \log\left\| \rho_{AB}^{1/2} \rho_A^{(1-\beta)/2\beta} \ox \tau(\eps)_B^{(1-\beta)/2\beta} \right\|_{2\beta} - \frac{2\alpha}{\alpha-1}\log \left\| \sigma_A^{(1-\alpha)/2\alpha} \ox \tau(\eps)_B^{(1-\alpha)/2\alpha} \sigma_{AB}^{1/2}  \right\|_{2\alpha}\\
&\quad = \frac{2\alpha}{1-\alpha}\log\left[ \left\| \rho_{AB}^{1/2} \rho_A^{(1-\beta)/2\beta} \ox \tau(\eps)_B^{(1-\beta)/2\beta} \right\|_{2\beta}  \left\|  \sigma_A^{(1-\alpha)/2\alpha} \ox \tau(\eps)_B^{(1-\alpha)/2\alpha} \sigma_{AB}^{1/2} \right\|_{2\alpha}  \right]\\
&\quad \geq \frac{2\alpha}{1-\alpha} \log \left\| \rho_{AB}^{1/2} \left(\rho_A^{(1-\beta)/2\beta} \ox \tau(\eps)_B^{(1-\beta)/2\beta}\right) \left(\sigma_A^{(1-\alpha)/2\alpha} \ox \tau(\eps)_B^{(1-\alpha)/2\alpha}\right) \sigma_{AB}^{1/2}\right\|_1\\
&\quad = \frac{2\alpha}{1-\alpha} \log \left\| \rho_{AB}^{1/2} \left( \rho_A^{(1-\beta)/2\beta}\sigma_A^{(1-\alpha)/2\alpha} \ox \tau(\eps)_B^{(1-\beta)/2\beta} \tau(\eps)_B^{(1-\alpha)/2\alpha} \right) \sigma_{AB}^{1/2}\right\|_1\\
&\quad = \frac{2\alpha}{1-\alpha} \log \left\|\rho_{AB}^{1/2} \sigma_{AB}^{1/2}\right\|_1\\
&\quad = \frac{2\alpha}{1-\alpha} \log F(\rho_{AB},\sigma_{AB}).
\end{align*}
In the first equality and inequality we used \eqref{eq:c-relation} and \eqref{eq:domination-relation}. 
The following equalities follow from the definition of the sandwiched Rényi divergence (see \Cref{def:renyi-entropies}) and \eqref{eq:beta-alpha-relation}. 
In the second inequality we apply H\"older's inequality \eqref{eq:hoelder-inequality}. 
For the second-to-last equality we use the fact that $\rho_A=\sigma_A$ by assumption, and that $\tau(\eps)_B$ has full support for $\eps\in(0,1)$, such that $\tau(\eps)_B^{-1}\tau(\eps)_B = \one_B$.

We therefore have
\begin{align*}
\frac{2\alpha}{1-\alpha}\log F(\rho_{AB},\sigma_{AB}) &\leq \tD_\beta(\rho_{AB}\|\rho_A\ox \tau_B) - \tD_\alpha(\sigma_{AB}\|\sigma_A\ox \tau(\eps)_B) - \log (1-\eps)\\
&\leq \tD_\beta(\rho_{AB}\|\rho_A\ox \tau_B) - \min_{\omega_B\in\cD(\cH_B)} \tD_\alpha(\sigma_{AB}\|\sigma_A\ox \omega_B) - \log (1-\eps)\\
&= \tI_\beta(A;B)_\rho - \tI_\alpha(A;B)_\sigma - \log(1-\eps),
\end{align*}
where we chose $\tau_B$ as the optimizing state for $\tI_\beta(A;B)_\rho$ in the last step. 
Since this relation holds for all $\eps\in(0,1)$, we obtain the claim by taking the limit $\eps \searrow 0$.
\end{proof}

\begin{proof}[Proof of \eqref{eq:renyi-fidelity-cmi} in \Cref{prop:renyi-fidelity}]
We can rewrite definition \eqref{eq:renyi-cmi} in \Cref{sec:renyi-properties} as
\begin{align*}
\tI_{\beta}(  A;B|C)  _{\rho}=-\frac{2\alpha}{\alpha
-1}\log\left\Vert \rho_{BC}^{\left(  1-\beta\right)  /2\beta}\rho_{C}^{\left(
\beta-1\right)  /2\beta}\rho_{AC}^{\left(  1-\beta\right)  /2\beta}%
\rho_{ABC}^{1/2}\right\Vert _{2\beta}.
\end{align*}
Then consider
\begin{align*}
& \frac{1-\alpha}{2\alpha}\left[ \tI_{\beta}\left(  A;B|C\right)  _{\rho}-\tI_{\alpha
}(  A;B|C)  _{\sigma}\right]\\
&\qquad  =\log\left[  \left\Vert \sigma_{ABC}^{1/2}\rho
_{AC}^{\left(  1-\alpha\right)  /2\alpha}\rho_{C}^{\left(  \alpha-1\right)
/2\alpha}\rho_{BC}^{\left(  1-\alpha\right)  /2\alpha}\right\Vert _{2\alpha
}\left\Vert \rho_{BC}^{\left(  1-\beta\right)  /2\beta}\rho_{C}^{\left(
\beta-1\right)  /2\beta}\rho_{AC}^{\left(  1-\beta\right)  /2\beta}%
\rho_{ABC}^{1/2}\right\Vert _{2\beta}\right] \\
&\qquad  \geq\log\left[  \left\Vert \sigma_{ABC}^{1/2}%
\rho_{AC}^{\left(  1-\alpha\right)  /2\alpha}\rho_{C}^{\left(  \alpha
-1\right)  /2\alpha}\rho_{BC}^{\left(  1-\alpha\right)  /2\alpha}\rho
_{BC}^{\left(  1-\beta\right)  /2\beta}\rho_{C}^{\left(  \beta-1\right)
/2\beta}\rho_{AC}^{\left(  1-\beta\right)  /2\beta}\rho_{ABC}^{1/2}%
\right\Vert _{1}\right] \\
&\qquad  =\log\left\Vert \sigma_{ABC}^{1/2}\rho_{ABC}%
^{1/2}\right\Vert _{1}\\
&\qquad  =\log F(  \rho_{ABC},\sigma_{ABC}),
\end{align*}
which yields the claim.
\end{proof}

\section{Proofs of Section \ref{sec:feedback}: State redistribution with feedback}\label{sec:feedback-proofs}
The following lemma is used to prove the strong converse for state redistribution with feedback, \Cref{thm:state-re-fb-strong-converse}. 
Note that the proof closely follows that of the corresponding result in \cite{BCT14v2}.
\begin{lemma}\label{lem:state-re-fb-sc}
Consider the fidelity
\begin{align}
F\coloneqq F\left(\psi_{A'B'C'R}\ox \Phi^m_{T_A'T_B'}, (\cD_M\circ\cE_M\circ\dots\circ\cD_1\circ\cE_1) \left(\psi_{ABCR}\ox\Phi_{T_AT_B}^k\right)\right),\label{eq:state-re-fb-fidelity}
\end{align} 
where the encoding and decoding maps $\cE_i$ and $\cD_i$ for $i=1,\dots,M$ are given as in \Cref{sec:feedback}. 
For $\alpha\in (1/2,1)$ and $\beta=\alpha/(2\alpha-1)$, we have the following bounds on $F$:
\begin{align}
\log F &\leq \frac{1-\alpha}{2\alpha} \left(\log|T_A| - \log |T_A'| + \sumi_{i=1}^M \log|Q_i| + \sumi_{i=1}^{M-1} \log |Q_i'| - S_\beta(AB)_\psi + S_\alpha(B)_\psi\right),\label{eq:state-re-fb-q+e}\\
\log F &\leq \frac{1-\alpha}{2\alpha} \left(2 \sumi_{i=1}^M \log|Q_i| - \tS_\beta(R|B)_\psi + \tS_\alpha(R|AB)_\psi\right),\label{eq:state-re-fb-q}\\
\log F &\leq \frac{1-\alpha}{2\alpha} \left(2 \sumi_{i=1}^M \log|Q_i| - \tI_\alpha(R;AB)_\psi + \tI_\beta(R;B)_\psi\right).\label{eq:state-re-fb-q-alt}
\end{align}
\end{lemma}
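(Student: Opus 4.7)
The plan is to establish all three bounds by iterating the one-shot arguments of \Cref{lem:state-re-converse} over the $M$ communication rounds. The common template is: (i) apply the appropriate fidelity bound from \Cref{prop:renyi-fidelity} to express $\tfrac{2\alpha}{1-\alpha}\log F$ as a difference of R\'enyi quantities between the initial pure state $\psi_{ABCR}$ and the final state (augmented with Uhlmann environments where needed); and (ii) relate the final R\'enyi quantity back to the initial one via a telescoping chain that tracks how Bob's quantum system evolves across rounds, accumulating one correction term per transmission of $Q_i$ or $Q_i'$. Throughout I purify each decoding $\cD_i$ by its Stinespring isometry $V_{\cD_i}$ and write $\cB_i$ for Bob's total quantum register (physical outputs together with all Stinespring environments generated so far) immediately after round $i$; by construction $\cB_0 = BT_B$, while $\cB_M$ includes $A'B'T_B'$ together with the environments $E_{\cD_1},\ldots,E_{\cD_M}$.

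For \eqref{eq:state-re-fb-q+e} I proceed exactly as in the proof of \eqref{eq:state-re-q+e}: Uhlmann's theorem plus monotonicity of fidelity under partial trace yield a pure state $\phi$ on the combined Bob-side environments such that $F$ is at most the fidelity of $\sigma_{\cB_M}$ to $\pi^m_{T_B'}\ox\rho_{A'B'}\ox\phi$, and \cref{eq:renyi-fidelity-entropy} of \Cref{prop:renyi-fidelity} together with additivity and positivity of $S_\beta$ then gives $S_\alpha(\cB_M)_\sigma \ge \log|T_A'| + S_\beta(AB)_\psi + \tfrac{2\alpha}{1-\alpha}\log F$. To upper-bound $S_\alpha(\cB_M)_\sigma$ I track its evolution round by round: Bob's reception of $Q_i$ augments his register and increases $S_\alpha$ by at most $\log|Q_i|$ via subadditivity (\Cref{lem:renyi-subadditivity}); his application of $V_{\cD_i}$ is a local isometry and hence leaves $S_\alpha$ invariant (\Cref{prop:renyi-properties}(\ref{item:RE-isom})); and his return of $Q_i'$ to Alice can increase $S_\alpha$ by at most $\log|Q_i'|$ by the other side of \Cref{lem:renyi-subadditivity}. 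Combined with the starting estimate $S_\alpha(\cB_0)_{\psi\ox\Phi^k} \le S_\alpha(B)_\psi + \log|T_A|$, the telescoping yields the desired upper bound and therefore \eqref{eq:state-re-fb-q+e}.

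For \eqref{eq:state-re-fb-q} and \eqref{eq:state-re-fb-q-alt} the template uses \cref{eq:renyi-fidelity-conditional} and \eqref{eq:renyi-fidelity-mutual} applied to $\psi_{RAB}$ and $\sigma_{RA'B'}$, which is legitimate because the reference $R$ is untouched and hence $\psi_R = \sigma_R$. I then track $\tS_\beta(R|\cB_i)$, respectively $\tI_\beta(R;\cB_i)$, through each round using three local moves: Bob's reception of $Q_i$ augments his conditioning register by a subsystem of dimension $|Q_i|$, which by the dimension bounds \eqref{eq:dim-bound-conditional} and \eqref{eq:dim-bound-mutual} alters $\tS_\beta$ and $\tI_\beta$ by at most $2\log|Q_i|$ in the unfavorable direction; application of $V_{\cD_i}$ is a local isometry and hence preserves both quantities; and Bob's transmission of $Q_i'$ back to Alice acts as a partial trace on his side, which by the data processing inequality (\Cref{prop:renyi-properties}(\ref{item:dpi})) can only increase $\tS_\beta(R|\cdot)$ and decrease $\tI_\beta(R;\cdot)$, i.e.~strengthens the inequality. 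Consequently backward transmissions contribute nothing and only $2\sum_i \log|Q_i|$ appears. Finally, \cref{eq:product-conditional} and \eqref{eq:product-mutual} of \Cref{lem:renyi-quantities} let me strip off the initial $T_B$ factor, since $\Phi^k_{T_AT_B}$ is independent of $\psi_{RAB}$ on Bob's side, recovering $\tS_\beta(R|B)_\psi$ and $\tI_\beta(R;B)_\psi$ as initial values, and passing from $\cB_M$ to $A'B'$ at the very end is a further application of data processing.

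The main obstacle is the bookkeeping: one must carefully specify Bob's total register (including all Stinespring environments) at each intermediate stage, verify the direction in which each local move changes the relevant R\'enyi quantity, and check that the final partial trace of the environments (and of $T_B'$) is compatible with the inequalities already used. Once this per-round telescoping is established, combining it with the fidelity bound of \Cref{prop:renyi-fidelity} and rearranging is immediate.
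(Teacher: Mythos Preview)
Your proposal is correct, and for \eqref{eq:state-re-fb-q} and \eqref{eq:state-re-fb-q-alt} it is essentially identical to the paper's argument (the paper applies data processing directly to the CPTP decoding maps rather than first purifying them, but this is cosmetic). For \eqref{eq:state-re-fb-q+e} your route differs from the paper's: you stay on Bob's side throughout and use \emph{both} directions of the subadditivity bound in \Cref{lem:renyi-subadditivity}---the upper bound when Bob receives $Q_i$, and the lower bound (rearranged as $S_\alpha(\text{rest})\leq S_\alpha(\text{rest},Q_i')+\log|Q_i'|$) when Bob relinquishes $Q_i'$. The paper instead purifies both encodings and decodings and uses the duality $S_\alpha(\text{Bob side}) = S_\alpha(\text{Alice side})$ of \Cref{prop:renyi-properties}(\ref{item:RE-duality}) to flip between the two parties at each backward step, thereby always invoking only the upper subadditivity bound. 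Your approach is arguably more economical, since it avoids purifying the encodings and the repeated duality flips.

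One minor correction: the state $\phi$ you obtain from the Uhlmann step is not, in general, pure. To apply Uhlmann's theorem you must first purify the encodings as well (so that the global output state is pure); this yields a pure $\chi$ on \emph{all} environments, and tracing down to the Bob-side environments gives a marginal $\phi=\chi_{E_\cD}$ that is typically mixed. This does not affect your bound, since you only use $S_\beta(\phi)\geq 0$, which holds for any state.
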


\begin{proof}
We first prove \eqref{eq:state-re-fb-q}. For $\alpha\in(1/2,1)$ and $\beta=\alpha/(2\alpha-1)$, we can bound the fidelity $F$ (defined in \eqref{eq:state-re-fb-fidelity}) from above by
\begin{align}
\frac{2\alpha}{1-\alpha}\log F &\leq \frac{2\alpha}{1-\alpha} \log F\left(\psi_{A'B'R}\ox \pi_{T_B'}^m, \sigma^M_{A'B'RT_B'}\right)\notag\\
&\leq \tS_\alpha(R|A'B'T_B')_{\psi\ox\pi^m} - \tS_\beta(R|A'B'T_B')_{\sigma^M}\notag\\
&= \tS_\alpha(R|AB)_{\psi} - \tS_\beta(R|A'B'T_B')_{\sigma^M},\label{eq:fb-fidelity-bound}
\end{align}
where we used the monotonicity of the fidelity under partial trace in the first inequality, \cref{eq:renyi-fidelity-conditional} of \Cref{prop:renyi-fidelity} in the second inequality, and \cref{eq:product-conditional} of \Cref{lem:renyi-quantities} together with the fact that $\psi_{A'B'R}=\psi_{ABR}$ in the equality. 
Consider then the following chain of inequalities for the second term on the right-hand side of \eqref{eq:fb-fidelity-bound} (see \Cref{fig:state-re-fb-protocol}):
\begin{align*}
-\tS_\beta(R|A'B'T_B')_{\sigma^M} &\leq - \tS_\beta(R|Q_M B_{M-1})_{\omega^M}\\
&\leq - \tS_\beta(R|B_{M-1} )_{\omega^M} + 2\log|Q_M|\\
&= - \tS_\beta(R| B_{M-1})_{\sigma^{M-1}} + 2\log|Q_M|\\
&\leq - \tS_{\beta}(R|Q_{M-1}'B_{M-1})_{\sigma^{M-1}} + 2\log |Q_M|\\
&\leq - \tS_\beta (R|Q_{M-1} B_{M-2})_{\omega^{M-1}} + 2\log |Q_M|\\
&\;\;\vdots\\
&\leq -\tS_\beta(R|BT_B)_{\omega^1} + 2\sum_{i=1}^M \log |Q_i|\\
&= -\tS_\beta(R|BT_B)_{\psi\ox\pi^k} + 2\sum_{i=1}^M \log |Q_i|\\
&= -\tS_\beta(R|B)_\psi + 2\sum_{i=1}^M \log |Q_i|.\numberthis\label{eq:fb-chain-ineq}
\end{align*}
In the first inequality we used data processing with respect to the decoding map $\cD_M$ (\Cref{prop:renyi-properties}(\ref{item:dpi})). 
The second inequality follows from the dimension bound for the Rényi conditional entropy (\cref{eq:dim-bound-conditional} of \Cref{lem:renyi-quantities}). 
In the first equality we used the fact that the system $B_{M-1}$ is not affected by the encoding $\cE_M$. 
The third inequality is data processing for the Rényi conditional entropy with respect to the partial trace over $Q_{M-1}'$. 
We then iteratively apply these steps until we reach the last inequality. 
The subsequent equality follows from the fact that the encoding $\cE_1$ does not act on the systems $B$ and $T_B$. 
In the last step we used \cref{eq:product-conditional} of \Cref{lem:renyi-quantities}. 
Combining \eqref{eq:fb-fidelity-bound} and \eqref{eq:fb-chain-ineq} now yields \eqref{eq:state-re-fb-q}.
The proof of the bound in \eqref{eq:state-re-fb-q-alt} follows in a similar manner, and we therefore omit it.

To prove \eqref{eq:state-re-fb-q+e}, we consider Stinespring isometries $U_{\cE_i}$ and $U_{\cD_i}$ of the encoding and decoding maps $\cE_i$ and $\cD_i$ with environments $E_i$ and $D_i$, respectively. 
Moreover, in the following calculations we denote by $\omega^i$ and $\sigma^i$ the \emph{pure states} obtained from applying the isometries $U_{\cE_i}$ and $U_{\cD_i}$ to the initial state $\psi\ox\Phi^k$, respectively. 
The final state of the protocol is then the pure state 
\begin{align*} 
|\sigma^M_{A'B'C'RT_A'T_B'E_1\dots E_MD_1\dots D_M}\rangle = (U_{\cD_M}U_{\cE_M}\dots U_{\cD_1}U_{\cE_1}\ox \one_R)(|\psi_{ABCR}\rangle\ox|\Phi^k_{T_AT_B}\rangle).
\end{align*} 
By Uhlmann's theorem there exists a pure state $\chi_{E_1\dots E_M D_1\dots D_M}$ such that the following holds for $\alpha\in(1/2,1)$ and $\beta=\alpha/(2\alpha-1)$: 
\begin{align}
\frac{2\alpha}{1-\alpha}\log F &= \frac{2\alpha}{1-\alpha}\log F\left(\sigma^M_{A'B'C'RT_A'T_B'E_1\dots E_M D_1\dots D_M}, \psi_{A'B'C'R}\ox\Phi^m_{T_A'T_B'}\ox\chi_{E_1\dots E_M D_1\dots D_M}\right)\notag\\
&\leq \frac{2\alpha}{1-\alpha}\log F\left(\sigma^M_{A'B'T_B'D_1\dots D_M}, \psi_{A'B'}\ox\pi^m_{T_B'}\ox\chi_{D_1\dots D_M}\right)\notag\\
&\leq S_\alpha(A'B'T_B'D_1\dots D_M)_{\sigma^M} - S_\beta(A'B'T_B'D_1\dots D_M)_{\psi\ox\pi^m\ox\chi}\notag\\
&\leq S_\alpha(A'B'T_B'D_1\dots D_M)_{\sigma^M} - S_\beta(AB)_\psi - \log|T_B'|, \label{eq:fb-fidelity-bound-ent}
\end{align}
where the first inequality follows from the monotonicity of the fidelity under partial trace, the second inequality follows from \cref{eq:renyi-fidelity-entropy} of \Cref{prop:renyi-fidelity}, and the third inequality follows from \Cref{prop:renyi-properties}(\ref{item:RE-dim-bound}) and (\ref{item:RE-add}).
For the first term of the right-hand side of \eqref{eq:fb-fidelity-bound-ent}, consider the following steps:
\begin{align*}
S_\alpha(A'B'T_B'D_1\dots D_M)_{\sigma^M} &= S_\alpha(Q_M B_{M-1}D_1\dots D_{M-1})_{\omega^M}\\
&\leq S_\alpha(B_{M-1}D_1\dots D_{M-1})_{\omega^M} + \log|Q_M|\\
&= S_\alpha(RQ_MC'T_A'E_1\dots E_M)_{\omega^M} + \log|Q_M|\\
&= S_\alpha(R Q_{M-1}' A_{M-1} E_1\dots E_{M-1})_{\sigma^{M-1}} + \log|Q_M|\\
&\leq S_\alpha(R A_{M-1} E_1\dots E_{M-1})_{\sigma^{M-1}} + \log|Q_M| + \log|Q_{M-1}'|\\
&= S_\alpha(Q_{M-1}'B_{M-1} D_1\dots D_{M-1})_{\sigma^{M-1}} + \log|Q_M| + \log|Q_{M-1}'|\\
&= S_\alpha(Q_{M-1}B_{M-2} D_1\dots D_{M-2})_{\omega^{M-1}} + \log|Q_M| + \log|Q_{M-1}'|\\
&\;\;\vdots\\
&\leq S_\alpha(BT_B)_{\omega^1}+ \sum_{i=1}^M\log|Q_i| + \sum_{i=1}^{M-1}\log|Q_i'|\\
&= S_\alpha(BT_B)_{\psi\ox\pi^k} + \sum_{i=1}^M\log|Q_i| + \sum_{i=1}^{M-1}\log|Q_i'|\\
&= S_\alpha(B)_\psi + \log|T_B| + \sum_{i=1}^M\log|Q_i| + \sum_{i=1}^{M-1}\log|Q_i'|.\numberthis\label{eq:fb-ent-chain-ineq}
\end{align*}
In the first equality we used invariance of the Rényi entropy under the isometry $U_{\cD_M}$ (\Cref{prop:renyi-properties}(\ref{item:RE-isom})).
In the first inequality we used subadditivity (\Cref{lem:renyi-subadditivity}), and in the second equality we used the duality of the Rényi entropy since $|\omega^{M}\rangle$ is a pure state. 
The third equality follows from the invariance of the Rényi entropy under $U_{\cE_M}$. 
We then follow the same steps iteratively, passing from $\omega^M$ to $\sigma^{M-1}$ and $\omega^{M-1}$ and so on, until we reach $\omega^1_{BT_B}=\psi_B\ox\pi_{T_B}^k$. 
Substituting \eqref{eq:fb-ent-chain-ineq} in \eqref{eq:fb-fidelity-bound-ent} then yields \eqref{eq:state-re-fb-q+e}, and we are done.
\end{proof}

\printbibliography[title={References},heading=bibintoc]

\end{document}